\newcommand{\T}{\mathcal{T}}
\newtheorem{axiom}{Axiom}[section]
\newtheorem{definition}[axiom]{Definition}
\newtheorem{thm}{Theorem}[section]
\newtheorem{lem}[thm]{Lemma}
\newtheorem{rem}[thm]{Remark}
\newtheorem{prop}[thm]{Property}
\newtheorem{cor}[thm]{Corollary}
\newcommand{\stkout}[1]{\ifmmode\text{\sout{\ensuremath{#1}}}\else\sout{#1}\fi}
\renewcommand*\env@matrix[1][*\c@MaxMatrixCols c]{%
  \hskip -\arraycolsep
  \let\@ifnextchar\new@ifnextchar
  \array{#1}}
\newcommand{\boxxx}[1]
 {\begin{center}\fbox{\begin{minipage}{12.50cm}#1\smallskip\end{minipage}}\end{center}}
\newcommand{\MCN}{MCN}
\newcommand{\cand}{candidate}
\newcommand{\eg}{\emph{e.g.}}
\newcommand{\ie}{\emph{i.e.}}
\newcommand{\etal}{\emph{et al.}}
\definecolor{brass}{rgb}{0.71, 0.65, 0.26}
\title{Complexity of the Multilevel Critical Node Problem}
\author{
  Adel Nabli \qquad  Margarida Carvalho
  \\
  CIRRELT and D\'epartement d'Informatique et de Recherche Op\'erationnelle\\
   Universit\'e de Montr\'eal\\
  \texttt{adel.nabli@umontreal.ca}\\
  \texttt{carvalho@iro.umontreal.ca} \\
  \And
Pierre Hosteins \\
 COSYS-ESTAS\\
 Universit\'e Gustave Eiffel\\
 \texttt{pierre.hosteins@ifsttar.fr} 
}
\begin{document}

\maketitle

\begin{abstract}
   In this work, we analyze a sequential game played in a graph called the Multilevel Critical Node problem (\MCN). A defender and an attacker are the players of this game. The defender starts by preventively interdicting vertices (vaccination) from being attacked. Then, the attacker infects a subset of non-vaccinated vertices and, finally, the defender reacts with a protection strategy. We provide the first computational complexity results associated with {\MCN} and its subgames. Moreover, by considering unitary, weighted, undirected, and directed graphs, we clarify how the theoretical tractability of those problems vary. Our findings contribute with new NP-complete, $\Sigma_2^p$-complete and $\Sigma_3^p$-complete problems. Furthermore, for the last level of the game, the protection stage, we build polynomial time algorithms for certain graph classes.
\end{abstract}


\section{Introduction}

\paragraph{Multilevel Critical Node}
 Graphs are powerful mathematical structures that enable us to model real-world networks. The problem of breaking the connectivity of a graph has been extensively studied in combinatorial optimization since it can serve to measure the robustness of a network to disruptions.  In this work, we will focus on the Multilevel Critical Node problem (\MCN)~\cite{Baggio2020}. Let $G=(V,A)$ be graph with a set $V$ of vertices and a set $A$ of arcs. 
 In {\MCN} there are two players, designated by defender and attacker, whose individual strategies are given by a selection of subsets of $V$. The game goes as follows: first, the defender selects a subset of vertices $D \subseteq V$ to \emph{vaccinate} subject to a budget limit $\Omega$ and a cost$\{\hat{c}_v\}_{v \in V}$; second, the attacker observes the vaccination strategy, and selects a subset of vertices $I \subseteq V\setminus D$ to (directly) \emph{infect} subject to a budget limit $\Phi$ and a cost$\{h_v\}_{v \in V}$; and third, the defender observes the infection strategy, and selects a subset of vertices $P \subseteq V\setminus I$ to \emph{protect} subject to a budget limit $\Lambda$ and a cost$\{c_v\}_{v \in V}$. An infected vertex $v$ propagates the infection to a vertex $u$, if $(v,u) \in A$ and $u$ is neither a vaccinated or a protected vertex. The goal of the defender is to maximize the benefit $b_v$ of saved vertices (\ie, not infected), while the attacker aims to minimize it. We assume that all parameters of the problem are non-negative integers. The game description can be succinctly given by the following mixed integer trilevel program:
\begin{subequations}%
\begin{alignat}{4}
(MCN) \ \ 
\max_{ \begin{subarray}{c}\\
 z \in \{0,1\}^{|V|} \\[1pt]
\displaystyle \sum_{v\in V} \hat{c}_v z_{v} \leq \Omega
 \end{subarray}} \quad
 & 
 \min_{  \begin{subarray}{c}\\ 
 y \in \{0,1\}^{|V|}\\[1pt] 
\displaystyle \sum_{v\in V} h_v y_{v} \leq \Phi
 \end{subarray}} \quad
 &
 \max_{ 
 \begin{subarray}{c}\\ 
  x \in \{0,1\}^{|V|} \\[1pt] 
 \alpha \in \left[0,1\right]^{|V|}
 \end{subarray}
 } \ \ &\sum_{v\in V} b_v \alpha_{v} \nonumber\\
&  &  s.t.&   \sum_{v\in V} c_v x_{v}  \leq   \Lambda  \nonumber& \\
& &  &\alpha_{v}  \leq 1 + z_{v} - y_{v} & \qquad \forall v\in V  \label{const:inf_vac} \\
& &  &\alpha_{v}  \leq  \alpha_{u} + x_{v} + z_{v} & \ \ \forall\ (u,v)\in A, \label{const:proc_vac}
\end{alignat}
\label{MCN_formulation}%
\end{subequations}
where $z$, $y$, $x$ and $\alpha$ are decision vectors which coordinates are $z_v$, $y_v$, $x_v$ and $\alpha_v$ for each $v \in V$. In this optimization model, $z$, $y$ and $x$ reflect the set of vaccinated vertices $D=\lbrace v\in V: z_v=1 \rbrace$, directly infected vertices $I=\lbrace v\in V: y_v=1 \rbrace$ and protected vertices $P=\lbrace v\in V: x_v=1 \rbrace$, respectively. Finally, $\alpha$ mimics the propagation of the infection among the vertices in $V$, through Constraints~\eqref{const:inf_vac} and~\eqref{const:proc_vac}, and it is necessarily binary due to the maximization in the last level (protection). Concretely, $\alpha_v=1$ means that vertex $v$ is saved and $\alpha_v=0$ means that vertex $v$ is infected. In multilevel optimization, the first stage (in {\MCN}, the vaccination stage) is called the upper level or first level, the second stage is called the second level, and so on, with the last stage being also designated by lower level. See~\cite{Baggio2020} for further details on this mathematical programming formulation and Figure~\ref{fig:example} for an illustration of the game. 

\begin{figure}\centering
\resizebox{\textwidth}{!}{
\begin{tabular}{|c|c|c|c|}\hline
&&&\\
Instance & Vaccinate 3 & Infect 2 & Protect 1\\
\begin{tikzpicture}[->,>=stealth',shorten >=0.2pt,auto,node distance=1cm,baseline={(current bounding box.north)},  scale=0.8, transform shape,
  main node/.style={circle,fill=black!15,draw,minimum size=0.7cm},spe node/.style={fill=white,draw=white}]

  \node[main node] (1) {1};
  \node[main node] (3) [below right=of 1] {3};
  \node[main node] (2) [above right =of 3] {2};
  \node[main node] (4) [below left=of 3] {4};
  \node[main node] (5) [below right=of 3] {5};
  \node[main node] (6)  [right= of 3] {6};

  \path[every node/.style={font=\sffamily\small}]
    (1) edge node {}  (4)
    (2) edge node {}  (1)
        edge node {}  (6)
    (3) edge node {}  (1)
        edge node {}  (2)
        edge node {}  (5)
    (4) edge node {}  (3)
    (5) edge node {}  (3);
\end{tikzpicture} &
\begin{tikzpicture}[->,>=stealth',shorten >=0.2pt,auto,node distance=1cm,baseline={(current bounding box.north)},  scale=0.8, transform shape,
  main node/.style={circle,fill=black!15,draw,minimum size=0.7cm},spe node/.style={fill=white,draw=white}]

  \node[main node] (1) {1};
  \node[main node] (3) [below right=of 1] {3};
  \node[main node] (2) [above right =of 3] {2};
  \node[main node] (4) [below left=of 3] {4};
  \node[main node] (5) [below right=of 3] {5};
  \node[main node] (6)  [right= of 3] {6};

  \path[every node/.style={font=\sffamily\small}]
    (1) edge node {}  (4)
    (2) edge node {}  (1)
        edge node {}  (6);
\end{tikzpicture}
&
\begin{tikzpicture}[->,>=stealth',shorten >=0.2pt,auto,node distance=1cm,baseline={(current bounding box.north)},  scale=0.8, transform shape,
  main node/.style={circle,fill=black!15,draw,minimum size=0.7cm},spe node/.style={circle,fill=black,draw,minimum size=0.7cm}]

  \node[main node] (1) {1};
  \node[main node] (3) [below right=of 1] {3};
  \node[spe node] (2) [above right =of 3] {\textcolor{white}{2}};
  \node[main node] (4) [below left=of 3] {4};
  \node[main node] (5) [below right=of 3] {5};
  \node[main node] (6)  [right= of 3] {6};

  \path[every node/.style={font=\sffamily\small}]
    (1) edge node {}  (4)
    (2) edge node {}  (1)
        edge node {}  (6);
\end{tikzpicture}
&
\begin{tikzpicture}[->,>=stealth',shorten >=0.2pt,auto,node distance=1cm,baseline={(current bounding box.north)},  scale=0.8, transform shape,
  main node/.style={circle,fill=black!15,draw,minimum size=0.7cm},spe node/.style={circle,fill=black,draw,minimum size=0.7cm}]

  \node[main node] (1) {1};
  \node[main node] (3) [below right=of 1] {3};
  \node[spe node] (2) [above right =of 3] {\textcolor{white}{2}};
  \node[main node] (4) [below left=of 3] {4};
  \node[main node] (5) [below right=of 3] {5};
  \node[spe node] (6)  [right= of 3] {\textcolor{white}{6}};

  \path[every node/.style={font=\sffamily\small}]
    (2) edge node {}  (6);
\end{tikzpicture}\\ &&& \\ \hline
\end{tabular}
}
\caption{Example of an {\MCN} game with unitary costs and benefits, and budgets $\Omega=\Phi=\Lambda=1$. We removed the vaccinated and protected vertices as an infection cannot pass through them (see Property~\ref{prop_node_delet}). Vertices $\{1,3,4,5\}$ are saved and $\{2,6\}$ are infected.}
\label{fig:example}
\end{figure}
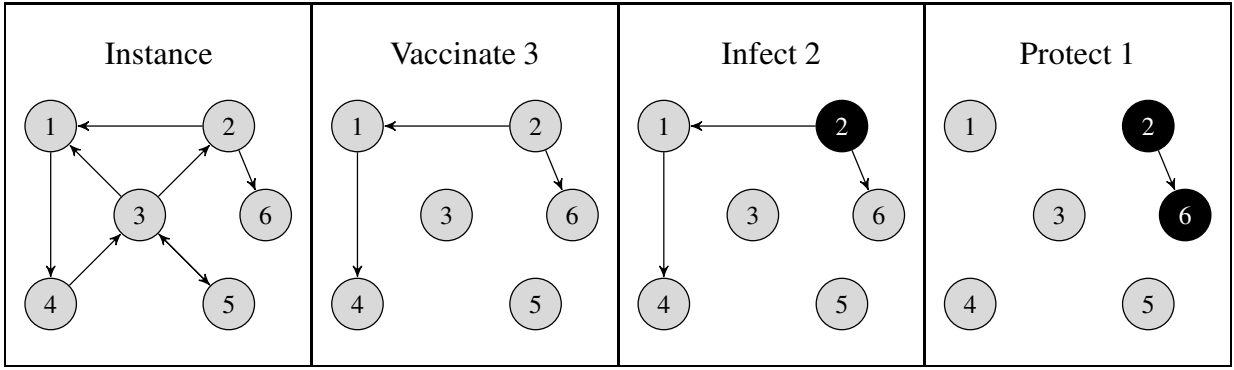

\paragraph{Contributions} To the best of our knowledge, this work is the first providing a computational complexity classification of the decision version of {\MCN}, as well as, of its subgames. Namely, we investigate the subgames \emph{(i)} {\sc Protect}, where given $D$ and $I$, the defender seeks the optimal protection strategy, \emph{(ii)} {\sc Attack}, where given $D$ and no protection budget, the attacker determines the optimal infection strategy, \emph{(iii)} {\sc Attack-Protect}, where given $D$, the attacker computes the optimal infection strategy, and \emph{(iv)} {\sc Vaccination-Attack}, where given no budget for protection, the defender finds the optimal vaccination strategy. This fundamental contribution sheds light on the practical difficulties dealt in~\cite{Baggio2020}. Furthermore, it contributes to the understanding of sequential combinatorial games within the polynomial hierarchy and it motivates the focus on potentially $\Omega(2^{2^{2^{\vert V \vert}}})$ algorithms, heuristic methods or novel solution definitions. Table~\ref{tab:our_results} summarizes our results for general graphs; unitary cases assume that all costs and benefits are 1, and undirected graphs assume that infection can traverse an edge in both directions. We stress the incorrectness of the following intuitive claim for multilevel optimization problems: if a subgame is $C$-hard for some complexity class $C$, then the associated game is at least $C$-hard. Note that in a multilevel optimization problem, like the {\MCN}, the ultimate goal is to find the optimal first level decision. Hence, if for example in the {\MCN}, we had always $\Omega =\vert V \vert$, then we would know directly that all vertices are saved, even if the attack problem is theoretically intractable. This supports the interest of understanding the individual complexity of each subgame of {\MCN}.

We also contribute with an algorithmic analyzes of \textsc{Protect} by exploring graph classes where it becomes polynomially solvable.

\begin{table}[h]
\begin{center}
\resizebox{0.95\columnwidth}{!}{
\begin{tabular}{c|ll|ll}
\toprule
\multicolumn{1}{c|}{} & \multicolumn{2}{c|}{\textbf{Undirected Graphs}} & \multicolumn{2}{c}{\textbf{Directed Graphs}} \\[1ex]
\multicolumn{1}{c|}{\textsc{Decision Versions}} & \multicolumn{1}{c}{\textsc{Unitary Case}} & \multicolumn{1}{c|}{\textsc{Weighted Case}} & \multicolumn{1}{c}{\textsc{Unitary Case}} & \multicolumn{1}{c}{\textsc{Weighted Case}} \\
\multicolumn{1}{c|}{} & \multicolumn{1}{c}{\textit{Section \ref{sec:undirected_uni}}} & \multicolumn{1}{c|}{\textit{Section \ref{sec:undirected_wei}}} & \multicolumn{1}{c}{\textit{Sections \ref{sec:directed_uni} \& \ref{sec:special}}} & \multicolumn{1}{c}{} \\
\midrule 
\textsc{Protect} & \hyperref[th_protect_NP]{[1]} NP-complete& [6] \textcolor{gray}{NP-complete} & \hyperref[th_protect_dir]{[11]} NP-complete & [16] \textcolor{gray}{NP-complete}\\[0.5ex]
\textsc{Attack} & \hyperref[th_VA_NP]{[2]} Polynomial& \hyperref[Attack_w_KP]{[7]} NP-complete & \hyperref[A_dir_NP_hard]{[12]} NP-complete & [17] \textcolor{gray}{NP-complete}\\[0.5ex]
\textsc{Attack-Protect} & \hyperref[th_AP_NP]{[3]} NP-hard& \hyperref[th_AP_w]{[8]} $\Sigma_2^p$-complete & [13] \textcolor{gray}{NP-hard}& [18] \textcolor{gray}{$\Sigma_2^p$-complete}\\[0.5ex]
\textsc{Vaccination-Attack} & \hyperref[cor_VA_NP]{[4]} NP-complete& \hyperref[th_VA_w]{[9]} $\Sigma_2^p$-complete& \hyperref[th_VA_dir]{[14]} $\Sigma_2^p$-complete&[19] \textcolor{gray}{$\Sigma_2^p$-complete}\\[0.5ex]
\textsc{MCN} & [5] \textcolor{gray}{NP-hard} & \hyperref[th_MCN_w]{[10]} $\Sigma_3^p$-complete &[15] \textcolor{gray}{$\Sigma_2^p$-hard}& [20] \textcolor{gray}{$\Sigma_3^p$-complete}\\[0.5ex]
\bottomrule
\end{tabular}}
\vskip 0.10in
\caption{\normalsize{Computational complexity of the decision versions of the subproblems in MCN. Entries in \textcolor{gray}{gray} correspond to results that follow as corollaries. In increasing order, we have: $[4] \implies [5]$, $[1] \implies [6]$,  $[12] \implies [13]$, $[14] \implies [15]$, and $[6$-$10] \implies [16$-$20]$.}}
\label{tab:our_results}
\end{center}
\end{table}

\paragraph{Paper Organization}  In Section~\ref{sec:literature}, we revise the literature associated with {\MCN}, allowing to position our contribution in the context of critical node problems, interdiction games and defender-attacker-defender problems. In Section~\ref{sec:undirected_uni}, we focus on the case where graphs are undirected and each vertex benefit and cost is unitary. Section~\ref{sec:undirected_wei} adds  the possibility of having non-unitary parameters, while Section~\ref{sec:directed_uni} 
generalizes the game to directed graphs. Finally, Section~\ref{sec:special} investigates structural properties of special graph classes that can be explored to make at least {\sc Protection} polynomially solvable, both on directed and undirected graphs.

\section{Related literature} \label{sec:literature}
 Assessing the vulnerability of complex infrastructures such as networks is of the utmost importance in practice. One way to measure the robustness of a given network is to study its connectivity properties, for which many metrics exist. With respect to a fixed metric, vertices often play different roles in the graph, with varying levels of importance. The most important vertices are  qualified as \emph{critical}. Thus, the problem of detecting subsets of critical vertices with respect to some connectivity measure is of great interest, either for defensive or for offensive purposes, and with applications in domains ranging from network immunization \cite{ARULSELVAN20092193, He2011} to computational biology \cite{Boginski2009, Tomaino2012}.

 \paragraph{Critical Node Detection Problems (CNDP)} The CNDPs have been extensively studied, with names varying with the connectivity metric to optimize and the constraints of the problem. Many of its studied versions have been shown to be NP-complete on general graphs; see Lalou \etal ~\cite{LALOU201892} for a recent survey. Indeed, many of these belong to the class of problems called \textit{Node-Deletion Problems}. They consist in deleting the smallest subset of vertices from a graph so that the induced subgraph satisfies a certain property $\pi$. Lewis and Yannakakis \cite{Lewis1980} have shown that if $\pi$ is \emph{nontrivial} and \emph{hereditary}, then the subsequent vertex deletion problem is NP-hard. In particular, \emph{MinMaxC}, the problem of finding a set of vertices $D$ from a graph $G$ with a budget constraint $|D| \leq \Omega$ such that the removal of $D$ minimizes the size of the largest connected component in the remaining graph, has been shown to be NP-hard in the strong sense thanks to this argument \cite{Shen2012}. Moreover, some CNDP problems remain NP-hard even on particular graph classes \cite{Addis2013, LALOU201892}. For example,  the original \textit{Critical Node Problem} (CNP) \cite{ARULSELVAN20092193} which seeks to minimize the \emph{pairwise connectivity} of the graph by removing a limited number of vertices remains NP-hard on split or bipartite graphs \cite{Addis2013}. Several works try to clarify the frontier between polynomial and NP-hard instances for different variants of the CNDP. The version based on pairwise connectivity over trees is studied in \cite{DisGroLoc11cnp} where it is found to be polynomial with unit connection costs and strongly NP-hard otherwise. Many other versions of the CNDP were studied in details over trees, such as the versions based on the cardinality of the largest component (\emph{MinMaxC}) and the number of connected components (\emph{MaxNum}) \cite{Shen2012}, the largest pairwise connectivity among all components \cite{Lalou2016} or an extension of pairwise connectivity based on the length of shortest paths in the remaining graph \cite{DAM-DCNP}. A stochastic version of the pairwise CNDP with node attack failure was studied over trees in \cite{stoch_tree} and found to be strongly NP-hard, even with unit connection costs. The CNDP has also been studied on other specially structured graphs, such as series-parallel graphs \cite{DAM-DCNP,Shen2012}, graphs with bounded treewidth \cite{Addis2013}, proper interval graphs \cite{Lalou2016} or bipartite permutation graphs \cite{Lalou2019}.

\paragraph{Interdiction Games} In several CNDP, although the optimization problem is formulated with a natural single objective, the task is inherently constituted of several ones. In the CNP, minimizing the pairwise connectivity maximizes the
number of connected components in the residual graph, while simultaneously minimizing the variance in the component sizes \cite{ARULSELVAN20092193}. Even though in this particular case, it has been shown that the multi-objective formulation is not equivalent to the original one \cite{Ventresca2018}, splitting the objective in two is sometimes possible. For example, Furini \etal ~\cite{Furini2019CastingLO} exhibited the hidden bilevel structure of the \textit{Capacitated Vertex Separator problem} by formulating it as  a two player Stackelberg game in which a leader interdicts the network by removing some of its vertices and a follower determines the maximum connected component in the remaining graph, highlighting the link between CNDP problems and \textit{Interdiction Games}. Interdiction games on networks are a special family of two-player zero-sum Stackelberg games in which a leader interdicts parts of the network \textit{(arcs or vertices)} subject to a budget limitation in order to maximize the disruption of the follower's objective who solves an optimization problem on the remaining graph \textit{(\eg, the maximum flow or the maximum clique)}. Whereas some interdiction games such as the \textit{network flow interdiction} are NP-complete \cite{Wood1993}, others such as the \textit{binary knapsack interdiction problem} \cite{DeNegre2011, Caprara2014} or the \textit{maximum clique interdiction game} \cite{Furini2019} have been shown to be $\Sigma_2^p$-complete, shading light on the intrinsic relationship between this class of problems and the second level of the polynomial hierarchy.

However, the unitary undirected version of {\MCN}, as originally introduced by Baggio \etal~\cite{Baggio2020}, is not an interdiction problem per se but \emph{contains} one. Indeed, the vaccination stage of the game focuses on identifying critical infrastructures in the network to interdict them preventively to counter an intentional attack, which falls into the framework of Network Interdiction problems. Nevertheless, the game does not finish with the attack: there is a third stage where the defender tries to isolate the propagation of the infection to maximize the unharmed fraction of the network. Finding a blocking strategy to limit the diffusion of an infection is related to the \textit{Firefighter problem}, which has been shown to be NP-complete, even for trees of maximum degree three \cite{Finbow2007} and was studied more recently in \cite{Barnetson2020} where the problem was shown to be NP-complete on split graphs and bipartite graphs but polynomial on graphs with bounded treewidth. Thus, the MCN problem combines two different paradigms in network protection, \textit{prevention} and \textit{blocking}, each being related to provably hard problems. The overall contraction leads to a trilevel optimization formulation for the MCN, making it fall under the \textit{Defender-Attacker-Defender} (DAD) framework introduced by Brown \etal ~\cite{Brown2006} to study the defense of critical infrastructure against malicious attacks.

\paragraph{Defender-Attacker-Defender}
Although the general DAD has been claimed to be NP-hard in \cite{Martin07}, complexity results for trilevel combinatorial optimization problems are scarce. In \cite{Johannes2011NewCO}, a new proof that \textit{Trilevel Linear Programming} is $\Sigma_2^p$-hard is provided, building upon the results in  \cite{Blair1992, Dudas1998, Jeroslow1985} showing that the \textit{Multilevel Linear Programming} problem with $L+1$ levels is $\Sigma_L^p$-hard. In fact, the decision version of MCN problem can be formulated as \textit{"given $3$ integer budgets $\Omega, \Phi, \Lambda$, a graph $G$ and an integer $K$, is there a vaccination $D$ such that for all attacks $I$ there exists a protection $P$ saving at least $K$ vertices?"} Thus, there seems to be a link between the MCN and the \textit{$3$-alternating quantified satisfiability} problem which has been shown to be $\Sigma_3^p$-complete by Meyer, Stockmeyer and Wrathall \cite{MeyerStock72, Wrathall1976}, making one expect the MCN to be complete for this class.

We stress that very few problems have been shown to be \emph{naturally} $\Sigma_3^p$-complete in the literature up to now, in addition to infinite families of problems which have been shown to be $\Sigma_L^p$-complete for any level $L$ of the polynomial hierarchy (as, \eg, Satisfiability Problems, or Multilevel Linear Programming). The compendium of \cite{Schaefer2002}, whose last update dates back to 2008, describes eight $\Sigma_3^p$-complete problems including graph theory problems, problems over sets as well as number theory problems. Since this compendium was last updated, a handful of other problems have been demonstrated to be $\Sigma_3^p$-complete, in the domains of logic, knowledge representation and artificial intelligence. We can cite, \eg, the problem of \emph{Binding Forms in First-Order Logic} \cite{Mogavero2015}, deciding whether a propositional program has epistemic FLP (Faber, Leone and Pfeifer) answer sets \cite{Shen2016}, or checking the existence of max optimal outcomes over $m$CP-nets to study the aggregation of preferences over combinatorial domains in artificial intelligence \cite{Malizia2018}. To the best of our knowledge, there is approximately a dozen proven natural $\Sigma_3^p$-complete problems in the literature, which makes it all the harder to derive $\Sigma_3^p$-hardness for a given trilevel problem. In this work, we add two more problems to the list of $\Sigma_3^p$-complete problems, the \emph{Trilevel Knapsack Interdiction Problem} and the \emph{Multilevel Critical Node Problem}. Even though the set of proven $\Sigma_2^p$-complete problems is larger by one order of magnitude, \ie, a little more than roughly a hundred such problems are known, we also add several new $\Sigma_2^p$-complete problems to this list.

\section{Undirected graphs: the unitary case}\label{sec:undirected_uni}

In this section, we focus on undirected graphs $G=(V, E)$, \ie, for each couple of vertices $(u,v) \in V \times V$, if the arc $(u,v)$ is in $G$, then $(v,u)$ is also in the graph. We thus call $E$ the set of  edges. Here, we also consider unit benefits and costs, \ie, $\forall v \in V, \; \hat{c}_v = h_v = c_v = b_v = 1$. We introduce $s$, the function that, given a graph $G$, the vaccination strategy $D$, the attack strategy $I$ and the protection strategy $P$, returns $s(G, D, I, P)$, the number of saved vertices in the end of the game. Thus, in this setting, the trilevel formulation of the problem is simply:
\begin{equation}
    \underset{ \substack{D \subseteq V \\ |D| \leq \Omega}}{\max} \quad \underset{ \substack{I \subseteq V \backslash D \\ |I| \leq \Phi}}{\min} \quad \underset{ \substack{P \subseteq V \backslash (I \cup D) \\ |P| \leq \Lambda}}{\max} s(G, D, I, P).
    \label{MCN_undir}
\end{equation}
To ease our analysis, guided by the relationship between \textit{Critical Node Detection Problems} and \textit{Node-Deletion Problems}, we first write the immediate \hyperref[prop_node_delet]{Property \ref{prop_node_delet}}\footnote{It is easy to see that Property~\ref{prop_node_delet} holds for general directed weighted versions with $s(G, D, I, P) = s(G[V\backslash (D \cup P)], \emptyset, I, \emptyset) + \sum_{v \in D} b_v + \sum_{v \in P} b_v$ and  $s(G, D, I, P)$ equal to the benefit associated with the saved vertices in the end of the game.} stating that \textit{vaccinating} or \textit{protecting} vertices has the same effect as \textit{removing} them from the graph with respect to $s$. Starting from $G=(V, E)$ and a subset $W \subseteq V$, we denote by $G[V\backslash W]$ the graph induced by the deletion of the vertices in $W$ and its incident edges.
\begin{prop}\label{prop_node_delet}
Given $G, D, I, P$, we have that $$s(G, D, I, P) = s(G[V\backslash (D \cup P)], \emptyset, I, \emptyset) + |D| + |P|.$$
\end{prop}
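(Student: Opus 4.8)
The plan is to show the identity by establishing that the infection-propagation relation is entirely governed by the subgraph obtained after deleting $D\cup P$, and that vaccinated or protected vertices are always saved and never relay the infection. First I would recall the semantics encoded by Constraints~\eqref{const:inf_vac} and~\eqref{const:proc_vac}: a vertex $v$ can be infected (i.e. $\alpha_v=0$) only if it is not vaccinated and not directly attacked (from~\eqref{const:inf_vac}, $\alpha_v=1$ is forced whenever $z_v=1$ or $y_v=0$... more precisely $\alpha_v \le 1+z_v-y_v$ only bites when $z_v=0, y_v=1$), and only if some in-neighbour $u$ is itself infected and $v$ is neither protected nor vaccinated (from~\eqref{const:proc_vac}). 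Since the lower level maximizes $\sum_v b_v\alpha_v$ with $b_v\ge 0$, the optimal $\alpha$ is the pointwise-largest feasible vector, so $\alpha_v=0$ holds exactly for the vertices that are forced to $0$ by a chain of constraints~\eqref{const:proc_vac} rooted at a directly infected vertex, with the chain passing only through vertices that are neither in $D$ nor in $P$.

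Next I would argue the two inclusions that together give the count. On one hand, every vertex in $D\cup P$ has $\alpha_v=1$: for $v\in D$, Constraint~\eqref{const:inf_vac} gives $\alpha_v\le 1+z_v-y_v = 2-y_v\ge 1$ (and $y_v=0$ since $I\subseteq V\setminus D$, but even without that $\alpha_v\le 2$ is not binding), while Constraint~\eqref{const:proc_vac} gives $\alpha_v\le\alpha_u+x_v+z_v$ with $z_v=1$ (resp. $x_v=1$), so no incoming constraint can force $\alpha_v$ below $1$; hence these $|D|+|P|$ vertices are all saved, contributing $|D|+|P|$ to $s(G,D,I,P)$. On the other hand, for $v\notin D\cup P$, the constraints~\eqref{const:inf_vac}–\eqref{const:proc_vac} restricted to such $v$ read $\alpha_v\le 1-y_v$ and $\alpha_v\le\alpha_u$ for each arc $(u,v)$ with $u\notin D\cup P$ — because when $u\in D\cup P$ the term $x_v+z_v$ is $0$ but $\alpha_u=1$, so that constraint is vacuous — which is precisely the constraint system defining $s$ on the induced graph $G[V\setminus(D\cup P)]$ with attack set $I$ (note $I\subseteq V\setminus(D\cup P)$, and arcs incident to deleted vertices are removed). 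Therefore the number of saved vertices among $V\setminus(D\cup P)$ equals $s(G[V\setminus(D\cup P)],\emptyset,I,\emptyset)$, and adding the $|D|+|P|$ saved vertices from the deleted set yields the claimed equality.

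The main obstacle — really the only subtle point — is the careful treatment of arcs crossing the boundary between $D\cup P$ and its complement: one must check that an infected vertex outside $D\cup P$ cannot propagate infection \emph{into} $D\cup P$ (handled by the $x_v+z_v$ term on the right-hand side of~\eqref{const:proc_vac}), and, conversely, that a vertex of $D\cup P$ — which always has $\alpha=1$ — cannot inadvertently \emph{protect} a downstream vertex in a way that is not captured by simply deleting it, which holds because setting $\alpha_u=1$ makes the constraint $\alpha_v\le\alpha_u+\cdots$ slack exactly as removing the arc $(u,v)$ would. Once this boundary bookkeeping is made precise, the identity follows directly; the footnote's generalization to weighted directed graphs is obtained by the same argument with $|D|+|P|$ replaced by $\sum_{v\in D}b_v+\sum_{v\in P}b_v$.
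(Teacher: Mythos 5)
Your proof is correct. The paper itself offers no formal proof of Property~\ref{prop_node_delet}: it is declared immediate, and the only justification given is the informal remark that the infected vertices of $G$ are exactly those lying in a connected component of $G[V\backslash(D\cup P)]$ that meets $I$. Your route is different in flavour: rather than arguing about components (which is really an undirected-graph picture), you verify the identity directly against the constraint system \eqref{const:inf_vac}--\eqref{const:proc_vac} of the trilevel formulation, showing first that every $v\in D\cup P$ has $\alpha_v=1$ because no constraint can force it below $1$, and second that for $v\notin D\cup P$ the active constraints are exactly those of the induced subgraph $G[V\backslash(D\cup P)]$ with attack set $I$, since an arc $(u,v)$ with $u\in D\cup P$ yields the vacuous bound $\alpha_v\le\alpha_u+x_v+z_v$ with $\alpha_u=1$. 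This buys you two things the paper's remark does not: the argument applies verbatim to directed graphs and to arbitrary non-negative benefits (so the footnoted generalization comes for free), and it isolates the one genuinely delicate point, namely the arcs crossing the boundary between $D\cup P$ and its complement. The only cosmetic caveat is your appeal to the pointwise-largest feasible $\alpha$: when some $b_v=0$ the maximizer need not be unique, so one should either take $s$ to be defined by that maximal $\alpha$ (the intended propagation semantics) or note that in the unitary case $b_v=1$ forces uniqueness.
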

What \hyperref[prop_node_delet]{Property \ref{prop_node_delet}} actually says is that the infected vertices in $G$ are the ones in the connected components of $G[V\backslash (D \cup P)]$ where there is at least one attacked vertex in $I$.

We will start by classifying the computational complexity of {\sc Protect}, followed by the one of {\sc Attack-Protect}, and, finally, {\sc Vaccination-Attack}. From the latter, we obtain the complexity of {\sc Attack}, and the minimum complexity of {\MCN}.

\subsection{The {\sc Protection} problem}

In {\sc Protect}, the defender is given $D$ and $I$ and seeks to find an optimal $P$. Thus, thanks to \hyperref[prop_node_delet]{Property \ref{prop_node_delet}}, we can assume that the game takes place in $G_a = G[V \backslash D]$ for this last move: the defender wants to find at most $\Lambda$ vertices $P \subseteq V_a \backslash I$ that will maximixe $s(G_a, \emptyset, I, P)$. For a given choice of $P$, we introduce $C_1(P), ..., C_{N(P)}(P)$, the $N(P)$ connected components in the graph $G_a[V_a \backslash P]$. Hence, the objective of the defender being to find $P$ minimizing the number of infected vertices $f(P)$, we can define it as:
\begin{equation}
    f(P) = \sum_{i=1}^{N(P)}|C_i(P)| \times \mathbbm{1}_{C_i(P) \cap I \neq \emptyset}.
    \label{obj_protect}
\end{equation}
We will show that finding such a $P$ is NP-complete. We argue that it is a direct consequence of the results of \cite{Addis2013} showing that the \textit{Critical Node Problem} is NP-hard on split graphs.

\subsubsection{The Critical Node Problem on split graphs}

The \textit{Critical Node Problem} \textit{(CNP)} \cite{ARULSELVAN20092193} is a related problem to ours. The setting is very similar to {\sc Protection}: we have an undirected graph $\bar{G}=(\bar{V},\bar{E})$, an integer budget $B$, and we want to find a subset $\bar{P}$ of vertices to remove that minimizes the \textit{pairwise connectivity} of the residual subgraph $\bar{G}[\bar{V}\backslash \bar{P}]$ under the constraint of having $|\bar{P}| \leq B$. If we denote by $\bar{C_1}(\bar{P}), ..., \bar{C}_{N(\bar{P})}(\bar{P})$ the $N(\bar{P})$ connected components of $\bar{G}[\bar{V}\backslash \bar{P}]$, the measure we want to minimize is:
\begin{equation}
    g(\bar{P}) = \sum_{i=1}^{N(\bar{P})} \binom{|\bar{C_i}(\bar{P})|}{2}
    \label{pairwise_connectivity}
\end{equation}
where each term in the sum is the pairwise connectivity of $\bar{C_i}(\bar{P})$. Here, we will focus more particularly on \textit{split graphs}. A split graph is a graph $\bar{G}=(\bar{V}, \bar{E})$ whose vertices $\bar{V}$ can be split in two sets $\bar{V_1}$ and $\bar{V_2}$, $\bar{V_1}$ forming a clique and $\bar{V_2}$ an independent set. Thus, the decision problem for this particular case of the CNP is:

\vspace{0.3cm}
\boxxx{
{\bf CNP}$_{split}$: \\
{\sc instance}: A split graph $\bar{G}=(\bar{V_1}, \bar{V_2}; \bar{E})$, a non-negative integer budget $B \leq |\bar{V}|$ and a non-negative integer $\bar{K}$. \\
{\sc question}: Is there a subset $\bar{P} \subseteq \bar{V}$, $\bar{P} \leq B$ such that $g(\bar{P}) \leq \bar{K}$?
}
\vspace{0.3cm}

As \cite{Addis2013} noted, in this setting there is at most one connected component of the residual subgraph $\bar{G}[\bar{V}\backslash \bar{P}]$ that contains more than one vertex. Moreover, it is easy to see that if this \textit{nontrivial connected component} exists, it necessarily contains a subclique of $\bar{G}[\bar{V_1}]$. More than that, it is the only connected component of $\bar{G}[\bar{V}\backslash \bar{P}]$ containing vertices from $\bar{V_1}$. Thus, we can name $\bar{C_1}$ the connected component containing vertices of $\bar{V_1}$ \textit{(in the case of $\bar{P} \supseteq \bar{V_1}$, then $\bar{C_1}$ is either a singleton from $\bar{V_2}$ or is empty and our reasoning still holds)}. Then, minimizing (\ref{pairwise_connectivity}) is equivalent to minimize $|\bar{C_1}|$. But finding the subset of vertices $\bar{P}$ to remove to do that has been shown to be NP-hard:

\begin{lem}{\cite{Addis2013}} {\sc CNP}$_{split}$ is NP-hard. \end{lem}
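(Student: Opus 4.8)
The plan is to prove NP-hardness of $\textsc{CNP}_{split}$ by a reduction from a well-known NP-hard problem, most naturally \textsc{Clique} (or equivalently \textsc{Independent Set} on the complement). The key observation, already isolated in the excerpt, is that on a split graph $\bar{G}=(\bar{V_1},\bar{V_2};\bar{E})$ the residual graph $\bar{G}[\bar{V}\setminus\bar{P}]$ has at most one nontrivial connected component, call it $\bar{C_1}$, and it is the unique component meeting $\bar{V_1}$; hence minimizing the pairwise connectivity $g(\bar{P})=\sum_i\binom{|\bar{C_i}|}{2}$ is the same as minimizing $\binom{|\bar{C_1}|}{2}$, i.e.\ minimizing $|\bar{C_1}|$. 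So the decision question ``$g(\bar{P})\le\bar{K}$ for some $\bar{P}$ with $|\bar{P}|\le B$'' is equivalent to ``can we delete at most $B$ vertices so that the vertices of $\bar{V_1}$ surviving, together with their neighbours in $\bar{V_2}$ that survive, form a set of size at most some threshold''.

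First I would set up the reduction: given a graph $H=(V_H,E_H)$ and integer $k$ for the \textsc{Clique} instance, I would build a split graph whose clique side $\bar{V_1}$ is a copy of $V_H$ (so $\bar{V_1}$ is turned into a full clique regardless of $H$) and whose independent side $\bar{V_2}$ encodes the \emph{non-edges} of $H$: for each pair $\{u,v\}\notin E_H$ introduce an independent vertex $w_{uv}$ adjacent exactly to $u$ and $v$. Intuitively, to keep $\bar{C_1}$ small we must delete vertices of $\bar{V_1}$; the vertices of $\bar{V_1}$ we \emph{keep} should induce a clique in $H$ (otherwise some $w_{uv}$ with both endpoints kept stays attached to $\bar{C_1}$ and inflates it — unless we also spend budget deleting that $w_{uv}$, which the budget will be tuned to forbid). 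Then I would choose the budget $B=|V_H|-k$ and the target $\bar{K}$ so that $g(\bar{P})\le\bar{K}$ is achievable precisely when the $k$ surviving $\bar{V_1}$-vertices form a clique of $H$ and no $\bar{V_2}$-vertex remains attached to them; concretely $\bar{K}=\binom{k}{2}$ works if one argues that any cheaper configuration is dominated.

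The key steps in order: (1) state the reduction and verify it is polynomial and produces a genuine split graph; (2) forward direction — from a $k$-clique $Q$ in $H$, take $\bar{P}=\bar{V_1}\setminus Q$, check $|\bar{P}|=B$, observe that every $w_{uv}\in\bar{V_2}$ has at least one endpoint outside $Q$ hence becomes isolated, so $\bar{C_1}=Q$ and $g(\bar{P})=\binom{k}{2}\le\bar{K}$; (3) reverse direction — given $\bar{P}$ with $|\bar{P}|\le B$ and $g(\bar{P})\le\binom{k}{2}$, argue first that we may assume $\bar{P}\subseteq\bar{V_1}$ (deleting a $\bar{V_2}$-vertex never helps reduce $|\bar{C_1}|$ by more than deleting a $\bar{V_1}$-vertex, so pushing deletions onto $\bar{V_1}$ is WLOG), deduce that at least $k$ vertices of $\bar{V_1}$ survive, and then argue these surviving vertices must be pairwise adjacent in $H$: if two non-adjacent survivors $u,v$ existed, $w_{uv}$ would survive too (it is in $\bar{V_2}$, not deleted) and lie in $\bar{C_1}$, giving $|\bar{C_1}|\ge k+1$ and $g(\bar{P})\ge\binom{k+1}{2}>\binom{k}{2}$, a contradiction; hence $H$ has a $k$-clique.

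The main obstacle I anticipate is step (3), specifically the ``normalization'' lemma that lets us assume $\bar{P}\subseteq\bar{V_1}$ and that exactly $B$ vertices (and all from the clique side) are deleted — one must rule out clever mixed deletions that sacrifice a few $\bar{V_2}$-vertices to keep an extra $\bar{V_1}$-vertex, and verify that the budget $B=|V_H|-k$ together with the target $\binom{k}{2}$ is tight enough to make this impossible (this is exactly where the precise numerical choices matter). Everything else — split-graph structure, polynomiality, the forward direction — is routine. I should also note that this NP-hardness, combined with the structural correspondence between $\textsc{CNP}_{split}$ and the \textsc{Protection} objective $f(P)$ from Equation~\eqref{obj_protect} (where the ``attacked set'' $I$ plays the role of marking which component's size counts), is what will be leveraged in the subsequent theorem to conclude that \textsc{Protect} is NP-complete; here it suffices to cite \cite{Addis2013} for the statement itself.
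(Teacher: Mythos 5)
The paper itself offers no proof of this lemma---it simply imports it from \cite{Addis2013}---so the only question is whether your self-contained reduction is sound. It is not: the forward direction breaks. Your gadget vertex $w_{uv}$ (for a non-edge $\{u,v\}$ of $H$) has degree two, so deleting \emph{one} of its endpoints does not isolate it. If $Q$ is a $k$-clique and $\bar{P}=\bar{V_1}\setminus Q$, then every non-edge of $H$ with exactly one endpoint in $Q$ leaves its gadget vertex adjacent to a surviving clique vertex, hence inside $\bar{C_1}$. Thus $|\bar{C_1}| = k + |\{\{u,v\}\notin E_H : |\{u,v\}\cap Q|=1\}|$, and since $Q$ contains no non-edge this inflation term equals $\sum_{v\in Q}(|V_H|-1-\deg_H(v))$, which is positive in general and, crucially, depends on \emph{which} clique $Q$ is chosen. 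So $g(\bar{P})>\binom{k}{2}$, and no uniform choice of $\bar{K}$ can absorb the discrepancy; the step ``every $w_{uv}$ has at least one endpoint outside $Q$ hence becomes isolated'' is simply false for a degree-two vertex.

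The construction can be repaired by flipping the encoding: place a gadget vertex $w_e$ on the independent side for each \emph{edge} $e=\{u,v\}\in E_H$, set the budget to $B=k$, and delete the clique $Q$ itself rather than its complement. Now $w_e$ is isolated precisely when \emph{both} endpoints of $e$ are deleted, so for $P\subseteq\bar{V_1}$ with $|P|=k$ one gets $|\bar{C_1}|=(|V_H|-k)+(|E_H|-e_H(P))$, where $e_H(P)$ is the number of edges of $H$ spanned by $P$; this is at most $(|V_H|-k)+(|E_H|-\binom{k}{2})$ exactly when $P$ is a $k$-clique, so take $\bar{K}=\binom{(|V_H|-k)+(|E_H|-\binom{k}{2})}{2}$. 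Your normalization step becomes easy in this version: deleting an independent-side vertex removes exactly one vertex from $\bar{C_1}$, while deleting a clique-side vertex removes at least one (itself), so deletions can always be pushed onto $\bar{V_1}$ without increasing $g$. Your reverse-direction idea (a surviving ``bad'' pair forces an extra surviving gadget vertex) is the right instinct but must likewise be restated in terms of edges spanned rather than non-edges avoided. For the purposes of this paper, citing \cite{Addis2013} for the statement suffices.
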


\subsubsection{Complexity result}
Next, we show that the decison version of {\sc Protect} is NP-complete using a reduction from CNP$_{split}$. The decision problem is the following:

\vspace{0.3cm}
\boxxx{
 \textsc{\textbf{Protect}}: \\
{\sc instance}: A graph $G_a=(V_a,E_a)$, a set of attacked vertices $I \subseteq V_a$, a non-negative integer budget $\Lambda \leq |V_a|-|I|$ and a non-negative integer $K$.
\\
{\sc question}: Is there a subset $P \subseteq V_a \backslash I$, $|P| \leq \Lambda$ such that the number of infected vertices $f(P) \leq K$?
}
\vspace{0.3cm}

Note that the question can be equivalently re-written with the inequality $s(G_a, \emptyset,I,P) \geq \vert V_a \vert -K$.

\begin{thm}
{\sc Protect} is NP-complete.
\label{th_protect_NP}
\end{thm}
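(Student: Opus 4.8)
The plan is to prove membership in NP and NP-hardness separately, with the hardness coming from a reduction from $\textsc{CNP}_{split}$.

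\textbf{Membership in NP.} Given an instance $(G_a, I, \Lambda, K)$, a certificate is simply the protection set $P \subseteq V_a \setminus I$. Verification consists of checking $|P| \le \Lambda$, computing the connected components of $G_a[V_a \setminus P]$ (e.g.\ by BFS/DFS), identifying which components contain a vertex of $I$, summing their cardinalities to get $f(P)$, and checking $f(P) \le K$. All of this is clearly polynomial in the size of $G_a$.

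\textbf{NP-hardness via reduction from $\textsc{CNP}_{split}$.} Given a split-graph instance $\bar G = (\bar V_1, \bar V_2; \bar E)$ with budget $B$ and target $\bar K$, I would build a \textsc{Protect} instance as follows. Keep the same vertex set and edges, $G_a := \bar G$, so $V_a = \bar V_1 \cup \bar V_2$. The key idea, exploiting the structural observation from \cite{Addis2013} recalled just before the theorem, is that in a split graph the only connected component of $\bar G[\bar V \setminus \bar P]$ that can be nontrivial (more than one vertex) is the one, call it $\bar C_1$, that meets the clique $\bar V_1$; all other surviving vertices are isolated vertices of $\bar V_2$. So I want to place the attack $I$ so that exactly the nontrivial component counts toward $f$. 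A natural choice is to attach one new pendant ``infected'' vertex $w_i$ to each clique vertex $v_i \in \bar V_1$ (i.e.\ add edge $\{w_i, v_i\}$), and set $I := \{w_i : v_i \in \bar V_1\}$. Since every surviving clique vertex stays adjacent to its pendant attacked vertex, the nontrivial component $\bar C_1$ (together with whatever pendants hang off it) is exactly the infected component; isolated survivors from $\bar V_2$ are saved. Set $\Lambda := B$ (the defender should not want to protect the $w_i$'s — and we can forbid it since $I$-vertices cannot be protected — and it never helps to protect a pendant). Then $f(P)$ on this instance equals $|\bar C_1(P)|$ plus the number of pendants attached to surviving clique vertices, and since $|\bar C_1|$ and the clique-vertex count move together, minimizing $f$ is equivalent to minimizing $|\bar C_1|$, which (as recalled above) is equivalent to minimizing the pairwise connectivity $g(\bar P)$. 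One then picks $K$ to be the threshold corresponding to $\bar K$ under this correspondence, i.e.\ translate "$g(\bar P) \le \bar K$" into the equivalent bound on $|\bar C_1|$ and hence on $f$. The reduction is clearly polynomial.

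\textbf{Main obstacle.} The delicate point is getting the correspondence between the CNP objective $g$ (a sum of binomials $\binom{|\bar C_i|}{2}$) and the \textsc{Protect} objective $f$ (a sum of raw component sizes) to be a genuine equivalence of decision problems, not just a monotone relationship in spirit. Because in the split-graph setting $g(\bar P) = \binom{|\bar C_1(\bar P)|}{2}$ depends only on the single quantity $t := |\bar C_1(\bar P)|$, and $f(P)$ also reduces to a function of $t$ (the size of the unique infected nontrivial component, counting its pendants), I need to verify that both are \emph{strictly increasing} in $t$ over the achievable range, so that thresholds transfer cleanly: "$g \le \bar K$" $\iff$ "$t \le t_0$" $\iff$ "$f \le K$" for the appropriate constants. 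I also need to double-check the boundary case $\bar P \supseteq \bar V_1$ (where $\bar C_1$ is empty or a stray singleton), as flagged in the excerpt, and confirm that an optimal defender never wastes budget protecting a pendant vertex rather than a clique vertex — an exchange argument suffices, since protecting $v_i$ dominates protecting $w_i$. Once these monotonicity and dominance checks are in place, the equivalence and hence NP-hardness follow, completing the proof together with NP-membership.
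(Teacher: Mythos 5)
Your proof is correct and follows essentially the same route as the paper: NP-membership via a DFS check, and NP-hardness by reducing from $\textsc{CNP}_{split}$ using the observation that the residual split graph has a unique nontrivial component $\bar{C}_1$ meeting the clique, so that minimizing $f$ amounts to minimizing $|\bar{C}_1|$ and the threshold $K$ is obtained by inverting $\binom{t}{2}\le\bar{K}$. The only differences are cosmetic: the paper attaches a \emph{single} new attacked vertex $u$ adjacent to all of $\bar{V}_1$ (keeping the instance a split graph and giving the closed form $K=\lfloor\frac{1}{2}(3+\sqrt{8\bar{K}+1})\rfloor$), whereas you attach one attacked pendant per clique vertex; both gadgets work. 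One small inaccuracy in your accounting: since the pendants $w_i$ are themselves in $I$, \emph{all} of them are infected regardless of which clique vertices survive, so $f(P)=|\bar{V}_1|+|\bar{C}_1(P)|$ (rather than ``pendants attached to surviving clique vertices''); this actually makes $f$ depend on $|\bar{C}_1|$ alone and removes the need for your ``move together'' claim, so the monotone transfer of thresholds goes through cleanly.
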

\begin{proof}
It is easy to see that {\sc Protect} is  NP as determining the objective value only requires finding the connected components of $G_a[V_a \backslash P]$ which can be done in linear time using a  depth-first search (DFS).
\\
To complete the proof, we exhibit an immediate reduction from {\sc CNP}$_{split}$. Let us take an instance of this problem, \ie ~a split graph $\bar{G}=(\bar{V_1}, \bar{V_2}; \bar{E})$, a non-negative integer budget $B$ and a non-negative integer $\bar{K}$. Given that, we build a graph $G_a$ by growing by one the size of the clique $\bar{G}[\bar{V_1}]$ with the addition of a vertex $u$. Thus, $V_a = \bar{V_1} \cup \{u\} \cup \bar{V_2}$ and $E_a$ is obtained by taking $\bar{E}$ and adding an edge $(u,\bar{v_1}) \;  \forall \bar{v_1} \in \bar{V_1}$. In fact, the new graph is still a split graph $G_a = (\bar{V_1} \cup \{u\}, \bar{V_2};E_a)$. Finally, the corresponding instance of {\sc Protect} is given by $G_a$, $I=\{u\}$, $\Lambda = B$ and $K= \big \lfloor \frac{1}{2}(3+\sqrt{\vphantom{\prod^2}8 \bar{K}+1}) \big \rfloor$ (obtained by solving $\bar{K} = \binom{K-1}{2}$). An example of such construction can be found in \hyperref[Fig1]{Figure \ref{Fig1}}. Then, as there is only one attacked vertex, minimizing (\ref{obj_protect}) on this instance of {\sc Protect} corresponds to choosing a $P$ that minimizes the size of the unique connected component to which $u$ belongs in $G_a[V_a \backslash P]$. Let's name $C_1$ this connected component. But as $u$ belongs to the clique part of the split graph $G_a$, $C_1$ is also the unique connected component of $G_a[V_a \backslash P]$ containing vertices from $V_1 = \bar{V_1} \cup \{u\}$. Thus, we have that $C_1= \bar{C_1} \cup \{u\}$ and $g(P)= \begin{pmatrix} f(P) - 1 \\ 2 \end{pmatrix}$. Hence, finding $P$ that minimizes $f$ on $G_a$ is equivalent to finding $P$ that minimizes $g$ on $\bar{G}$. This finishes the proof that {\sc Protect} is NP-hard.
\end{proof}

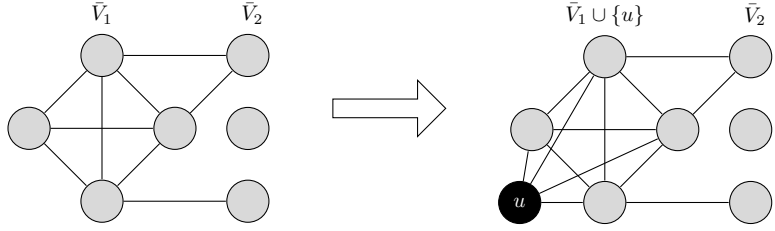
\begin{figure}[ht]
    \centering
    \begin{tabular}{ccc}
\begin{tikzpicture}[-,>=stealth',shorten >=0.2pt,auto,node distance=1cm,baseline={(current bounding box.north)},  scale=0.8, transform shape,
  main node/.style={circle,fill=black!15,draw,minimum size=0.7cm},spe node/.style={fill=white,draw=white}]

  \node[main node] (1) {};
  \node[main node] (2) [below left=of 1] {};
  \node[main node] (4) [below right =of 2] {};
  \node[main node] (3) [below right=of 1] {};
  \node[main node] (5) [above right=of 3] {};
  \node[main node] (6)  [below=0.5cm of 5] {};
  \node[main node] (7) [below right=of 3] {};
  \node[spe node] (8) at (0,0.7) {$\bar{V}_1$};
  \node[spe node] (9) at (2.5,0.7) {$\bar{V}_2$};

  \path[every node/.style={font=\sffamily\small}]
    (1) edge node {}  (2)
        edge node {}  (3)
        edge node {}  (4)
        edge node {}  (5)
    (2) edge node {}  (3)
        edge node {}  (4)
    (3) edge node {}  (4)
        edge node {}  (5)
    (4) edge node {}  (7);
\end{tikzpicture}
& \hspace{1cm}
  \begin{tikzpicture}
    \useasboundingbox (-2,0);
    \node[single arrow,draw=black,fill=white,minimum height=1.5cm,shape border rotate=0] at (-2,-1.5) {};
  \end{tikzpicture}
  \hspace{1cm}
&
\begin{tikzpicture}[-,>=stealth',shorten >=0.2pt,auto,node distance=1cm,baseline={(current bounding box.north)},  scale=0.8, transform shape,
  main node/.style={circle,fill=black!15,draw,minimum size=0.7cm},spe node/.style={fill=white,draw=white}, inf node/.style={fill=black,draw,circle,minimum size=0.7cm}]

  \node[main node] (1) {};
  \node[main node] (2) [below left=of 1] {};
  \node[main node] (4) [below right =of 2] {};
  \node[main node] (3) [below right=of 1] {};
  \node[main node] (5) [above right=of 3] {};
  \node[main node] (6)  [below=0.5cm of 5] {};
  \node[main node] (7) [below right=of 3] {};
  \node[spe node] (8) at (0,0.7) {$\bar{V}_1 \cup \lbrace u \rbrace$};
  \node[spe node] (9) at (2.5,0.7) {$\bar{V}_2$};
   \node[inf node] (10) [left=0.7cm of 4] {\textcolor{white}{$u$}};

  \path[every node/.style={font=\sffamily\small}]
    (1) edge node {}  (2)
        edge node {}  (3)
        edge node {}  (4)
        edge node {}  (5)
    (2) edge node {}  (3)
        edge node {}  (4)
    (3) edge node {}  (4)
        edge node {}  (5)
    (4) edge node {}  (7)
    (10) edge node {} (1)
         edge node {} (2)
         edge node {} (3)
         edge node {} (4);
\end{tikzpicture}
\end{tabular}
    \caption{Example of construction of $G_a$ from $\bar{G}$}
    \label{Fig1}
\end{figure}

\begin{rem}
In \cite{Barnetson2020}, it shown that the {\sc Firebreak} problem is NP-complete. This problem is equivalent to {\sc Protect} when $\vert I\vert =1$. Hence, their result can be used to establish Theorem~\ref{th_protect_NP}. Nevertheless, given that our reductions differ significantly and were obtain independently, we decided to present our alternative reduction.
\end{rem}

\subsection{The {\sc Attack-Protect} problem}

We showed that solving the last level of MCN is NP-complete, now we will prove that {\sc Attack-Protect} is also NP-hard. In this bilevel problem, we are taking the side of the attacker: the aim is to find the attack that will maximize the number of infected vertices after protection. The decision version of the problem is:

\vspace{0.3cm}
\boxxx{
\textsc{\textbf{Attack-Protect}}: \\
{\sc instance}: A graph $G_a=(V_a, E_a)$, two non-negative integer budgets $\Phi, \Lambda$ such that $\Phi + \Lambda \leq |V_a|$ and a non-negative integer $K \leq |V_a|$ \\
{\sc question}:  Is there a subset $I \subseteq V_a$, $|I| = \Phi$ such that $\forall P \subseteq V_a \backslash I$, $|P| \leq \Lambda$, the number of infected vertices $f(P) \geq K$?
}
\vspace{0,3cm}

We will use a reduction from the \textit{Dominating Set} problem, a known NP-complete problem \cite{NPBook}, whose decision version is:

\vspace{0.3cm}
\boxxx{
\textsc{\textbf{Dominating Set}}: \\
{\sc instance}: A graph $\bar{G}=(\bar{V}, \bar{E})$, a positive integer $B \leq |\bar{V}|$ \\
{\sc question}: Is there a subset $U \subseteq \bar{V}$, $|U| \leq B$, such that $\forall v \in \bar{V} \backslash U$, $\exists \; u \in U$ such that $(u,v) \in \bar{E}$?
}
\vspace{0.3cm}

\begin{thm}
{\sc Attack-Protect} is NP-hard.
\label{th_AP_NP}
\end{thm}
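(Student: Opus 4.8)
# Proof Proposal for Theorem \ref{th_AP_NP}

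The plan is a polynomial-time reduction from \textsc{Dominating Set}. Let $(\bar G=(\bar V,\bar E),B)$ be an instance, write $n=|\bar V|$ and let $\Delta$ be the maximum degree of $\bar G$. Build $G_a$ as follows: its vertex set is $\bar V$ (kept as an independent set) together with, for every $v\in\bar V$, a fresh \emph{blob} $B_v$ of $M:=3n$ new vertices; each vertex of $B_v$ is made adjacent to every vertex of the closed neighbourhood $N_{\bar G}[v]$ and to nothing else. Finally set $\Phi:=B$, $\Lambda:=\Delta+1$ and $K:=Mn-\Delta-1$. This is computable in polynomial time, $|V_a|=n(1+M)$, and one checks $\Phi+\Lambda\le 2n\le|V_a|$ and $K\le Mn\le|V_a|$, so the produced instance is admissible. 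The guiding idea is that the blob $B_v$ is ``doomed'' (entirely infected, whatever the defender does) exactly when $v$ is dominated by the attack, whereas an undominated blob can be shielded cheaply.

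For the forward direction, assume $\bar G$ has a dominating set of size at most $B$; enlarging it arbitrarily (possible since $B\le n$) fix one, $I\subseteq\bar V$, with $|I|=B$ and still dominating, and use it as the attack. Let $P$ be any protection with $|P|\le\Lambda$, and take a blob vertex $t\in B_v$ with $t\notin P$. As $v$ is dominated, pick $u\in N_{\bar G}[v]\cap I$; then $u$ is attacked, hence infected, and $u\notin P$ because protected vertices cannot have been attacked. In $G_a[V_a\setminus P]$ the edge $tu$ is present, so $t$ is infected. Hence every blob vertex outside $P$ is infected, and $f(P)\ge Mn-|P|\ge Mn-\Lambda=K$. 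Since $P$ was arbitrary, $I$ witnesses a yes-instance of \textsc{Attack-Protect}.

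For the converse, assume $\bar G$ has no dominating set of size $\le B$, and let $I$ be any attack with $|I|=B$. Then $I\cap\bar V$ has size $\le B$ and so is not a dominating set, hence there is $v^\star\in\bar V$ with $N_{\bar G}[v^\star]\cap I=\emptyset$ (note $N_{\bar G}[v^\star]\subseteq\bar V$). Let the defender play $P:=N_{\bar G}[v^\star]$; this is feasible because $|P|=\deg_{\bar G}(v^\star)+1\le\Delta+1=\Lambda$ and $P\cap I=\emptyset$. In $G_a[V_a\setminus P]$ every vertex of $B_{v^\star}$ is isolated, since its only neighbours are in $N_{\bar G}[v^\star]$, so it is infected only if directly attacked, i.e. at most $|I\cap B_{v^\star}|\le B$ of them are. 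Thus the saved set has size at least $(M-B)+|P|\ge M-B+1$, so $f(P)\le|V_a|-M+B-1=Mn+n-M+B-1$, which is $<Mn-\Delta-1=K$ because $M=3n\ge n+B+\Delta+1$. So $I$ fails the ``$\forall P$'' requirement; as $I$ was arbitrary, the instance is a no-instance, completing the correctness of the reduction and hence the NP-hardness of \textsc{Attack-Protect}.

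The crux, and the reason for this specific gadget, is the undirected \emph{chaining} of the infection: a naive encoding (for instance a large pendant set on each $v\in\bar V$ together with the edges of $\bar G$) would let a single attacked vertex flood its whole connected component, making one-step domination irrelevant. Attaching each blob $B_v$ to the \emph{entire} closed neighbourhood $N_{\bar G}[v]$ is what makes a dominated blob unavoidably lost — its dominator is attacked, hence adjacent to the blob and impossible to protect — while an undominated blob stays isolatable at the small cost $\Delta+1$; taking $M=\Theta(n)$ then makes the fate of one blob decide whether the threshold $K$ is met. The only secondary point to verify is that placing attacks inside blobs cannot help the attacker, which is handled by the slack in the choice of $M$, comfortably absorbing the term $|I\cap B_{v^\star}|\le B$ above.
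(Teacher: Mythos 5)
Your reduction is correct, and it starts from the same source problem as the paper (\textsc{Dominating Set}), but the gadget is genuinely different. The paper keeps $G_a=\bar G$ untouched and instead makes the protection budget enormous, $\Lambda=|V_a|-\Phi-1$, so that exactly one non-attacked, non-protected vertex is left exposed; domination of $\bar G$ by $I$ is then equivalent to that leftover vertex being adjacent to $I$, and the threshold $K=\Phi+1$ does the rest. You go the opposite way: you blow up the graph with blobs $B_v$ attached to the closed neighbourhoods $N_{\bar G}[v]$, keep $\Lambda=\Delta+1$ small, and let the fate of a single blob (doomed whenever $v$ is dominated by the attack, isolatable at cost $\deg_{\bar G}(v)+1$ otherwise) decide whether $K=Mn-\Delta-1$ is met. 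I checked the details: the budget arithmetic works ($M=3n\ge n+B+\Delta+1$ since $B\le n$ and $\Delta\le n-1$), the defence $P=N_{\bar G}[v^\star]$ is feasible and disjoint from $I$, and attacks placed inside $B_{v^\star}$ are indeed absorbed by the slack in $M$. The paper's reduction is considerably shorter; yours buys a stronger statement: since \textsc{Dominating Set} remains NP-hard on graphs of bounded maximum degree, your construction shows \textsc{Attack-Protect} is NP-hard even when the protection budget $\Lambda$ is a constant (and on a bipartite instance graph, as $\bar V$ and the blobs are both independent sets), whereas the paper's argument inherently needs $\Lambda$ of order $|V_a|$.
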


\begin{proof}
Let us take a graph $\bar{G}=(\bar{V}, \bar{E})$ and a positive integer $B \leq |\bar{V}|$. The instance of {\sc Attack-Protect} is simply created by taking $G_a=\bar{G}$, $\Phi = B$, $\Lambda = |V_a| - \Phi - 1$ and $K = \Phi +1$. In this configuration, we have a protection budget $\Lambda$ which is exactly one less than the number of vertices that are not attacked. Thus, if all the protection budget is spent, there is only one vertex $u$ in the graph that is neither attacked nor protected. Therefore, if $u$ becomes infected after protection \textit{(i.e $f(P) = K = \Phi +1$)}, that means that the protection strategy did not manage to save one unit of budget while saving all the other vertices, meaning that the other vertices were all in direct contact with at least one attacked one \textit{(if it was not the case, one unit of budget could have been saved by protecting all the neighbors of the vertex that is not in direct contact with $I$)}. As $u$ also becomes infected, it also means that it is adjacent to one vertex in $I$. Thus, finding $I$ such that $\forall P, f(P) \geq K$ means that $I$ is a dominating set of size $B$,  which concludes the proof.
\end{proof}

\subsection{The {\sc Vaccination-Attack} problem}

In this part, we will ignore the fact that there is a protection stage at the end. This is a particular case of {\MCN} since it is equivalent to studying it  with protection budget $\Lambda=0$. We will show that the bilevel problem {\sc Vaccination-Attack} is NP-complete. The decision problem is the following:

\vspace{0.3cm}
\boxxx{
\textsc{\textbf{Vaccination-Attack}}: \\
{\sc instance}: A graph $G=(V,E)$, two non-negative integer budgets $\Omega$ and $\Phi$ such that $\Omega + \Phi \leq |V|$ and a non-negative integer $K$.
\\
{\sc question}: Is there a subset $D \subseteq V$, $|D| \leq \Omega$ such that $\forall I \subseteq V\backslash D$ with $|I| \leq \Phi$, the number of infected vertices $|V| - s(G, D, I, \emptyset)\leq K$?
}
\vspace{0.3cm}

First, we argue that in this configuration, finding the optimal attack following a given vaccination can be done in polynomial time.
\begin{lem}
{\sc Vaccination-Attack} $\in$ NP. Moreover, {\sc Attack} can be solved in polynomial time.
\label{th_VA_NP}
\end{lem}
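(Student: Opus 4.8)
The plan is to reduce both claims to one structural observation: once the vaccination $D$ is fixed, the attacker faces a trivial greedy selection problem. Applying Property~\ref{prop_node_delet} with $P=\emptyset$, for any attack $I$ the number of infected vertices equals $|V_a|-s(G_a,\emptyset,I,\emptyset)$, where $G_a=G[V\setminus D]$, and these are precisely the vertices lying in connected components of $G_a$ that meet $I$. Write $C_1,\dots,C_N$ for the connected components of $G_a$, ordered so that $|C_1|\ge\cdots\ge|C_N|$. An attack $I$ with $|I|\le\Phi$ infects exactly $\bigcup\{\,C_i : C_i\cap I\neq\emptyset\,\}$, and it can meet at most $|I|\le\Phi$ distinct components, each of which ``costs'' one unit of budget regardless of its size. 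Hence the number of infected vertices is at most $\sum_{i=1}^{\min(\Phi,N)}|C_i|$, and this bound is attained by picking one representative vertex from each of the $\min(\Phi,N)$ largest components.

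First I would use this to settle {\sc Attack}: given $D$ and $\Phi$, compute $G_a$ and its connected components via a single DFS (linear time, as already noted in the proof of Theorem~\ref{th_protect_NP}), sort the component sizes, and return $I$ consisting of one vertex from each of the $\min(\Phi,N)$ largest components. The optimality argument above is immediate, so {\sc Attack} is solved in $O(|V|+|E|+|V|\log|V|)$ time.

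Then, for {\sc Vaccination-Attack} $\in$ NP, I would take the vaccination set $D$ itself as the certificate; it has polynomial size since $|D|\le\Omega\le|V|$. To verify that $D$ is a yes-instance one must check that \emph{every} admissible attack infects at most $K$ vertices, an a priori universally quantified condition; but by the observation above the maximum number of infected vertices over all feasible $I$ is exactly $\sum_{i=1}^{\min(\Phi,N)}|C_i|$, computable in polynomial time from $G_a$. Thus the verification collapses to one polynomial-time computation followed by the single comparison $\sum_{i=1}^{\min(\Phi,N)}|C_i|\le K$, which proves membership in NP.

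The only delicate point — and it is mild here — is the optimality of the greedy attack: one must observe that infecting a component is an all-or-nothing event triggered by a single attacked vertex, so placing more than one attacked vertex inside the same component is wasteful, and that with unit infection costs the attacker is solving a pure ``select the $\Phi$ heaviest components'' problem rather than a knapsack. This is precisely the feature that fails in the weighted setting of Section~\ref{sec:undirected_wei}, where {\sc Attack} becomes NP-complete.
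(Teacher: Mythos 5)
Your proposal is correct and follows essentially the same route as the paper: apply Property~\ref{prop_node_delet} to reduce the attack phase to selecting connected components of $G_a=G[V\setminus D]$, observe that the optimal attack greedily infects one vertex in each of the $\Phi$ largest components (computable via DFS), and use this polynomial-time attack oracle both to solve {\sc Attack} and to verify the certificate $D$ for NP membership. Your added justification of why the greedy choice is optimal (all-or-nothing infection, unit costs) is a welcome elaboration of a step the paper leaves implicit, but it is not a different argument.
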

\begin{proof}
Given a vaccinated set $D$, we want to verify that all the possible subsequent attacks cannot infect more than $K$ vertices. To do that, it suffices to find the best attack, \ie, solve the Attacker optimization problem, and check whether or not it complies with the inequality. But, as we highlighted it with \hyperref[prop_node_delet]{Property \ref{prop_node_delet}}, the graph on which the attack phase takes place is $G_a = G[V \backslash D]$ and the saved vertices in the end are exactly the ones in the connected components of $G_a$ that do not contain any attacked vertex. Thus, the best attack possible given $G_a$ and budget $\Phi$ is to infect one vertex in each of the $\Phi$ largest connected components of $G_a$. This can be done in linear time using a DFS. Hence, {\sc Vaccination-Attack} $\in$ NP.
\end{proof}
In fact, this proof showed that {\sc Vaccination-Attack} is actually equivalent to another problem: finding a subset of vertices $D$ to remove from $G$ that minimizes the sum of the sizes of the $\Phi$ largest connected components in the induced subgraph. Let's call this problem \textsc{MinMax$\Phi$C}:

\vspace{0.3cm}
\boxxx{
\textbf{\textsc{MinMax$\Phi$C}}: \\
{\sc instance}: A graph $G=(V,E)$, two non-negative integer budgets $\Omega$ and $\Phi$ such that $\Omega + \Phi \leq |V|$ and a non-negative integer $K$.
\\
{\sc question}: Is there a subset $D \subseteq V$, $|D| \leq \Omega$ such that the sum of the sizes of the $\Phi$ largest connected component in $G[V \backslash D]$ is less than $K$?
}
\vspace{0.3cm}

\begin{lem}
{\sc Vaccination-Attack} and \textsc{MinMax$\Phi$C} are equivalent problems.
\end{lem}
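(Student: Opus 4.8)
The plan is to observe that this equivalence is already essentially contained in the proof of Lemma~\ref{th_VA_NP}, and to make it precise by exhibiting the identity map on instances (up to a harmless unit shift of the threshold, needed only because \textsc{MinMax$\Phi$C} is phrased with a strict inequality) as a polynomial-time many-one reduction in both directions. So the bulk of the argument is a clean characterization of the attacker's best response.

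First I would fix a vaccination set $D$ with $|D|\le\Omega$ and put $G_a=G[V\setminus D]$, with vertex set $V_a=V\setminus D$. By Property~\ref{prop_node_delet}, $s(G,D,I,\emptyset)=s(G_a,\emptyset,I,\emptyset)+|D|$, so the number of infected vertices is $|V|-s(G,D,I,\emptyset)=|V_a|-s(G_a,\emptyset,I,\emptyset)$, which by the remark following Property~\ref{prop_node_delet} equals the total size of the connected components of $G_a$ that meet $I$. Next I would compute the worst attack: letting $C_1,\dots,C_N$ be the connected components of $G_a$ ordered so that $|C_1|\ge|C_2|\ge\cdots\ge|C_N|$, any $I\subseteq V_a$ with $|I|\le\Phi$ infects exactly $\sum_{j:\,C_j\cap I\neq\emptyset}|C_j|$ vertices. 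Since one attacked vertex already infects the whole component containing it, the attacker maximizing this sum simply places one attacked vertex in each of the $\min(\Phi,N)$ largest components (a second attacked vertex inside an already-hit component is never useful), giving
\[
\max_{\substack{I\subseteq V_a\\ |I|\le\Phi}}\bigl(\text{number of infected vertices}\bigr)=\sum_{j=1}^{\min(\Phi,N)}|C_j|,
\]
which is precisely the \textsc{MinMax$\Phi$C} objective evaluated at $D$; denote it $M_\Phi(G[V\setminus D])$.

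Finally I would conclude: a universally quantified bound holds iff it holds at the maximizer, so the clause ``$\forall I\subseteq V\setminus D$ with $|I|\le\Phi$, the number of infected vertices is $\le K$'' of \textsc{Vaccination-Attack} is equivalent to $M_\Phi(G[V\setminus D])\le K$. Hence $(G,\Omega,\Phi,K)$ is a yes-instance of \textsc{Vaccination-Attack} iff there is a $D$ with $|D|\le\Omega$ and $M_\Phi(G[V\setminus D])\le K$, and since all quantities are integers, $M_\Phi(\cdot)\le K$ iff $M_\Phi(\cdot)<K+1$. Thus the identity map on $(G,\Omega,\Phi)$ together with $K\mapsto K+1$ (and, in the reverse direction, the instance of \textsc{MinMax$\Phi$C} with threshold $K$ maps to \textsc{Vaccination-Attack} with threshold $K-1$) is a polynomial-time many-one reduction in each direction, which establishes the equivalence.

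I do not expect a genuine obstacle here; the only points needing a line of care are the boundary case $N<\Phi$, where the surplus attack budget is simply unused (so the sum runs only up to $\min(\Phi,N)$), and the strict-versus-non-strict bookkeeping between the two problem statements. If one wishes, one can also note explicitly that $M_\Phi(G[V\setminus D])$ is computable in linear time via a DFS, so that the reduction is trivially polynomial and in fact preserves membership in NP, consistent with Lemma~\ref{th_VA_NP}.
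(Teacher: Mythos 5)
Your proposal is correct and follows essentially the same route as the paper, which leaves this lemma without an explicit proof because the characterization of the attacker's best response (infect one vertex in each of the $\Phi$ largest components of $G[V\setminus D]$) is already established in the proof of Lemma~\ref{th_VA_NP}; you simply spell out the resulting identity reduction. Your extra care about the strict-versus-non-strict threshold and the $N<\Phi$ boundary case is a harmless refinement of bookkeeping the paper glosses over.
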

As Shen \etal ~\cite{Shen2012} argued that \textsc{MinMaxC}, the problem that only seeks to minimize the size of the largest connected component in the residual graph, is NP-hard,  we have as a direct consequence that \textsc{MinMax$\Phi$C} is also NP-hard, which leads to the following corollaries:
\begin{cor}
{\sc Vaccination-Attack} is NP-complete.
\label{cor_VA_NP}
\end{cor}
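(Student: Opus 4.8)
The plan is to assemble the corollary from three ingredients that are essentially already in place: (i) membership of {\sc Vaccination-Attack} in NP, which is exactly Lemma~\ref{th_VA_NP}; (ii) the equivalence between {\sc Vaccination-Attack} and \textsc{MinMax$\Phi$C} established in the preceding lemma; and (iii) NP-hardness of \textsc{MinMax$\Phi$C}. Only (iii) requires any argument, and even that amounts to observing that \textsc{MinMax$\Phi$C} contains \textsc{MinMaxC} as the special case $\Phi = 1$.

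For the NP-hardness step I would give a trivial many-one reduction from \textsc{MinMaxC} to \textsc{MinMax$\Phi$C}. Given a \textsc{MinMaxC} instance $(G,\Omega,K)$ — is there $D\subseteq V$ with $|D|\le\Omega$ such that the largest connected component of $G[V\setminus D]$ has size at most $K$? — map it to the \textsc{MinMax$\Phi$C} instance $(G,\Omega,\Phi=1,K)$: the sum of the sizes of the (single) largest component is precisely the size of the largest component, so the two instances are equivalent answers. The one bookkeeping point is the admissibility condition $\Omega+\Phi\le|V|$: if $\Omega\ge|V|$ then the \textsc{MinMaxC} instance is trivially a yes-instance (take $D=V$, leaving an empty graph), so we may assume $\Omega\le|V|-1$, whence $\Omega+1\le|V|$. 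Thus \textsc{MinMax$\Phi$C} is NP-hard, using the NP-hardness of \textsc{MinMaxC} due to Shen \etal~\cite{Shen2012}.

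Chaining with the equivalence lemma transfers NP-hardness to {\sc Vaccination-Attack}: a polynomial-time algorithm for it would, through the equivalence, decide \textsc{MinMax$\Phi$C} and hence \textsc{MinMaxC}. Combined with NP membership from Lemma~\ref{th_VA_NP}, this yields that {\sc Vaccination-Attack} is NP-complete. I do not expect any real obstacle; the only thing to be careful about is that the equivalence lemma is stated at the level of decision problems carrying the same parameters $(G,\Omega,\Phi,K)$, so the reduction must be set up so that those parameters coincide on both sides — which they do by construction. Equivalently, one can simply note that \textsc{MinMax$\Phi$C} generalizes \textsc{MinMaxC}, so it inherits NP-hardness, and then invoke the equivalence lemma directly.
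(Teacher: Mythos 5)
Your proposal is correct and follows essentially the same route as the paper: NP membership via Lemma~\ref{th_VA_NP}, the equivalence with \textsc{MinMax$\Phi$C}, and NP-hardness of \textsc{MinMax$\Phi$C} inherited from \textsc{MinMaxC} (Shen \emph{et al.}) as the special case $\Phi=1$. Your explicit check of the admissibility condition $\Omega+\Phi\le|V|$ is a small bookkeeping detail the paper leaves implicit, but the argument is the same.
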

\begin{cor}
{\sc MCN} is NP-hard.
\end{cor}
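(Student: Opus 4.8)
The plan is to obtain NP-hardness of {\sc MCN} essentially for free, as an immediate restriction of {\sc Vaccination-Attack}, by exploiting the fact that the third (protection) level of the game collapses when its budget is zero. I would start from the decision version of {\MCN} as phrased in the introduction --- ``given $\Omega,\Phi,\Lambda$, a graph $G$ and an integer $K'$, is there $D$ with $|D|\le\Omega$ such that for every $I\subseteq V\setminus D$ with $|I|\le\Phi$ there is a $P\subseteq V\setminus(D\cup I)$ with $|P|\le\Lambda$ saving at least $K'$ vertices?'' --- and specialize it to $\Lambda=0$. When $\Lambda=0$ the only feasible protection is $P=\emptyset$, so the innermost $\max$ is vacuous and the condition ``there is a $P$ saving at least $K'$ vertices'' becomes simply $s(G,D,I,\emptyset)\ge K'$.

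Next I would spell out the reduction from {\sc Vaccination-Attack}. Given an instance of that problem --- a graph $G=(V,E)$, budgets $\Omega,\Phi$ with $\Omega+\Phi\le|V|$, and an integer $K$ --- I map it to the {\MCN} instance on the same graph $G$, with the same $\Omega$ and $\Phi$, with $\Lambda=0$, and with threshold $K'=|V|-K$. The only bookkeeping is the switch between counting infected and counting saved vertices: the inequality $|V|-s(G,D,I,\emptyset)\le K$ appearing in {\sc Vaccination-Attack} is equivalent to $s(G,D,I,\emptyset)\ge|V|-K=K'$. Hence a set $D$ is a yes-certificate for the {\sc Vaccination-Attack} instance if and only if it is a yes-certificate for the constructed {\MCN} instance, and the transformation is trivially polynomial (it is essentially the identity, merely setting $\Lambda=0$ and complementing the threshold).

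Since {\sc Vaccination-Attack} is NP-hard by Corollary~\ref{cor_VA_NP} (inherited, via its equivalence with {\sc MinMax$\Phi$C}, from the NP-hardness of {\sc MinMaxC} due to Shen \etal), this chain yields NP-hardness of {\sc MCN}. I would not attempt to establish a matching upper bound here: {\MCN} lives a priori at the third level of the polynomial hierarchy, and later sections argue it is in fact $\Sigma_3^p$-complete in the weighted case, so at this point only hardness is claimed. There is no real technical obstacle; the one point to be careful about is that forbidding protection is a legitimate parameter choice ($\Lambda=0$) of {\MCN} rather than a change of problem, and that the inequality directions and the ``$\le\Omega$'' budget convention match between the two decision problems --- both of which are immediate from the definitions.
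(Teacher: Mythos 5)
Your proof is correct and takes essentially the same route as the paper, which likewise observes that setting $\Lambda=0$ turns an {\sc MCN} instance into the given {\sc Vaccination-Attack} instance and invokes Corollary~\ref{cor_VA_NP}. The only difference is that you explicitly complement the threshold ($K'=|V|-K$) to translate between counting infected and saved vertices, a bookkeeping detail the paper elides by simply ``taking the same $K$.''
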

\begin{proof}
Given an instance of {\sc Vaccination-Attack}, there is a corresponding instance of MCN by taking the same $G, \Omega, \Phi, K$ and by setting $\Lambda = 0$.
\end{proof}

\section{Undirected graphs: the weighted case}
\label{sec:undirected_wei}

In this section, we study the version of MCN presented in problem\hyperref[MCN_formulation]{~\eqref{MCN_formulation}} restricted to undirected graphs. We will use the subscript $w$ to denote the weighted version, MCN$_w$, as well as for its subgames. In this problem, given a graph $G=(V,E)$, each vertex $v \in V$ is associated with a benefit $b_v$ and  cost parameters $\hat{c}_v, h_v$ and $c_v$, respectively the cost of vaccinating, attacking and protecting vertex $v$. First, note that the NP-completeness of {\sc Protect$_w$} is immediate from the previous section. 

Having introduced costs and benefits, our game and its subgames are intimately related to \textit{Knapsack problems}, which we will use to demonstrate all of our complexity results in this part.  We will start by highlighting the direct relationship between {\sc Attack}$_w$ and {\sc Knapsack}, which will get us the NP-completeness of this problem. Then, we will focus on the two bilevel sub-problems {\sc Vaccination-Attack}$_w$ and {\sc Attack-Protect}$_w$ and prove they are $\Sigma_2^p$-complete thanks to a \textit{Knapsack Interdiction problem}. To conclude, we show that MCN$_w$ is $\Sigma_3^p$-complete. We will observe that the introduction of non-unitary parameters offers sufficient flexibility to go a level up in the polynomial hierarchy in comparison with the unitary undirected cases.

\subsection{The {\sc Attack$_w$} problem}

In the attack phase, the vaccination already took place so we effectively work on $G_a$, which is the result of the deletion of the vaccinated vertices from the original graph. We are given a non-negative attack budget $\Phi$, and as there is no protection phase afterwards, we set  $\Lambda = 0$. The goal is thus to harvest the most benefit possible by infecting vertices subject to a budget limit. The decision version of the problem is then:

\vspace{0.3cm}
\boxxx{
\textbf{\textsc{Attack$_w$}}: \\
{\sc instance}: An undirected graph $G_a = (V_a, E_a)$, a non-negative integer cost $h_v$ and value $b_v$ for each vertex $v \in V$, a non-negative integer budget $\Phi$, and a non-negative integer number $K$. \\
{\sc question}: Is there a subset of vertices $I \subseteq V_a$ to attack, with cost $\sum_{v \in I} h_v \leq \Phi$ such that the sum of the benefits of the resulting infected vertices in $G_a$ is greater or equal to $K$?
}
\vspace{0.3cm}

To make evident the NP-completeness of the problem, we simply state the decision version of the \textit{Knapsack problem}, one of the Karp's $21$ NP-complete problems \cite{Karp1972}: 

\vspace{0.3cm}
\boxxx{
\textbf{\textsc{Knapsack}}: \\
{\sc instance}: Finite set $U$, for each $u \in U$, a positive integer size $a_u \in \mathbb{N}$ and a positive integer profit $p_u \in \mathbb{N}$, and two positive integers $B$ and $\bar{K}$. \\
{\sc question}: Is there a subset $U' \subseteq U$ such that $\sum_{u \in U'} a_u \leq B$ verifying $\sum_{u \in U'} p_u \geq \bar{K}$?
}
\vspace{0.3cm}

\begin{thm}
\textsc{Attack$_w$} is  equivalent to \textsc{Knapsack}.
\label{Attack_w_KP}
\end{thm}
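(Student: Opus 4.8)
The plan is to establish the equivalence by reducing in both directions, showing that \textsc{Knapsack} $\leq_p$ \textsc{Attack$_w$} and \textsc{Attack$_w$} $\leq_p$ \textsc{Knapsack}, and noting that both are in NP. Membership of \textsc{Attack$_w$} in NP is immediate: given a proposed attack set $I$ with $\sum_{v\in I}h_v\leq\Phi$, one computes the connected components of $G_a$ by DFS in linear time and sums $b_v$ over every component meeting $I$; this certifies whether the total infected benefit is at least $K$. Membership of \textsc{Knapsack} in NP is classical.

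For the direction \textsc{Knapsack} $\leq_p$ \textsc{Attack$_w$}, given a \textsc{Knapsack} instance $(U,\{a_u\},\{p_u\},B,\bar K)$ I would build $G_a$ as the edgeless graph on vertex set $V_a=U$ (so every vertex is its own connected component), with $h_u=a_u$, $b_u=p_u$, $\Phi=B$, and $K=\bar K$. Since $E_a=\emptyset$, attacking a set $I$ infects exactly the vertices of $I$ and nothing else, so the total infected benefit is $\sum_{u\in I}p_u$ subject to $\sum_{u\in I}a_u\leq B$; the \textsc{Attack$_w$} instance is a yes-instance iff the \textsc{Knapsack} instance is. For the converse direction \textsc{Attack$_w$} $\leq_p$ \textsc{Knapsack}, given $G_a$ I would first contract each connected component of $G_a$ to a single item: let $C_1,\dots,C_m$ be the components, and for each $C_j$ define a \textsc{Knapsack} item with profit $p_j=\sum_{v\in C_j}b_v$ and size $a_j=\min_{v\in C_j}h_v$ (the cheapest vertex sufficient to infect the whole component), with knapsack capacity $B=\Phi$ and target $\bar K=K$. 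The point is that an optimal attack never needs to hit a component more than once, so choosing which components to infect is exactly choosing a subset of items; a yes for \textsc{Knapsack} gives a yes for \textsc{Attack$_w$} and vice versa.

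The main technical obstacle is the converse reduction, specifically justifying that the per-component cost can be taken as $\min_{v\in C_j}h_v$: one must argue that an optimal infecting set picks at most one vertex per component (hitting a component with a second vertex wastes budget since the whole component is already infected) and that among the vertices of a chosen component the cheapest one is always a valid choice. This is a short exchange argument but needs to be stated carefully, and one should also handle degenerate cases (components where even the cheapest vertex exceeds the remaining budget, zero-profit components, etc.). A cleaner alternative that sidesteps the optimization subtlety is to phrase the correspondence directly at the level of feasible solutions: any attack $I$ induces the item set $\{j : C_j\cap I\neq\emptyset\}$ whose total size is at most $\sum_{v\in I}h_v\leq\Phi$ and whose total profit equals the infected benefit of $I$; conversely any item set $S$ with $\sum_{j\in S}a_j\leq\Phi$ lifts to the attack $I=\{\,\arg\min_{v\in C_j}h_v : j\in S\,\}$ achieving the same profit, so optimal values coincide. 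Either way the reductions are polynomial-time, giving the claimed equivalence and hence, with the known NP-completeness of \textsc{Knapsack}, the NP-completeness of \textsc{Attack$_w$}.
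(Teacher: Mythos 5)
Your proposal is correct and follows essentially the same approach as the paper: the \textsc{Knapsack}-to-\textsc{Attack$_w$} direction uses the same edgeless graph with $h_v=a_v$, $b_v=p_v$, and the converse direction uses the same contraction of each connected component to an item of profit $\sum_{v\in C}b_v$ and size $\min_{v\in C}h_v$. Your additional solution-level correspondence argument is a slightly more careful justification of a step the paper states without detailed proof, but it is not a different route.
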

\begin{proof}
First, we prove that each instance of \textsc{Attack$_w$} reduces to an instance of \textsc{Knapsack}. Given an instance of \textsc{Attack$_w$}, it is straightforward to see that it is sufficient to infect the vertex $v$ with lowest infection cost $h_v$ of a given  connected component to infect the whole component and collect the benefit $b$ of each vertex included in that component. If $N(G_a)$ represents the set of  connected components of $G_a$, to each connected component $C\in N(G_a)$ we can  assign a total profit $b_C=\sum_{v\in C} b_v$ and infection cost $h_C=\min_{v\in C}h_v$. We can then straightforwardly build a \textsc{Knapsack} instance where the set  $N(G_a)$ is mapped to $U$, $a_u=h_C$  and $p_u=b_C$  for $C \in N(G_a)$, and $B=\Phi$ and $\bar{K}=K$. 

Conversely, if we start from an instance of \textsc{Knapsack}, we construct an instance of \textsc{Attack$_w$} by setting $V_a = U$, $E_a = \emptyset$, $K=\bar{K}$, $\Phi=B$ , and $\forall v \in V_a$, $h_v=a_v$, $b_v=p_v$. In this configuration, $G_a$ having no edges, the attacked vertices are exactly the infected ones in the end, and the goal of the attacker is equivalent to filling up a knapsack with limited capacity by choosing which vertices to attack.\\
Given that both \textsc{Attack$_w$}  and \textsc{Knapsack} can be reduced to each other, both problems are equivalent.
\end{proof}

Remark that given an attack $I$, finding the subsequent infected vertices can be done in linear time thanks to a DFS. Then, it suffices to sum the cost of the vertices in $I$ to verify the budget constraints and to sum the benefits associated with the infected vertices to verify that it is greater or equal to $K$. Hence, \textsc{Attack$_w$} $\in$ NP and thus: 

\begin{cor}
\textsc{Attack$_w$} on undirected graphs in weakly NP-complete, even on trivial graphs.
\end{cor}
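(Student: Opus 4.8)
The plan is to read this off almost immediately from the equivalence with \textsc{Knapsack} already proved in Theorem~\ref{Attack_w_KP}, together with the membership observation in the remark just above the corollary. Concretely, I would argue three things in turn: (i) \textsc{Attack$_w$} $\in$ NP; (ii) \textsc{Attack$_w$} is NP-hard already when the input graph has no edges; (iii) the NP-completeness is only \emph{weak}, i.e.\ \textsc{Attack$_w$} admits a pseudopolynomial algorithm.

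For (i) I would simply invoke the remark preceding the corollary: given a guessed attack $I$, one DFS on $G_a$ identifies the infected vertices, after which verifying $\sum_{v\in I}h_v\le\Phi$ and that the total benefit of the infected vertices is at least $K$ takes linear time. For (ii) I would reuse the ``conversely'' half of the proof of Theorem~\ref{Attack_w_KP}: from a \textsc{Knapsack} instance $(U,\{a_u\},\{p_u\},B,\bar K)$ one constructs in polynomial time the \textsc{Attack$_w$} instance with $V_a=U$, $E_a=\emptyset$, $h_v=a_v$, $b_v=p_v$, $\Phi=B$, $K=\bar K$, on which attacked and infected vertices coincide. Since \textsc{Knapsack} is NP-complete (Karp~\cite{Karp1972}) and the produced graph is edgeless (``trivial''), \textsc{Attack$_w$} is NP-hard even restricted to edgeless graphs; combined with (i) it is NP-complete there.

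For (iii) I would use the other direction of the equivalence in Theorem~\ref{Attack_w_KP}: any \textsc{Attack$_w$} instance reduces in polynomial time to the \textsc{Knapsack} instance obtained by contracting each connected component $C$ of $G_a$ to a single item of size $h_C=\min_{v\in C}h_v$ and profit $b_C=\sum_{v\in C}b_v$. The numbers in this \textsc{Knapsack} instance are bounded by those of the original input ($h_C\le\max_v h_v$, $b_C\le\sum_v b_v$, and the budget $\Phi$ is unchanged), so running the standard $O(|U|\,B)$ dynamic program for \textsc{Knapsack} on it yields a pseudopolynomial algorithm for \textsc{Attack$_w$}. Hence \textsc{Attack$_w$} is weakly, and (assuming $\mathrm{P}\neq\mathrm{NP}$) not strongly, NP-complete, even on trivial graphs.

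There is essentially no obstacle here beyond bookkeeping: one must check that the component-contraction reduction is polynomial and keeps the numerical parameters polynomially bounded, which is immediate, and that the edgeless graph produced in direction (ii) indeed qualifies as the ``trivial'' graph class referred to in the statement. The whole argument is a short corollary of Theorem~\ref{Attack_w_KP} plus the NP-membership remark, so I would keep it to a few sentences.
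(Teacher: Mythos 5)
Your proof is correct and follows essentially the same route as the paper: both directions of the equivalence in Theorem~\ref{Attack_w_KP} plus the known weak NP-completeness of \textsc{Knapsack}, with the edgeless construction giving hardness on trivial graphs and the component-contraction direction giving the pseudopolynomial upper bound. The paper states this in two sentences; you have merely made the same argument explicit.
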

\begin{proof}
Since it well known that \textsc{Knapsack} is weakly NP-complete, the result follows from the above theorem. Moreover, since any instance of \textsc{Knapsack} reduces to an instance of \textsc{Attack$_w$} which has no edges, \textsc{Attack$_w$} is NP-complete on trivial graphs.
\end{proof}


\subsection{The {\sc Attack-Protect}$_w$ problem}\label{subsec:undirected_wei_AP}


In the proof of \hyperref[Attack_w_KP]{Theorem \ref{Attack_w_KP}}, we highlighted how a {\sc Knapsack} instance can be directly transformed into a weighted graph with no edges. In this section, as well as in the next one, we will use a similar transformation, but add one additional \textit{root vertex} to our construction in order to build a star graph: one root vertex connected with an edge to each of the other vertices, each one representing an item $u \in U$ of the knapsack. That way, the complexity results we devise hold for trees, a very particular class of graphs where frequently  theoretically intractable problems become polynomially solvable.  

As before, the vaccination having already been done, we start from $G_a$, the graph where the vaccinated vertices have been removed.

\vspace{0.3cm}
\boxxx{
\textsc{\textbf{Attack-Protect}$_w$}: \\
{\sc instance}: A graph $G_a=(V_a,E_a)$, a non-negative integer $K$, two non-negative integer budgets $\Phi$ and $\Lambda$, $\forall v \in V_a$ two non-negative integer costs $h_v$, $c_v$ and a non-negative integer benefit $b_v$.
\\
{\sc question}: Is there a subset $I \subseteq V_a$, with cost $\sum_{v \in I} h_v \leq \Phi$ such that $\forall P \subseteq V_a\backslash I$ with cost $\sum_{v \in P}c_v \leq \Lambda$, the sum of the benefit of the saved vertices  is strictly less than $K$?
}
\vspace{0.3cm}

In order to show that {\sc Attack-Protect}$_w$ is $\Sigma_2^p$-complete, we use the \textit{Bilevel Interdiction Knapsack Problem} introduced by DeNegre \cite{DeNegre2011} and proven to be $\Sigma_2^p$-complete in \cite{Caprara2014}. In this problem, two players, a leader and a follower, can select items in the same set of objects $O$. First, the leader packs some items into her knapsack, then the follower chooses among the remaining ones. The aim of the leader is to interdict a subset of items, subject to a capacity constraint, in order to minimize the total profit of the follower. The objective of the follower is to maximize her profit, subject to a constraint capping the maximum profit obtainable by her. The decision problem is then:

\vspace{0.3cm}
\boxxx{
\textsc{\textbf{Bilevel Interdiction Knapsack (BIK)}}: \\
{\sc instance}: A set of items $O$ such that each $o \in O$ has a positive integer weight $a_o$ and a positive integer profit $p_o$, a positive integer maximum weight capacity $A$ for the leader, a positive integer maximum profit $B$ for the follower, and a positive integer $\bar{K} \leq B$.\\
{\sc question}: Is there a subset $O_l\subseteq O$ of items for the leader to select, with $\sum_{o \in O_l} a_o \leq A$, such that every subset $O_f\subseteq O\setminus O_l$ with $\sum_{o \in O_f} p_o \leq B$ that the follower can create has a total profit $\sum_{o \in O_f} p_o < \bar{K}$?
}
\vspace{0.3cm}

\begin{thm}
{\sc Attack-Protect}$_w$ is strongly $\Sigma_2^p$-complete, even if the graph is a tree.
\label{th_AP_w}
\end{thm}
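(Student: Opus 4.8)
The plan is to prove the two halves of the completeness claim separately. Membership in $\Sigma_2^p$ is the routine half: the decision question reads \emph{``there is an attack $I\subseteq V_a$ with $\sum_{v\in I}h_v\le\Phi$ such that for every protection $P\subseteq V_a\setminus I$ with $\sum_{v\in P}c_v\le\Lambda$ the total benefit of the saved vertices is $<K$''}, and for a fixed pair $(I,P)$ that benefit is evaluated in linear time (cf.\ the weighted form of Property~\ref{prop_node_delet} in its footnote): run a DFS on $G_a[V_a\setminus P]$, discard the connected components that meet $I$, and sum $b_v$ over the remaining vertices and over $P$. So the predicate is $\exists\,\forall$ with a polynomial-time kernel, and all the content is in the $\Sigma_2^p$-hardness, which I would obtain by a polynomial reduction from {\sc BIK} that moreover outputs a star, hence a tree.

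For the construction, given a {\sc BIK} instance $(O,\{a_o\}_o,\{p_o\}_o,A,B,\bar K)$ I would build the star $G_a$ with a center $r$ joined to one leaf $\ell_o$ per item $o\in O$, and set $b_r=\bar K$, $h_r=c_r=0$, $h_{\ell_o}=a_o$, $b_{\ell_o}=c_{\ell_o}=p_o$, together with $\Phi=A$, $\Lambda=B$, $K=\bar K$. The intended dictionary is: the attacker plays the leader and the protector the follower; ``benefit of saved vertices'' plays the follower's profit; and attacking leaf $\ell_o$ plays interdicting item $o$, since it costs $a_o$ of the attack budget, it removes $\ell_o$ from the protector's reach (as $P\subseteq V_a\setminus I$), and it adds nothing to the saved benefit because an attacked vertex is infected.

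The correctness proof splits in two steps. First I would show the attacker is forced to attack the center: if $r\notin I$, the protector responds with $P=\{r\}$ (feasible since $c_r=0$), which deletes all edges, so $r$ and every non-attacked leaf are saved and the saved benefit is at least $b_r=K$ — such an $I$ cannot survive the ``$\forall P$'' quantifier. Hence any witnessing $I$ is of the form $\{r\}\cup\{\ell_o:o\in O_l\}$ with $\sum_{o\in O_l}a_o=\sum_{v\in I}h_v\le\Phi=A$, i.e.\ a feasible leader choice $O_l$. Second, conditioned on $r\in I$: in $G_a[V_a\setminus P]$ the center sits in a single component containing all unprotected leaves, so the saved vertices are precisely the protected ones; a protection $P\subseteq\{\ell_o:o\notin O_l\}$ is feasible iff $\sum_{\ell_o\in P}p_o\le\Lambda=B$, and then the saved benefit is exactly $\sum_{\ell_o\in P}p_o$. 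Thus ``for every feasible $P$, saved benefit $<K$'' is literally ``for every $O_f\subseteq O\setminus O_l$ with $\sum_{o\in O_f}p_o\le B$, $\sum_{o\in O_f}p_o<\bar K$'', and running this equivalence over $I$ on one side and $O_l$ on the other shows the {\sc Attack-Protect}$_w$ instance is a yes-instance iff the {\sc BIK} instance is. Since every number introduced ($b_v$, $h_v$, $c_v$, $\Phi$, $\Lambda$, $K$) is $0$ or a number already present in the {\sc BIK} instance, the reduction keeps the data polynomially bounded, so the ``strongly'' in the statement comes down to {\sc BIK} being strongly $\Sigma_2^p$-complete.

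The step I expect to be the real obstacle is dealing with the undirected edges: because infection crosses an edge both ways, ``attack a leaf'' can never be a self-contained interdiction, so the whole reduction rests on the center gadget — benefit $K$ and zero protection cost at $r$ make ``not attacking $r$'' strictly worse for the attacker, and zero attack cost at $r$ aligns the leaf attack budget exactly with the leader's weight budget. I would need to check carefully that, once $r$ is attacked, the protector has no response smarter than protecting individual leaves, which holds here only because the star has no leaf--leaf edges, so protecting a leaf blocks no propagation. If strictly positive costs are preferred, $h_r=c_r=1$ with $\Phi=A+1$ and $\Lambda=B$ unchanged works identically.
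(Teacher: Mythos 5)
Your proposal is correct and follows essentially the same route as the paper: a reduction from {\sc BIK} onto a star whose center carries a benefit large enough to force the attacker to infect it, whose leaves encode the items with $h_{\ell_o}=a_o$ and $b_{\ell_o}=c_{\ell_o}=p_o$, so that attacking leaves simulates the leader's interdiction and protecting leaves simulates the follower's knapsack. The only difference is cosmetic — you use $b_r=\bar K$, $h_r=c_r=0$, $\Phi=A$ where the paper uses $b_r=\sum_o p_o+1$, $h_r=c_r=1$, $\Phi=A+1$ (your closing remark is exactly the paper's variant), and your choice of $b_r$ even spares you the paper's preliminary restriction to instances with $\bar K<\sum_o p_o$.
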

\begin{proof}
First, {\sc Attack-Protect}$_w$ is in  $\Sigma_2^p$ since   this decision problem is exactly of the form $\exists I \ \ \forall P \ \ Q(I,P)$, where $Q(I,P)$ is a proposition that can be evaluated in polynomial time (\ie, it verifies the attack and protection budget constraints, as well as, the benefit of the saved vertices).

Next, we prove the problem $\Sigma_2^p$-hardness. Let us begin by noting that we can restrict the instances of KIP to the ones where $\bar{K}$ and $B$ are strictly inferior to $\sum_{o \in O} p_o$, otherwise, KIP reduces to {\sc Knapsack}. This remark is used in the second part of this proof.

Starting from an instance of BIK, we construct an instance of {\sc Attack-Protect$_w$} as  follows. We first build a star graph $G_a=(V_a,E_a)$ with a root vertex $r$ and a vertex $v_o$ for each $o\in O$ linked to $r$ through an edge $(r,v_o)$. We set $b_r=\sum_{o \in O} p_o + 1$ and $h_r=c_r=1$. We also set $b_{v_o}=c_{v_o}=p_o$ and $h_{v_o}=a_o$ for each $o\in O$. See Figure~\ref{Fig:weight_AP}. Finally, we set $\Phi=A+1$, $\Lambda=B$ and $K= \bar{K}$. 


\begin{figure}[ht]
    \centering
    \begin{tikzpicture}[-,>=stealth',shorten >=0.2pt,auto,node distance=1cm,baseline={(current bounding box.north)},  scale=0.8, transform shape,
  main node/.style={circle,fill=black!15,draw,minimum size=0.7cm},spe node/.style={fill=white,draw=white}]

  \node[main node] (6) {$r$};
  \node[main node] (2) [above right=of 6] {$v_1$};
  \node[spe node] (7) [left =of 6] {$b_r=\sum_{o=1}^np_{o}+1, \ \ h_r=c_r=1$};
  \node[spe node] (8) [right=of 2] {$b_{v_1}=c_{v_1}=p_1,\ \ h_{v_1}=a_1$};
  \node[main node] (3) [right =of 6] {$v_2$};
  \node[spe node] (9) [right=of 3] {$b_{v_2}=c_{v_2}=p_2,\ \ h_{v_2}=a_2$};
  \node[spe node] (4) [below right=of 6] {$\ldots$};
  \node[main node] (5) [below=of 6] {$v_n$};
  \node[spe node] (10) [right=of 5] {$b_{v_n}=c_{v_n}=p_{n}, \ \ h_{v_n}=a_n$};

  \path[every node/.style={font=\sffamily\small}]
    (6) 
        edge node {}  (2)
        edge node {}  (3)
        edge node {}  (4)
        edge node {}  (5);
\end{tikzpicture}
    \caption{Graph reduction from BIK to {\sc Attack-Protect}$_w$ when $O=\lbrace 1,2 \ldots, n \rbrace$.}
    \label{Fig:weight_AP}
\end{figure}

Suppose first that  BIK is a \emph{Yes} instance. Then, there is a set of items $O_l\subseteq O$ of total weight $\sum_{o \in O_l} a_o\leq A$ such that for all $O_f \subseteq O \setminus O_l$ feasible for the follower, it holds $\sum_{o \in O_f} p_o \leq \bar{K}-1$. Consequently, in the {\sc Attack-Protect$_w$}, the attacker can select the subset of vertices $I=\{r\}\cup\{v_o:\ o\in O_l\}$ with a feasible attacking cost $ \sum_{v \in I} h_v=1+\sum_{o \in O_l}a_o\leq A+1=\Phi$. Now, the defender can only protect vertices in $\{v_o:\ o\notin O_l\}$ and since the central vertex of the star graph is infected, the saved vertices will be the protected ones. The aim of the defender is therefore to select the subset of vertices of maximum total benefit with respect to the protection budget $\Lambda$. This is exactly the follower's problem in BIK. Hence, since BIK is an \emph{Yes} instance, the defender (follower in BIK) cannot attain a benefit (profit in BIK) equal or greater to $K=\bar{K}$ through a feasible action. Therefore, the {\sc Attack-Protect$_w$} is a \emph{Yes} instance.


Now suppose that {\sc Attack-Protection$_w$} is a \emph{Yes} instance. Thus, there exists an attack strategy $I\subseteq V_a$ such that there is \textbf{no} feasible subset $P\subseteq V_a\setminus I$ of protected vertices leading to a total benefit greater or equal to $K$ for the defender. 
As $\Phi \geq 1$, it is obvious that the attacker  will attack at least the central vertex $r$, otherwise, the defender would pick it and achieve a benefit superior to $K$ (recall that $K=\bar{K}<  \sum_{o \in O} p_o$), contradicting  {\sc Attack-Protection$_w$} \emph{Yes} instance. Hence, the attacker is left with budget $\Phi-h_r=A$. Once the central vertex is attacked, only the other vertices subsequently protected will not be infected. Therefore, the rest of the attack budget $A$ is spent on a subset of vertices of $\{v_o\in V_a:\ o\in O\}$ and it ensures that for any $P=\{v_o\in V_a:\ o\in O\setminus I\}$ with $\sum_{v \in P} c_v=\sum_{o: v_o \in P} p_v \leq \Lambda = B$, the total benefit for the defender is $\sum_{v \in P} b_v= \sum_{o: v_o \in P} p_v \leq \bar{K}-1$.  Consequently, BIK is also a \emph{Yes} instance.

This completes the proof that {\sc Attack-Protect$_w$} is $\Sigma_2^p$-complete. Moreover, since the BIK was shown to be NP-complete even for unary encoding, we can conclude that no pseudopolynomial-time algorithm exists to solve the {\sc Attack-Protect} subgame. Since a star graph is a tree, the result stated in the theorem holds.
\end{proof}

\subsection{The {\sc Vaccination-Attack$_w$} problem}\label{subsec:undirected_wei_VA}

Using a similar reduction to the one in the proof of \hyperref[th_AP_w]{Theorem \ref{th_AP_w}}, we show that the {\sc Vaccination-Attack$_w$} on weighted graphs is $\Sigma_2^p$-complete. As in the unitary case, this is equivalent to studying MCN$_w$ problems where we set $\Lambda=0$. The decision version of the problem is:


\vspace{0.3cm}
\boxxx{
\textsc{\textbf{Vaccination-Attack}$_w$}: \\
{\sc instance}: A graph $G=(V,E)$, a non-negative integer $K$, two non-negative integer budgets $\Omega$ and $\Phi$, $\forall v \in V$ two non-negative integer costs $\hat{c}_v$, $h_v$ and a non-negative integer benefit $b_v$.
\\
{\sc question}: Is there a subset $D \subseteq V$, with cost $\sum_{v \in D}\hat{c}_v \leq \Omega$ such that $\forall I \subseteq V\backslash D$ with cost $\sum_{v \in I} h_v \leq \Phi$, the sum of the benefit of the infected vertices  is strictly less than $K$?
}
\vspace{0.3cm}



\begin{thm}
{\sc Vaccination-Attack}$_w$ is strongly $\Sigma_2^p$-complete, even if the graph is a tree.
\label{th_VA_w}
\end{thm}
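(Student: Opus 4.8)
Membership in $\Sigma_2^p$ is immediate and mirrors the previous theorem: the decision problem has the syntactic form $\exists D\ \forall I\ Q(D,I)$, where $Q$ checks the vaccination budget, the attack budget, and the total benefit of the infected vertices — all computable in polynomial time (the infected set is read off by a DFS on $G[V\setminus D]$ after Property \ref{prop_node_delet}). So the work is the $\Sigma_2^p$-hardness reduction, again from \textsc{BIK}, and I would reuse the star-graph gadget of Figure \ref{Fig:weight_AP} almost verbatim, only shifting the roles by one level: the leader of \textsc{BIK} now corresponds to the \emph{vaccinator} and the follower to the \emph{attacker}, with the root vertex $r$ acting as the ``forced'' vertex that transmits infection to the whole component unless it is vaccinated.

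Concretely, starting from a \textsc{BIK} instance with items $O$, weights $a_o$, profits $p_o$, leader capacity $A$, follower cap $B$, and target $\bar K$ (we may again assume $\bar K, B < \sum_{o\in O}p_o$, else \textsc{BIK} collapses to \textsc{Knapsack}), I build the star $G=(V,E)$ with root $r$ and leaves $v_o$, $o\in O$. Set $b_r = \sum_{o\in O}p_o + 1$, and $\hat c_r = h_r = 1$; for each leaf set $b_{v_o} = p_o$, $\hat c_{v_o} = h_{v_o} = a_o$ (note there is no protection cost here, since $\Lambda=0$). Put $\Omega = A+1$, $\Phi = B$, and the threshold $K = \bar K$. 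The intuition: with $\Omega \ge 1$ the vaccinator is forced to vaccinate $r$ — otherwise the attacker infects $r$ and collects at least $b_r > \sum_o p_o > \bar K$, so the instance would be a \emph{No} instance; once $r$ is vaccinated (cost $1$), the graph $G[V\setminus D]$ falls apart into isolated leaves, so the attacker's problem becomes exactly ``pick a subset of the remaining leaves of weight $\le \Phi = B$ maximizing total benefit $\sum p_o$'' — precisely the follower's knapsack in \textsc{BIK} — while the leaves the vaccinator additionally removed with her remaining budget $A$ are exactly the leader's interdicted items.

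Then I would argue both directions. If \textsc{BIK} is a \emph{Yes} instance with leader set $O_l$, let $D = \{r\}\cup\{v_o : o\in O_l\}$, feasible since $\hat c(D) = 1 + \sum_{o\in O_l}a_o \le 1+A = \Omega$; for every feasible attack $I\subseteq\{v_o:o\notin O_l\}$ with $\sum_{o: v_o\in I}a_o \le \Phi = B$, the infected benefit is $\sum_{o:v_o\in I}p_o \le \bar K - 1 < K$, so \textsc{Vaccination-Attack}$_w$ is \emph{Yes}. Conversely, if \textsc{Vaccination-Attack}$_w$ is \emph{Yes} with witness $D$, then $r\in D$ (else the attack $I=\{r\}$ infects benefit $> K$), so $D\setminus\{r\}$ corresponds to a leader selection of weight $\le A$; feasibility of \textsc{BIK}'s target follows because any follower set translates to a feasible attack, whose infected benefit must be $< K = \bar K$. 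The main obstacle — and it is a mild one — is the bookkeeping around the forced vaccination of $r$ and checking that the attacker indeed gains nothing from attacking a vaccinated vertex or from leaving budget unused (it never helps, since benefits are non-negative and the components are singletons); everything else is a direct translation of the \textsc{BIK} correspondence. Finally, since \textsc{BIK} is NP-hard even under unary encoding, the reduction uses only polynomially bounded integers, giving \emph{strong} $\Sigma_2^p$-completeness; and the gadget is a star, hence a tree.
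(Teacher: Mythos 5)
Your overall strategy is the paper's: membership in $\Sigma_2^p$ from the $\exists D\,\forall I$ form, and hardness via a star-graph reduction from BIK in which the root $r$ is forced into the vaccination set and the leaves encode the items, with $\Omega=A+1$, $\Phi=B$, $K=\bar K$. However, there is a genuine error in your cost assignment for the leaves. In the paper's definition of BIK the follower's feasibility constraint is $\sum_{o\in O_f} p_o\le B$; that is, the follower's capacity is measured in \emph{profits}, while the weights $a_o$ constrain only the leader. You set the attacker's infection cost to $h_{v_o}=a_o$, so after $r$ is vaccinated the attacker's feasible sets are those with $\sum_{o:\,v_o\in I}a_o\le B$, which is not the follower's feasible region in BIK. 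The two problems can have opposite answers: with a single item having $a_1=100$, $p_1=1$, $A=0$, $B=5$, $\bar K=1$, BIK is a \emph{No} instance (the follower may take the item since $p_1\le B$ and reaches profit $\bar K$), whereas your constructed instance is a \emph{Yes} instance (the attacker cannot afford the leaf because $a_1>\Phi=B$). Hence your claim that ``the attacker's problem is precisely the follower's knapsack in BIK'' fails, and the step ``any follower set translates to a feasible attack'' in your converse direction is false.

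The fix is exactly what the paper does: set $h_{v_o}=p_o$ while keeping $\hat c_{v_o}=a_o$, so that the attack budget $\Phi=B$ reproduces the follower's profit cap and the defender's residual budget $A$ reproduces the leader's weight cap. With that correction the remainder of your argument --- forcing $r\in D$ via a sufficiently large $b_r$ (the paper's choice $b_r=\bar K$ also suffices, since an unvaccinated $r$ lets the attacker infect the whole star), the budget accounting, the observation that a star is a tree, and strong completeness from the unary-encoded hardness of BIK --- matches the paper's proof.
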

\begin{proof}
As before, {\sc Vaccination-Attack}$_w$ is in  $\Sigma_2^p$ since   this decision problem is exactly of the form $\exists D \ \ \forall I \ \ Q(D,I)$  is a proposition that can be evaluated in polynomial time.

Now, we establish the problem $\Sigma_2^p$-hardness.
We start from an instance of BIK, defined in the previous section, and we then construct an instance of  {\sc Vaccination-Attack$_w$} as follows. First, we build a star graph $G=(V,E)$ with a central vertex $r$ and $|O|$ leaf vertices $v_o$ with $o\in O$. See Figure~\ref{Fig:weight_VA}. We add an edge $(r,v_o)$ for each such leaf vertex. The central vertex has benefit $b_{r}=\bar{K}$ and costs $\hat c_{r}=h_{r}=1$. Each leaf vertex $v_o$ with $o\in O$ has a benefit $b_{v_o}=p_o$, cost for the defender $\hat c_{v_o}=a_o$ and cost for the attacker $h_{v_o}=p_o$. Finally, we fix $\Omega=A+1$, $\Phi=B$ and $K=\bar{K}$.

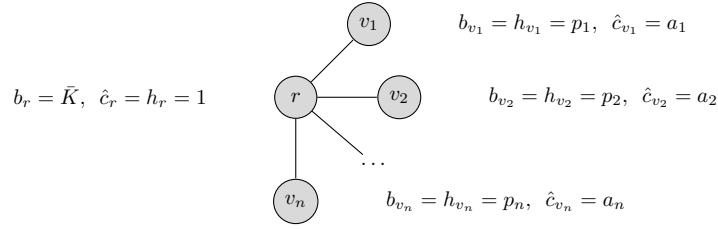
\begin{figure}[ht]
    \centering
    \begin{tikzpicture}[-,>=stealth',shorten >=0.2pt,auto,node distance=1cm,baseline={(current bounding box.north)},  scale=0.8, transform shape,
  main node/.style={circle,fill=black!15,draw,minimum size=0.7cm},spe node/.style={fill=white,draw=white}]

  \node[main node] (6) {$r$};
  \node[main node] (2) [above right=of 6] {$v_1$};
   \node[spe node] (7) [left =of 6] {$b_r=\bar{K},\ \ \hat{c}_r=h_r=1$};
  \node[spe node] (8) [right=of 2] {$b_{v_1}=h_{v_1}=p_1,\ \ \hat{c}_{v_1}=a_1$};
  \node[main node] (3) [right =of 6] {$v_2$};
  \node[spe node] (9) [right=of 3] {$b_{v_2}=h_{v_2}=p_2,\ \ \hat{c}_{v_2}=a_2$};
  \node[spe node] (4) [below right=of 6] {$\ldots$};
  \node[main node] (5) [below=of 6] {$v_n$};
  \node[spe node] (10) [right=of 5] {$b_{v_n}=h_{v_n}=p_n, \ \ \hat{c}_{v_n}=a_n$};

  \path[every node/.style={font=\sffamily\small}]
    (6) 
        edge node {}  (2)
        edge node {}  (3)
        edge node {}  (4)
        edge node {}  (5);
\end{tikzpicture}
    \caption{Graph reduction from BIK to {\sc Vaccination-Attack}$_w$ when $O=\lbrace 1,2 \ldots, n \rbrace$.}
    \label{Fig:weight_VA}
\end{figure}

This is exactly the setting of BIK and one can easily complete the proof of equivalence of the two decision instances following a path very similar to the proof of \hyperref[th_AP_w]{Theorem \ref{th_AP_w}}.

Finally, the reduction used a star graph which is a particular case of a tree. Hence, the problem is $\Sigma_2^p$-complete even on trees.
\end{proof}


\subsection{The MCN$_w$ problem}

In this section we show that the decision problem  MCN$_w$  is $\Sigma_3^p$-complete.

\vspace{0.3cm}
\boxxx{
\textsc{\textbf{MCN}$_w$}: \\
{\sc instance}: A graph $G=(V,E)$, a non-negative integer $K$, three non-negative integer budgets $\Omega$, $\Phi$ and $\Lambda$, $\forall v \in V$ three non-negative integer costs $\hat{c}_v$, $h_v$  and $c_v$, and a non-negative integer benefit $b_v$.
\\
{\sc question}: Is there a subset $D \subseteq V$, with cost $\sum_{v \in D} \hat{c}_v \leq \Omega$ such that $\forall I \subseteq V\backslash D$ with cost $\sum_{v \in I} h_v \leq \Phi$, there is $P \subseteq V \backslash I$ with cost  $\sum_{v \in D} c_v \leq \Lambda$ such that the sum of the benefit of the saved  vertices  is greater or equal to $K$?
}
\vspace{0.3cm}


In order to achieve our ultimate goal, we take the \emph{3-Alternating Quantified Satisfiability} problem ($B_3 \cap 3CNF$), known to be $\Sigma_3^p$-complete problem~\cite{Meyer1973,Wrathall1976}, in order to prove that the generalization of BIK to a trilevel, the \emph{Trilevel Interdiction Knapsack} (TIK), is $\Sigma_3^p$-complete. Then, TIK is used to demonstrate that MCN$_w$ is $\Sigma_3^p$-complete.

\vspace{0.3cm}
\boxxx{
\textbf{\textsc{3-Alternating Quantified Satisfiability}} ($B_3 \cap 3CNF$):\\
{\sc instance}: Disjoint non-empty sets of variables $X$, $Y$ and $Z$, and a Boolean expression  $E$ over $U = X \cup Y \cup Z$ in conjunctive normal form with at most 3 literals in each clause $c \in C$. \\
{\sc question}: Is there a 0-1 assignment for $X$ so that for all 0-1 assignments of $Y$ there is a 0-1 assignment of $Z$ such that $E$ is satisfied?
}
\vspace{0.3cm}

\vspace{0.3cm}
\boxxx{
\textsc{\textbf{Trilevel Interdiction Knapsack (TIK)}}: \\
{\sc instance}: A set of items $O$ such that each $o \in O$ has two  positive integer weights $a'_o$ and $a_o$ and a positive integer profit $p_o$, two positive integer maximum weight capacities $A'$ and $A$, a positive integer maximum profit $B$ and a positive integer goal $\bar{K} \leq B$.\\
{\sc question}: Is there a subset $O_1\subseteq O$ of items, with $\sum_{o \in O_1} a'_o \leq A'$, such that every subset $O_2\subseteq O\setminus O_1$, with $\sum_{o \in O_2} a_o \leq A,$ there is a subset $O_3 \subseteq O \setminus O_2$, with  $\sum_{o \in O_3} p_o \leq B$,  such that $\sum_{o \in O_3} p_o \geq \bar{K}$ holds?
}
\vspace{0.3cm}

\begin{thm}
TIK is $\Sigma_3^p$-complete.
\end{thm}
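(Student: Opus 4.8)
The plan is to first place TIK in $\Sigma_3^p$ and then prove $\Sigma_3^p$-hardness by a reduction from $B_3\cap 3CNF$. Membership is immediate: the defining condition has the shape $\exists O_1\ \forall O_2\ \exists O_3\ Q(O_1,O_2,O_3)$ with each $O_i\subseteq O$ of polynomial size and $Q$ checking only the three weight/profit inequalities and the threshold, all in polynomial time. Note that the problem BIK used for \textsc{Attack-Protect}$_w$ cannot be reused directly here, because in TIK the \emph{second} level is also an interdictor (it deletes items rather than filling a knapsack), so a genuinely three-level gadget is required.

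For hardness, given an instance $(X,Y,Z,E)$ of $B_3\cap 3CNF$ with clause set $C$, I would encode all weights and profits as integers written in a fixed base $\beta$ large enough that no carry ever occurs, with one digit position for each variable of $U=X\cup Y\cup Z$, one for each clause of $C$, and a few auxiliary positions used by the gadgets described below. For each $v\in U$ create two items $v^{+},v^{-}$; the profit of $v^{+}$ (resp.\ $v^{-}$) has a $1$ in position $v$ and a $1$ in the position of every clause containing the literal $v$ (resp.\ $\neg v$). For each clause $c$ add two slack items of profit $\beta^{\mathrm{pos}(c)}$. Setting $B=\bar K$ (so the last level must hit its target exactly), take the target to be $1$ in every variable position, $3$ in every clause position, and prescribed values in the auxiliary positions. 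Then the variable target forces exactly one of $\{v^{+},v^{-}\}$ to be selected, while a clause position — which collects between $0$ and $3$ from the selected literal items and $0$ to $2$ from its two slack items — can reach $3$ precisely when at least one of the clause's literals is selected with the satisfying polarity. Hence, once the partial assignment visible to the last level is fixed, the last level reaches its goal iff that assignment extends to a model of $E$, which is exactly the inner $\exists Z$.

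The quantifiers are matched to the levels through the availability structure $O_2\subseteq O\setminus O_1$, $O_3\subseteq O\setminus O_2$. The two items of each $z\in Z$ get huge $a'$- and $a$-weights, so levels $1$ and $2$ cannot delete them and level $3$ picks $Z$ freely. For $x\in X$, each $x^{\pm}$ costs $1$ for level $1$ and level $1$'s budget is set to $|X|$, so it must protect exactly one literal of each $X$-variable — otherwise level $2$ could delete both literals of an unprotected variable and make the target unreachable — which is how level $1$ chooses $X$; a dominance argument (deleting the complement of each protected $X$-literal only shrinks level $3$'s options, while leaving it available would let level $3$ override $X$) shows that WLOG level $2$ deletes all those complements, forcing level $3$ into level $1$'s assignment. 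The $Y$-variables are handled in the mirror way, but here one must ensure the adversarial level $2$ genuinely ranges over all $Y$-assignments: deleting fewer $Y$-literals than intended only helps level $3$, but deleting \emph{both} literals of some $Y$-variable would make the target unreachable for free. I would counter this by introducing auxiliary ``guard'' items of very large profit that level $3$ could use to win outright unless level $2$ removes them, choosing $A'$, $A$ and the guard profits so that after these forced removals level $2$ has exactly enough residual budget to delete one $Y$-literal per $Y$-variable and no more, and then finishing with a case analysis over level $2$'s feasible deletions.

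The main obstacle is precisely this last step: making an adversarial interdictor at the middle level behave like a universal quantifier over $Y$-assignments, i.e.\ rendering over-deletion strictly unprofitable for level $2$ without curtailing its ability to fix either polarity of each $Y$-variable, and simultaneously pinning the three budgets $A',A,B$ together with the guard gadgets so that all three levels are forced into their canonical behaviour — all within a polynomial-size instance. Once this balancing is achieved, TIK is a yes-instance iff $\exists X\ \forall Y\ \exists Z:E$, and $\Sigma_3^p$-completeness follows.
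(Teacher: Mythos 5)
Your overall strategy is the same as the paper's: membership is read off the $\exists\,\forall\,\exists$ form, and hardness comes from $B_3\cap 3CNF$ via a Subset-Sum--style digit encoding with one position per variable and per clause, literal items carrying a $1$ in their variable digit and in the digits of the clauses they satisfy, and per-clause slack items so that a clause digit hits its target iff some literal of that clause is selected with satisfying polarity. Your membership argument and the ``last level $\exists Z$'' part of the reduction are fine.

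However, there is a genuine gap exactly where you flag it, and the mechanism you propose cannot close it. You try to make the middle level range over $Y$-assignments using a \emph{scalar} budget plus high-profit guard items, arguing that over-deletion should be rendered ``strictly unprofitable.'' A scalar budget with unit costs on $Y$-literal items cannot distinguish the intended allocation (one deletion per $Y$-variable) from the fatal one (two deletions on one $Y$-variable, zero on another): both cost the same, and the latter makes that variable's digit of $\bar K$ unreachable for level~3, so level~2 wins outright. Guard items do not repair this, and since in your setup $B=\bar K$ and the guards have ``very large profit,'' level~3 could not even include a guard without overshooting $B$, so the guards exert no pressure on level~2 in the first place. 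The paper's fix is structural rather than economic: the $a$-weights of the two items of each $Y$-variable each carry a $1$ in that variable's digit, and the capacity $A$ has a $1$ in every $Y$-digit (with $0$ in all more significant digits and no carries possible), so taking both literals of any $Y$-variable into $O_2$ is \emph{infeasible}, not merely unprofitable; deleting fewer than one per variable only helps $O_3$, so the adversarial $O_2$'s reachable set is exactly the set of $Y$-assignments. The same digit-indexed capacities ($A'$ supported on the $X$-digits, $A$ allowing $2$ in each $X$-digit, a ``forbidden'' digit blocking clause items from $O_1,O_2$) replace your ad hoc ``huge weight'' and budget-counting arguments throughout. Without this (or an equivalent) device your reduction does not establish the universal quantification over $Y$, so the hardness direction is incomplete.
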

\begin{proof}
The statement of TIK is of the form $\exists O_1 \ \ \forall O_2 \ \  \exists O_3 \ \ Q(O_1,O_2,O_3)$, directly implying that it is in $\Sigma_3^p$.

Next, we use a reduction from the $B_3 \cap 3CNF$ which is very much in line with the reduction from 3-SAT to Subset Sum presented in~\cite[Theorem 34.15]{IntrAlg2009}:
\begin{itemize}
    \item  For each variable $u \in U$, we create two items $o_u$ and $o_{\bar{u}}$, one for each possible 0-1 assignment of $u$. We designate by $O_U = \lbrace o_u: u \in U \rbrace$ and $O_{\bar{U}} = \lbrace o_{\bar{u}}: u \in U \rbrace$ the two sets of items of size $\vert U \vert$. 
    \item For each clause $c \in C$, \emph{(i)}  if $c$ has 1 literal, we create one item $o^1_c$, \emph{(ii)} if $c$ has 2 literals, we create two items $o^1_c$ and $o^2_c$, and \emph{(iii)} if $c$ has 3 literals, we create three items $o^1_c$, $o^2_c$ and $o^3_c$. We designate by $O_C$ the set of items associated with $C$.
    \item Weights, profits, maximum capacities, maximum profit and goal will be given by digits of size $\vert X \vert +\vert Y \vert +\vert Z \vert +\vert C\vert +1$ in base 10. Hence, each digit position is labeled by a variable or a clause: the first $\vert C \vert$ positions (least significant numbers) are labeled by the clauses, then the next $\vert X \vert$ positions are labeled by the variables $X$, then the next $\vert Y \vert$ positions are labeled  by the variables $Y$, then the next $\vert Z \vert$ positions are labeled by the variables $Z$, and, finally, the last position is labeled as \emph{forbidden}.
        \begin{itemize}
            \item  For each  $u \in U$, the two corresponding items $o_u$ and $o_{\bar{u}}$ have weights and profits as described next.     The weights and profits $a'_{o_u}$, $a_{o_u}$, $p_{o_u}$, $a'_{o_{\bar{u}}}$,  $a_{o_{\bar{u}}}$ and $p_{o_{\bar{u}}}$ have digit 1 in the position labeled by the variable $U$ and 0 in the positions labeled by other variables; the remaining digits are zero for $a'_{o_u}$, $a_{o_u}$, $a'_{o_{\bar{u}}}$ and  $a_{o_{\bar{u}}}$. In particular, for all $o \in O_{U} \cup O_{\bar{U}}$, it holds $a'_{o_u}=a_{o_u}$ and $a'_{o_{\bar{u}}}=a_{o_{\bar{u}}}$.
            
            If the literal $u$ appears in clause $c \in C$, then $p_{o_u}$ has digit 1 in the position labeled as $c$, and 0 otherwise. Similarly, if the literal $\neg u$ appears in clause $c \in C$,  $p_{o_{\bar{u}}}$ has digit 1 in the position labeled by $c$, and 0 otherwise. Finally, for all $o \in O_{U} \cup O_{\bar{U}}$, $p_{o_u}$ and $p_{o_{\bar{u}}}$  have digit 0 in the position labeled as forbidden. 
            \item For each $c \in C$, the associated items have weights and profits as follows. If $c$ has one literal, $a'_{o^1_c}$ and $a_{o^1_c}$ have 1 in the position labeled as forbidden and 0 elsewhere; $p_{o^1_c}$ has digit 3 in the position labeled as $c$ and 0 elsewhere.  If $c$ has two literals, $a'_{o^1_c}$, $a'_{o^2_c}$, $a_{o^1_c}$ and $a_{o^2_c}$ have 1 in the position labeled as forbidden and 0 elsewhere; $p_{o^1_c}$ and $p_{o^2_c}$ have digit 3 and 2, respectively, in the position labeled as $c$ and 0 elsewhere.    If $c$ has three literals, $a'_{o^1_c}$, $a'_{o^2_c}$,  $a'_{o^3_c}$, $a_{o^1_c}$,  $a_{o^2_c}$ and $a_{o^3_c}$ have 1 in the position labeled as forbidden and 0 elsewhere; $p_{o^1_c}$,  $p_{o^2_c}$ and $p_{o^3_c}$ have digit 3, 2 and 1, respectively, in the position labeled as $c$ and 0 elsewhere.  
            \item The weight capacity $A'$ has 1s for all digits with labels in $X$ and 0s elsewhere. Hence, $O_1$ cannot contain items from $\lbrace o_u, o_{\bar{u}}: u \in Z \cup Y \rbrace \cup O_C$.
            \item The weight capacity $A$  has 1s for all digits with labels in $Y$, 2s for all digits with labels in $X$  and 0s elsewhere. Hence, $O_2$ cannot contain items from $\lbrace o_u, o_{\bar{u}}: u \in Z \rbrace \cup O_C$.
            \item The maximum profit $B$ has 1s for all digits with labels in $X \cup Z$, 2s for all digits with labels in $Y$, 4s for all digits with labels in $C$, and 0s elsewhere. Hence, $O_3$ can take any item (as long as not interdicted by $O_2$).
            \item We make $\bar{K}$ is equal to $B$, except for the digits with labels $Y$, where it is 1. 
        \end{itemize}
\end{itemize}

See Figure~\ref{B3_tik} for an illustration of our reduction. 

\begin{figure}
    \centering
    \resizebox{0.58\textwidth}{!}{
    \begin{tabular}{ll|c|c:c:cc|ccc}
     $O$   &    &        & $Z$ & $Y$ & \multicolumn{2}{c|}{$X$} & \multicolumn{3}{c}{$C$} \\  
             &       & forbidden & $d$ & $c$ & $b$ & $a$ & $c_1$ & $c_2$ & $c_3$ \\ \midrule 
    $o_a$ & $a'_{o_a}=a_{o_a}$           & 0         & 0   & 0   & 0   & 1   &  0    &  0    & 0 \\
       & $p_{o_a}$           & 0         & 0   & 0   & 0   & 1   &  1    &  0    & 1\\
   $o_{\bar{a}}$ &$a'_{o_{\bar{a}}}=a_{o_{\bar{a}}}$  & 0         & 0   & 0   & 0   & 1   &  0    &  0    & 0\\
   &$p_{o_{\bar{a}}}$  & 0         & 0   & 0   & 0   & 1   &  0    &  1    & 1\\
   $o_b$  &$a'_{o_b}=a_{o_b}$          & 0         & 0   & 0   & 1   & 0   &  0    &  0    & 0\\
     &$p_{o_b}$          & 0         & 0   & 0   & 1   & 0   &  1    &  0    & 1\\
    $o_{\bar{b}}$&$a'_{o_{\bar{b}}}=a_{o_{\bar{b}}}$ & 0         & 0   & 0   & 1   & 0   &  0    &  0    & 0\\
    &$p_{o_{\bar{b}}}$ & 0         & 0   & 0   & 1   & 0   &  0    &  1    & 0\\
$o_c$    &$a_{o_c}=a'_{o_c}$           & 0         & 0   & 1   & 0   & 0   &  0    &  0    & 0\\ 
 &$p_{o_c}$           & 0         & 0   & 1   & 0   & 0   &  0    &  0    & 1\\ 
$o_{\bar{c}}$    &$a_{o_{\bar{c}}}=a'_{o_{\bar{c}}}$ & 0         & 0   & 1   & 0   & 0   &  0    &  0    & 0\\
 &$p_{o_{\bar{c}}}$ & 0         & 0   & 1   & 0   & 0   &  1    &  0    & 1\\
$o_d$    &$a'_{o_d}=a_{o_d}$         & 0         & 1   & 0   & 0   & 0   &  0    &  0    & 0\\
 &$p_{o_d}$         & 0         & 1   & 0   & 0   & 0   &  0    &  1    & 0\\
$o_{\bar{d}}$    &$a'_{o_{\bar{d}}}=a_{o_{\bar{d}}}$  & 0         & 1   & 0   & 0   & 0   &  0    &  0    & 0\\
 &$p_{o_{\bar{d}}}$  & 0         & 1   & 0   & 0   & 0   &  0    &  0    & 0\\
    $o^1_{c_1}$ & $a'_{o^1_{c_1}}$  &  1 & 0 & 0 & 0 & 0 & 0 & 0 & 0\\
                & $a_{o^1_{c_1}}$  &  1 & 0 & 0 & 0 & 0 & 0 & 0 & 0\\
                & $p_{o^1_{c_1}}$  &  0 & 0 & 0 & 0 & 0 & 3 & 0 & 0\\
    $o^2_{c_1}$ & $a'_{o^2_{c_1}}$  &  1 & 0 & 0 & 0 & 0 & 0 & 0 & 0\\
                & $a_{o^2_{c_1}}$  &  1 & 0 & 0 & 0 & 0 & 0 & 0 & 0\\
                & $p_{o^2_{c_1}}$  &  0 & 0 & 0 & 0 & 0 & 2 & 0 & 0\\
    $o^3_{c_1}$ & $a'_{o^3_{c_1}}$  &  1 & 0 & 0 & 0 & 0 & 0 & 0 & 0\\
                & $a_{o^3_{c_1}}$  &  1 & 0 & 0 & 0 & 0 & 0 & 0 & 0\\
                & $p_{o^3_{c_1}}$  &  0 & 0 & 0 & 0 & 0 & 1 & 0 & 0\\
    $o^1_{c_2}$ & $a'_{o^1_{c_2}}$  &  1 & 0 & 0 & 0 & 0 & 0 & 0 & 0\\
                & $a_{o^1_{c_2}}$  &  1 & 0 & 0 & 0 & 0 & 0 & 0 & 0\\
                & $p_{o^1_{c_2}}$  &  0 & 0 & 0 & 0 & 0 & 0 & 3 & 0\\
    $o^2_{c_2}$ & $a'_{o^2_{c_2}}$  &  1 & 0 & 0 & 0 & 0 & 0 & 0 & 0\\
                & $a_{o^2_{c_2}}$  &  1 & 0 & 0 & 0 & 0 & 0 & 0 & 0\\
                & $p_{o^2_{c_2}}$  &  0 & 0 & 0 & 0 & 0 & 0 & 2 & 0\\
    $o^3_{c_2}$ & $a'_{o^3_{c_2}}$  &  1 & 0 & 0 & 0 & 0 & 0 & 0 & 0\\
                & $a_{o^3_{c_2}}$  &  1 & 0 & 0 & 0 & 0 & 0 & 0 & 0\\
                & $p_{o^3_{c_2}}$  &  0 & 0 & 0 & 0 & 0 & 0 & 1 & 0\\
    $o^1_{c_3}$ & $a'_{o^1_{c_3}}$  &  1 & 0 & 0 & 0 & 0 & 0 & 0 & 0\\
                & $a_{o^1_{c_3}}$  &  1 & 0 & 0 & 0 & 0 & 0 & 0 & 0\\
                & $p_{o^1_{c_3}}$  &  0 & 0 & 0 & 0 & 0 & 0 & 0 & 3\\
    $o^2_{c_3}$ & $a'_{o^2_{c_3}}$  &  1 & 0 & 0 & 0 & 0 & 0 & 0 & 0\\
                & $a_{o^2_{c_3}}$  &  1 & 0 & 0 & 0 & 0 & 0 & 0 & 0\\
                & $p_{o^2_{c_3}}$  &  0 & 0 & 0 & 0 & 0 & 0 & 0 & 2\\
    $o^3_{c_3}$ & $a'_{o^3_{c_3}}$  &  1 & 0 & 0 & 0 & 0 & 0 & 0 & 0\\
                & $a_{o^3_{c_3}}$  &  1 & 0 & 0 & 0 & 0 & 0 & 0 & 0\\
                & $p_{o^3_{c_3}}$  &  0 & 0 & 0 & 0 & 0 & 0 & 0 & 1 \\ \midrule
    &$A'$ & 0 & 0 & 0 & 1 & 1 & 0 & 0 & 0 \\
    &$A$ & 0 & 0 &1 & 2 & 2 & 0 & 0 & 0  \\
    &$B$ & 0 & 1 & 2 & 1 & 1 & 4 & 4 & 4  \\
    &$\bar{K}$  & 0 & 1 & 1 & 1 & 1 & 4 & 4 & 4  
    \end{tabular}
    }
    \caption{Example of construction of TIK from an instance $B_3 \cap 3CNF$ with $E =(a \lor b \lor \neg c) \land (\neg a \lor \neg b \lor d) \land (a \lor c \lor b) $, where $X = \{a, b \}$, $Y = \{c\}$, $Z=\{d\}$ and the clauses are labeled from left to right.}
    \label{B3_tik}
\end{figure}

Let $B_3 \cap 3CNF$ be a \emph{Yes} instance. Then, take in $O_1$ the items $o_u$ such that $u \in X$ is 1 and the items $o_{\bar{u}}$, otherwise. Clearly, this choice of $O_1$ respects the maximum weight $A'$. By construction, given this $O_1$, the best $O_2$ will take all items associated with $X$ and not taken by $O_1$, as it does not interfere with the budget left for the items associated with $Y$. Furthermore, the optimal $O_2$ will also take exactly one of the items $o_u$ or $o_{\bar{u}}$ for $u \in Y$: 
\begin{itemize}
    \item The two items associated with the most significant digit whose label is in $Y$ cannot be taken simultaneously in $O_2$ as it would violate the weight capacity $A$. In fact, exactly one of these items must be taken, as otherwise $O_3$ would select them both, making the achievement of the profit $\bar{K}$ only dependent on the items associated with the $Z$; consequently, the goal would be achieved.
    \item The two items associated with the second most significant digit whose label is in $Y$ cannot be taken simultaneously, since we already know that one of the items associated with the most significant digit in $Y$ is taken which would result in a violation of the weight capacity $A$. Hence, reasoning as before, $O_2$ will take exactly of the items associated with the second most significant digit in $Y$.
    \item The reasoning above propagates until the least significant digit labeled in $Y$. We conclude that the best $O_2$ will have exactly one of the items $o_u$ or $o_{\bar{u}}$ for $u \in Y$.
\end{itemize}
Finally, $O_3$ will contain $O_1$ and all the items associated with $Y$ not in $O_2$. This makes the rest of the items selection for $O_3$ completely equivalent to variable assignment in $Z$ for $B_3 \cap CNF$ (precisely, the standard reduction from 3-SAT to Subset Sum). Therefore, $TIK$ is a \emph{Yes} instance.

Next, suppose that  TIK is a \emph{Yes} instance. Certainly, an optimal $O_1$ must have exactly one of the items $o_u$ and $o_{\bar{u}}$ for $u \in X$, otherwise, $O_2$ could interdict some   $o_u$ and $o_{\bar{u}}$, making the goal $\bar{K}$ impossible to be achieved. As argued before, an optimal reaction $O_2$ to $O_1$ will select the items associated with $X$ not in $O_1$. 

Assign 1 to $u \in X$ such that $o_u \in O_1$, and 0 otherwise. For any valid assignment of the variables in $Y$, the correspondence in TIK is the following: if $u \in Y$ is 1, add $o_{\bar{u}}$ to $O_2$, otherwise add $o_u$. This forces $O_3$ to select for each $u \in Y$, $o_u$ if $u$ is 1 and $o_{\bar{u}}$ if $u$ is 0; otherwise, the goal $\bar{K}$ is not attained. Since, by hypothesis, TIK is a \emph{Yes} instance, for those $O_1$ and $O_2$, there is $O_3$ such that the profit $\bar{K}$ is exactly achieved which implies that there is an assignment of $Z$ such that $E$ is satisfied.

\end{proof}

\begin{thm}
MCN$_w$ is $\Sigma_3^p$-complete, even on trees.
\label{th_MCN_w}
\end{thm}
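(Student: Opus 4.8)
The plan is to read membership in $\Sigma_3^p$ off the quantifier structure, and then to prove $\Sigma_3^p$-hardness by reducing TIK to MCN$_w$ on a \emph{star} (hence a tree), merging the root gadget of Theorem~\ref{th_AP_w} with the vaccination gadget of Theorem~\ref{th_VA_w}. Membership is immediate: MCN$_w$ has the form $\exists D\ \forall I\ \exists P\ Q(D,I,P)$, and $Q$ — verifying the three budget inequalities and, via Property~\ref{prop_node_delet} and a DFS on $G[V\setminus(D\cup P)]$, the benefit of the saved vertices — runs in polynomial time, so MCN$_w\in\Sigma_3^p$.

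For the reduction, from a TIK instance $\big(O,(a'_o),(a_o),(p_o),A',A,B,\bar K\big)$ (w.l.o.g.\ $\bar K\ge 1$) I would build the star $G$ with root $r$ and a leaf $v_o$ for each $o\in O$, joined to $r$ by an edge, setting $\hat c_{v_o}=a'_o$, $h_{v_o}=a_o$, $c_{v_o}=b_{v_o}=p_o$, and for the root $h_r=c_r=1$, $b_r=\sum_{o\in O}p_o+1$, $\hat c_r=A'+1$; the budgets are $\Omega=A'$, $\Phi=A+1$, $\Lambda=B$ and the goal is $K=\bar K$. Since $\hat c_r>\Omega$ the root cannot be vaccinated, so a vaccination is a set $\{v_o:o\in O_1\}$ of leaves with $\sum_{o\in O_1}a'_o\le A'$. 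Because $b_r>K$ the defender wins as soon as $r$ is saved, and — exactly as in Theorem~\ref{th_AP_w} — if the attacker does not directly infect $r$ the defender can protect it ($c_r=1\le\Lambda$) and block every propagation; so at optimum the attacker spends one unit on $r$, after which every leaf that is neither vaccinated nor protected is infected. Hence the saved set is $\{v_o:o\in O_1\}\cup\{v_o:o\in O_3'\}$, where $O_3'$ is the set of protected items, which is disjoint from $O_1$ (protecting a vaccinated vertex wastes budget) and from the set $O_2$ of the extra infected items. With $\sum_{o\in O_2}a_o\le\Phi-1=A$ and $\sum_{o\in O_3'}p_o\le\Lambda=B$, the MCN$_w$ instance is a \emph{Yes} precisely when there is $O_1$ with $\sum a'_o\le A'$ such that for all $O_2\subseteq O\setminus O_1$ with $\sum a_o\le A$ there is $O_3'\subseteq O\setminus(O_1\cup O_2)$ with $\sum p_o\le B$ and $\sum_{o\in O_1}p_o+\sum_{o\in O_3'}p_o\ge\bar K$.

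The heart of the argument — and the step I expect to be the main obstacle — is that this condition is not \emph{verbatim} the definition of TIK: a vaccinated vertex is saved ``for free'', contributing its profit $p_o$ to the objective without charging the budget $\Lambda=B$, whereas the innermost player of TIK, constrained only by $O_3\subseteq O\setminus O_2$, pays $p_o$ against $B$ whenever it re-selects an item of $O_1$. One direction is easy: from a TIK witness $O_1$ and, for each $O_2$, a response $O_3$, the set $O_3':=O_3\setminus O_1$ is protection-feasible and $\sum_{O_1}p+\sum_{O_3'}p\ge\sum_{O_3}p\ge\bar K$, giving TIK~$\Rightarrow$~MCN$_w$. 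For the converse, given $O_1,O_2$ and a protection $O_3'$ with $\sum_{O_1}p+\sum_{O_3'}p\ge\bar K$ and $\sum_{O_3'}p\le B$, I would \emph{rebalance} $S:=O_1\cup O_3'$: if $\sum_S p>B$, delete members of $O_1$ from $S$ one at a time; the first time the running total falls to $\le B$ it is still strictly above $B-\max_{o\in O_1}p_o\ge\bar K$, so it lands in $[\bar K,B]$ and yields a valid TIK response. The inequality $\max_{o\in O_1}p_o\le B-\bar K$ is the one place where the proof uses the TIK instances coming from the reduction from $B_3\cap 3CNF$ rather than arbitrary ones: in those instances the only items that can appear in a feasible $O_1$ are the ones coding the $X$-assignment, and each of them has profit $p_o\le B-\bar K$ (a short check on the digit construction). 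Hence MCN$_w$~$\Rightarrow$~TIK, and since the star is a tree the completeness holds even on trees.
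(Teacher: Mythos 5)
Your proof is correct in outline, but it is \emph{not} the paper's argument, and the difference is worth spelling out. The paper's reduction replaces each item $o$ by a three-vertex path $r - v^1_o - v^2_o - v^3_o$ in which the protectable vertex $v^1_o$ has protection cost $p_o$ but benefit $0$, the benefit-carrying vertex $v^2_o$ is untouchable, and $v^3_o$ is the only vaccinate/attack vertex. The point of this decoupling is that the defender must pay $p_o$ against $\Lambda=B$ to collect the profit $p_o$ \emph{even for vaccinated items}, which is exactly TIK's accounting rule (the inner player pays for re-selected $O_1$ items); hence the paper obtains an instance-independent reduction from arbitrary TIK instances. Your single-leaf star does not have this property -- vaccinated leaves yield their profit for free -- and, as you correctly observe, the condition you derive ($\sum_{O_1}p+\sum_{O_3'}p\ge\bar K$ with only $O_3'$ charged to $B$) is genuinely weaker than TIK: a single item with $p_o>B\ge\bar K$ and negligible $a'_o$ already separates the two, so your construction is \emph{not} a valid reduction from general TIK. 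What saves you is the restriction to TIK instances in the image of the $B_3\cap 3CNF$ reduction together with the exchange/rebalancing argument, whose only nontrivial ingredient is $\max_{o\in O_1}p_o\le B-\bar K$ for every feasible $O_1$. That check does go through: feasibility against $A'$ (which has $1$s only in the $X$-digit positions) forces $O_1\subseteq\{o_u,o_{\bar u}:u\in X\}$, each such profit is supported on clause and $X$ positions and hence is $<10^{|C|+|X|}$, while $B-\bar K=\sum_{\mathrm{pos}\in Y}10^{\mathrm{pos}}\ge 10^{|C|+|X|}$ because $Y\neq\emptyset$. So your argument is sound, but it is really a direct reduction from $B_3\cap 3CNF$ routed through a \emph{restricted} family of TIK instances; it buys a smaller gadget (a star with $|O|+1$ vertices, reusing the bilevel constructions of Theorems~\ref{th_AP_w} and~\ref{th_VA_w}) at the price of an argument that is tied to the particular numeric structure of those instances, whereas the paper's three-vertex gadget buys a reusable, self-contained reduction TIK $\to$ MCN$_w$. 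If you keep your route, state explicitly that you are reducing from the restricted TIK family and include the digit check rather than deferring it, since the correctness of the whole hardness proof hinges on that single inequality.
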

\begin{proof}
MCN$_w$ is clearly in $\Sigma_3^p$.

Next, from an instance of TIK, we construct the following instance of MCN$_w$:
\begin{itemize}
    \item Let $\Omega=A'$, $\Phi=A+1$, $\Lambda = B$ and $K=\bar{K}$.
    \item  For each item $o \in O$ create three vertices $v^1_o$, $v_o^2$ and $v_o^3$ with 
    \begin{itemize}
        \item $\hat{c}_{v^1_o}=\Omega+1$, $h_{v^1_o}=\Phi+1$, $c_{v^1_o}=p_o$ and $b_{v^1_o}=0$; this vertex is only available for the protection set $P$;
        \item $\hat{c}_{v^2_o}=\Omega+1$, $h_{v^2_o}=\Phi+1$, $c_{v^2_o}=\Lambda+1$ and $b_{v^2_o}=p_o$; this vertex cannot be vaccinated, directly infected or protected;
        \item $\hat{c}_{v^3_o}=a'_o$, $h_{v^3_o}=a_o$, $c_{v^3_o}=\Lambda+1$ and $b_{v^3_o}=0$; this vertex is only available for the vaccination set $D$ and for the direct infection set $I$;
    \end{itemize}
    \item Create a vertex $r$ with   $\hat{c}_{r}=\Phi+1$, $h_{r}=1$, $c_{r}=1$ and $b_{r}= K$.
    \item For each item $o \in O$, add the edges $(r, v^1_o)$, $(v^1_o,v^2_o)$  and $(v^2_o,v^3_o)$.
\end{itemize}
See Figure~\ref{fig:TIK_MCNw} for an illustration of our reduction.

\begin{figure}[h]
    \resizebox{\columnwidth}{!}{
    \begin{tikzpicture}[-,>=stealth',shorten >=0.2pt,auto,node distance=1cm,baseline={(current bounding box.north)},  scale=0.8, transform shape,
inf node/.style={circle,fill=black,draw=white,minimum size=0.7cm},
pro node/.style={circle,fill=green!15,draw,minimum size=0.7cm},
ben node/.style={circle,draw,minimum size=0.7cm},
  main node/.style={circle,fill=black!15,draw,minimum size=0.7cm},spe node/.style={fill=white,draw=white}]

  \node[inf node] (10) {\textcolor{white}{$r$}};
  \node[pro node] (2) [below left=of 10] {$v^1_2$};
  \node[pro node] (3) [left=of 2] {$v^1_1$};
  \node[spe node] (4) [right=of 2] {$\vdots$};
  \node[ben node] (5) [below=of 3] {$v^2_1$};
  \node[main node] (6) [below=of 5] {$v^3_1$};
  \node[ben node] (7) [below=of 2] {$v^2_2$};
  \node[main node] (8) [below=of 7] {$v^3_2$};
  \node[pro node] (11) [right=of 4] {$v^1_n$};
  \node[ben node] (9) [below=of 11] {$v^2_n$};
  \node[main node] (12) [below=of 9] {$v^3_n$};
  \node[spe node] (13) [right=of 10] {$\hat{c}_r= \Omega+1, \ \ h_r=c_r=1, \ \ b_r=K$};
  \node[spe node] (14) [left=of 3] {$\hat{c}_{v^1_1}= \Omega+1, \ \ h_{v^1_1}=\Phi +1, \ \ c_{v^1_1}=p_1, \ \ b_{v^1_1}=0$};
  \node[spe node] (15) [left=of 5] {$\hat{c}_{v^2_1}= \Omega+1, \ \ h_{v^2_1}=\Phi +1, \ \ c_{v^2_1}=\Lambda+1, \ \ b_{v^2_1}=p_1$};
  \node[spe node] (16) [left=of 6] {$\hat{c}_{v^3_1}=a'_1, \ \ h_{v^3_1}=a_1, \ \ c_{v^3_1}=\Lambda+1, \ \ b_{v^3_1}=0$};
  
   \node[spe node] (17) [right=of 11] {$\hat{c}_{v^1_n}= \Omega+1, \ \ h_{v^1_n}=\Phi +1, \ \ c_{v^1_n}=p_n, \ \ b_{v^1_n}=0$};
  \node[spe node] (18) [right=of 9] {$\hat{c}_{v^2_n}= \Omega+1, \ \ h_{v^2_n}=\Phi +1, \ \ c_{v^2_n}=\Lambda+1, \ \ b_{v^2_n}=p_n$};
  \node[spe node] (19) [right=of 12] {$\hat{c}_{v^3_n}=a'_n, \ \ h_{v^3_n}=a_n, \ \ c_{v^3_n}=\Lambda+1, \ \ b_{v^3_n}=0$};

  \path[every node/.style={font=\sffamily\small}]
    (10) edge node {}  (2)
         edge node {}  (3)
         edge node {}  (11)
         edge node {}  (4)
    (2)  edge node {}  (7)
    (7)  edge node {}  (8)
    (3)  edge node {}  (5)
    (5)  edge node {}  (6)
    (11) edge node {}  (9)
    (9)  edge node {}  (12);
\end{tikzpicture}}
    \caption{Graph reduction from TIK to MCN$_w$ when $O=\lbrace 1,2 \ldots, n \rbrace$. The only vertices resulting in positive benefit are the ones in white. The vertices in gray can be vaccinated and directly attacked. The vertices in green can be protected. The vertex in black can be attacked (and protected).}
    \label{fig:TIK_MCNw}
\end{figure}
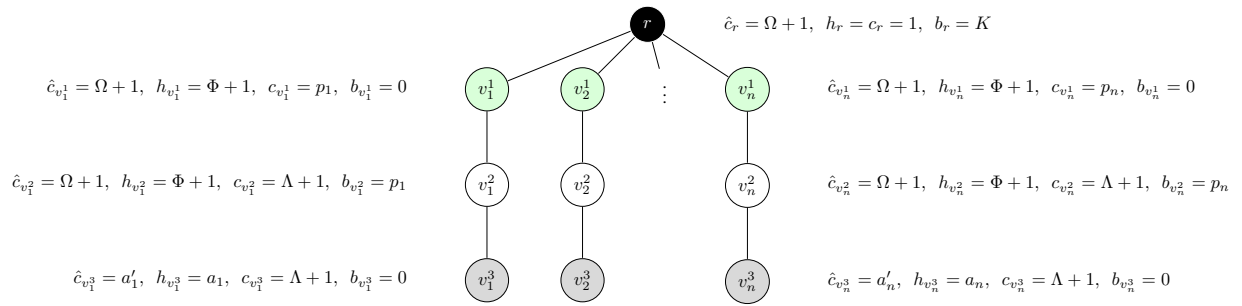

The key ingredients of this reduction are the following: \emph{(i)} independently of the vaccination strategy, an optimal attack will always include the vertex $r$, \emph{(ii)} hence, the only way to collect a positive benefit $p_o$ is by ensuring that vertex $v^2_o$ is saved, \emph{(iii)} the latter is only possible if $v^3_o$ is vaccinated and $v^1_o$ is protected or if $v^3_o$ is not attacked and $v^1_o$ is protected. These observations allow to show that TIK is a \emph{Yes} instance if and only if MCN$_w$ is a \emph{Yes} instance. The remainder of the proof follows a similar reasoning to the previous proofs for the weighted games. 


\end{proof}

\section{Directed graphs}\label{sec:directed_uni}
In this section, we consider directed graphs $G=(V,A)$ and restrict costs and benefits to be unitary. We use the subscript $_{dir}$ for these problem versions.  Clearly, these problems inherit the complexity of their unitary undirected versions, as they are more general. In fact, we were able to go a level up in the polynomial hierarchy  for some of its subgames in comparison with the unitary undirected cases. In this section, we first prove that the {\sc Attack}$_{dir}$ is NP-complete, and then demonstrate that {\sc Vaccination-Attack}$_{dir}$ is $\Sigma_2^p$-complete. Later, in Section~\ref{sec:special}, we present special properties of {\sc Protect}$_{dir}$ that allow us to easily prove NP-com\-plete\-ness for directed acyclic graphs and polynomiality for arborescences.

It should be remarked that we do not address {\sc Attack-Protect}$_{dir}$ and thus, it remains open whether it is $\Sigma_2^p$-complete. The difficulty on dealing with this subgame is related to the lack of $\Sigma_2^p$-hard problems involving unitary parameters or a division on the two players decision variables: in  {\sc Attack-Protect}$_{dir}$ all parameters are 1 and all vertices can be subject to infection or protection. On the other hand, as an example, non-trivial instances of KIP (presented in Section~\ref{subsec:undirected_wei_AP}) should have weights not all 1, otherwise it becomes polynomially solvable as it can be reduced to its continuous version and, consequently, efficiently solved~\cite{CarvalhoMarcotteAndrea}. Another example, {\sc 2-CNF-Alternating Quantified Satisfiability}, to be introduced in Section~\ref{subsec:vac_attack_dir}, and which is  $\Sigma_2^p$-complete, demands each player to control distinct sets of variables. For {\sc Vaccination-Attack}$_{dir}$, we were able to bypass this challenge but an analogous trick does not seem easily adaptable for {\sc Attack-Protect}$_{dir}$.


\subsection{The {\sc Attack}$_{dir}$ problem}

First, we study the \textit{Attack problem} on directed graphs, {\sc Attack}$_{dir}$. We are given a directed graph $G_a$ resulting from the deletion of the vaccinated vertices from the original graph, and an integer budget $\Phi$. In this setting, there is no protection phase, \ie ~$\Lambda = 0$. The decision version of the problem is: 

\vspace{0.3cm}
\boxxx{
\textbf{\textsc{Attack$_{dir}$}}: \\
{\sc instance}: A directed graph $G_a = (V_a, A_a)$, a non-negative integer budget $\Phi \leq |V_a|$, and a non-negative integer $K$. \\
{\sc question}: Is there a subset of vertices $I \subseteq V_a$, $|I| \leq \Phi$ such that the number of infected vertices in $G_a$ is greater or equal to $K$?
}
\vspace{0.3cm}

We saw that in the undirected case, this problem is solvable in linear time, the best strategy being to infect the $\Phi$ largest connected components of $G_a$. But in the directed case, the infection is only allowed to propagate itself according to the direction of the arcs, which makes the problem of choosing the right set of vertices to attack NP-complete. We will use a reduction from the \textit{$3$-Satisfiability problem}, which is one of the Karp's $21$ NP-complete problems \cite{Karp1972}.

\vspace{0.3cm}
\boxxx{
\textbf{\textsc{3-satisfiability (3-SAT):}} \\
{\sc instance}: Set $U$ of variables, Boolean expression $E$ over $U$ in conjunctive normal form with exactly $3$ literals in each clause $c \in C$. \\
{\sc question}: Is there a $0$-$1$ assignment for the variables in $U$ that satisfies $E$?
}
\vspace{0.3cm}

\begin{thm}
\textsc{Attack$_{dir}$} is NP-complete, even on directed acyclic graphs.
\label{A_dir_NP_hard}
\end{thm}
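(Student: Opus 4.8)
The plan is first to observe that \textsc{Attack}$_{dir}$ lies in NP, and then to prove hardness by a reduction from \textsc{3-SAT} that outputs a directed acyclic graph. Membership in NP is immediate: given an attack $I$ with $|I|\le\Phi$, the infected set in $G_a$ is exactly the set of vertices reachable from $I$ along the arcs, which a BFS/DFS computes in linear time, after which one checks whether its cardinality is at least $K$. For hardness, the idea is to engineer a DAG in which a ``sensible'' attack of a prescribed budget is forced to select exactly one of two literal vertices for each variable, so that attacks correspond to truth assignments, while the number of infected vertices equals a fixed base plus the number of satisfied clauses.

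Concretely, from a \textsc{3-SAT} instance with variables $U$, $|U|=n$, and clauses $C$, $|C|=m$, I would build $G_a$ as follows: for each variable $u\in U$ create two \emph{literal vertices} $x_u,x_{\bar u}$ and one \emph{reward vertex} $g_u$, with arcs $x_u\to g_u$ and $x_{\bar u}\to g_u$; for each clause $c\in C$ create a \emph{clause vertex} $w_c$, and add the arc $x_u\to w_c$ (resp.\ $x_{\bar u}\to w_c$) whenever the literal $u$ (resp.\ $\neg u$) occurs in $c$. The literal vertices are the only sources and the $g_u$'s and $w_c$'s are sinks, so $G_a$ is acyclic and has $3n+m$ vertices. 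Finally set $\Phi:=n$ and $K:=2n+m$.

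For correctness I would argue as follows. Attacking a reward vertex or a clause vertex infects only that single vertex, so by a routine exchange argument (replace such a vertex in the attack by an unused literal of the corresponding variable or clause, or, if the vertex is already reachable, drop it and add any unused literal, which is always possible since there are $2n$ literal vertices and only $n$ units of budget), an optimal attack may be assumed to consist solely of literal vertices, and then of exactly $n$ of them. Writing $k$ for the number of variables \emph{both} of whose literals are attacked, budget tightness forces exactly $k$ variables to have \emph{no} attacked literal; such an attack infects its $n$ literal vertices, the $n-k$ reward vertices whose variable has an attacked literal, and at most $m$ clause vertices, so at most $2n-k+m$ vertices in total. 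Hence reaching $K=2n+m$ is possible only when $k=0$, i.e.\ exactly one literal per variable is attacked --- a truth assignment $\tau$ --- in which case the number of infected vertices is exactly $2n+s(\tau)$, where $s(\tau)$ is the number of clauses satisfied by $\tau$, and this equals $K$ iff $s(\tau)=m$. Therefore an attack infecting at least $K$ vertices exists iff the \textsc{3-SAT} instance is satisfiable, giving NP-hardness even on DAGs.

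The step I expect to be the crux is precisely the part ruling out ``cheating'' attacks that double up on a variable in order to harvest extra clause vertices: without the reward vertices, attacking both literals of a variable appearing in many clauses while abandoning another variable could still cover every clause vertex even for an unsatisfiable formula, breaking the intended correspondence. The reward-vertex gadget together with the absolute threshold $K=2n+m$ resolves this, since any attack that abandons a variable necessarily fails to reach that variable's reward vertex and hence infects at most $2n-1+m<K$ vertices regardless of its clause coverage; only the canonical ``one literal per variable'' attacks can reach the target, and among those the attainable value is governed solely by the number of satisfied clauses.
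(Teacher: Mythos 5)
Your proof is correct and follows essentially the same strategy as the paper's: a reduction from \textsc{3-SAT} with source literal vertices, sink clause vertices, and a per-variable gadget whose loss penalizes any attack that abandons a variable, so that budget-$|U|$ attacks reaching the threshold correspond exactly to satisfying assignments. The only difference is that you use a single shared reward vertex $g_u$ per variable (giving the tight bound $2n-k+m$) where the paper attaches a directed path of length $|C|+|U|-1$ to each literal pair and argues that skipping a path is too costly; your gadget is leaner but the mechanism is the same.
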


\begin{proof}
\textsc{Attack$_{dir}$} $\in$ NP as, given a set of attacked vertices $I$, checking whether the set of infected vertices is greater than $K$ is easily done using a DFS. \\
To prove that \textsc{Attack$_{dir}$} is NP-hard, we take an instance of $3$-SAT. We build a directed acyclic graph $G_a$ as follows:
\begin{itemize}
    \item For each variable $u \in U$, we create two vertices $v_u$ and $v_{\bar{u}}$, one for each possible $0$-$1$ assignment of $u$. We call $V_U = \{v_u; u \in U\}$ and $V_{\bar{U}}=\{v_{\bar{u}}; u \in U\}$ the two sets of vertices of size $|U|$. For each variable $u$, we also create a directed path $p_u$ of length $|C|+|U|-1$, with an in-going arc from both $v_u$ and $v_{\bar{u}}$ at the beginning of the path.
    \item For each clause $c \in C$, we create a vertex $v_c \in V_C$.
    \item From each vertex $v_u \in V_U$, we draw an arc $(v_u, v_c)$ to every clause in which the positive literal $u$ appears. Similarly, we draw an arc $(v_{\bar{u}},v_c)$ from each $v_{\bar{u}} \in V_{\bar{U}}$ to every clause in which the negative literal $\neg u$ appears.
\end{itemize}
An example of this construction can be found in \hyperref[Fig2]{Figure \ref{Fig2}}. We set $\Phi = |U|$, $K = |U| \times (|U| + |C|) + |C|$ and argue that answering \textsc{Attack$_{dir}$} on this instance is the same as answering $3$-SAT.

Indeed,  suppose that $3$-SAT is a \emph{Yes} instance, \ie ~there is a 0-1 assignment to the variables in $U$ such that every clause in $E$ is true. Taking this assignment, by attacking $v_u$ if $u$ is set to be $1$ and $v_{\bar{u}}$ otherwise, we attack exactly $\Phi$ vertices in $G_a$. Moreover, each path $p_u$ is infected, and for each pair $(v_u, v_{\bar{u}})$, there is exactly one vertex infected due to the direction of the arcs. Finally, as $E$ is true, each clause $c$ is true, which translates into the fact that each $v_c$ in the graph $G_a$ is infected. Overall, there are exactly $|U| + |U| \times |p_u| + |C| = |U| \times (|U| + |C|) + |C|$ vertices infected in the graph.

Conversely, we prove that if {\sc Attack}$_{dir}$ is a \emph{Yes} instance, \ie, there is a feasible attack $I^*$ on $G_a$ leading to at least $K=|U| \times (|U| + |C|) + |C|$ vertices infected, then  $E$ is satisfiable and the corresponding 0-1 assignment can be read in $I$. Let $I^*$ be such an attack strategy. First, we remark that the largest possible set of infected vertices should contain all the vertices $V_{p_u}$ of each path $p_u$: it is possible to infect them all as $\Phi = |U|$ and due to their size equal to $|C| +|U|-1$, we can prove that not infecting all of them results in a sub-optimal solution. Indeed, suppose that for one $u'$ we do not infect any of the vertices $V_{p_{u'}}$ of the path $p_{u'}$. Let $\alpha^*$ be the maximum number of vertices we can infect without infecting $p_{u'}$. As $p_{u'}$ is not infected, $v_{u'}$ and $v_{\bar{u}'}$ cannot be either. Thus, an easy upper bound $\alpha_{up}$ on $\alpha^*$ is obtained by saying that every vertex of the graph is infected, except for the ones in $\{v_{u'}, v_{\bar{u}'}\} \cup V_{p_{u'}}$. Then,
\begin{align*}
    \alpha^* \leq \alpha_{up} &= (|U|-1)\times|p_u| + 2(|U|-1) + |C| \\
    &= (|U| - 1 )\times(|U|+|C| -1) + 2|U| -2 +|C|\\
    & = |U|^2 + |U| \times |C| - 2|U| -|C| +1 + 2|U| - 2 + |C| \\
    & = |U| \times (|U| + |C|) -1.
\end{align*}
As we assumed that the optimal attack $I^*$ infected at least $K=|U| \times (|U| + |C|) + |C|$ vertices, which is strictly greater than $\alpha_{up}$, we proved that no strategy not infecting all the paths can infect $K$ vertices. \\
Thus, as there is exactly $\Phi$ different paths, we should attack exactly one element in each set of vertices $\{v_u, v_{\bar{u}}\} \cup V_{p_u}$: if we attacked more than one, then the remaining budget would not allow to attack all the paths.  As attacking $v_u$ or $v_{\bar{u}}$ leads to a strictly greater number of infected vertices than infecting a vertex in $p_u$, there is no harm in assuming that no vertex inside the $p_u$ is in $I^*$. This implies that $I^* \subset V_U \cup V_{\bar{U}}$. At this point, there are at least $|p_u| \times |U| + |U| = |U| \times (|U| + |C|)$ vertices infected. Since we supposed that we had a \textit{Yes} instance to \textsc{Attack$_{dir}$}, there must be $K = |U| \times (|U| + |C|) + |C|$ infected vertices, which implies that all vertices in $V_C$ are infected. Thus, $3$-SAT is a \textit{Yes} instance and $I^*$ is a 0-1 assignment of $U$ that makes $E$ true, concluding the proof.
\end{proof}

\begin{rem}
Note that  the proof of Theorem~\ref{A_dir_NP_hard} holds if $p_u$ is replaced by  a complete graph with  $|C|+|U|-1$ vertices (the length of the path). This observation will be useful for the reduction used in {\sc Vaccination-Attack}$_{dir}$
\end{rem}

\begin{figure}[ht]
    \centering
    \begin{tikzpicture}[->,>=stealth',shorten >=0.2pt,auto,node distance=1cm,baseline={(current bounding box.north)},  scale=0.8, transform shape,
  main node/.style={circle,fill=black!15,draw,minimum size=0.7cm},spe node/.style={fill=white,draw=white}, inf node/.style={fill=black,draw=white,circle,minimum size=0.7cm}]

  \node[main node] (1) {};
  \node[main node] (2) [right=of 1] {};
  \node[main node] (3) [below right =of 2] {};
  \node[main node] (4) [below left=of 1] {};
  \node[main node] (5) [below right= 0.5cm and 1.5cm of 4] {};
  \node[inf node] (6) [below left=of 5] {\textcolor{white}{$v_a$}};
  \node[main node] (7) [below right=of 5] {$v_{\neg a}$};
  \node[main node] (8) [below =3cm of 5] {$v_{c_1}$};
  \node[main node] (9) at (5.5,-0) {};
  \node[main node] (10) [right=of 9] {};
  \node[main node] (11) [below right =of 10] {};
  \node[main node] (12) [below left=of 9] {};
  \node[main node] (13) [below right= 0.5cm and 1.5cm of 12] {};
  \node[inf node] (14) [below left=of 13] {\textcolor{white}{$v_b$}};
  \node[main node] (15) [below right=of 13] {$v_{\neg b}$};
  \node[main node] (16) [below =3cm of 13] {$v_{c_2}$};
  \node[main node] (17) at (11,0) {};
  \node[main node] (18) [right=of 17] {};
  \node[main node] (19) [below right =of 18] {};
  \node[main node] (20) [below left=of 17] {};
  \node[main node] (21) [below right= 0.5cm and 1.5cm of 20] {};
  \node[inf node] (22) [below left=of 21] {\textcolor{white}{$v_c$}};
  \node[main node] (23) [below right=of 21] {$v_{\neg c}$};
  \node[main node] (24) [below =3cm of 21] {$v_{c_3}$};

  \path[every node/.style={font=\sffamily\small}]
    (1) edge node {}  (2)
    (2) edge node {}  (3)
    (4) edge node {}  (1)
    (5) edge node {}  (4)
    (6) edge node {}  (5)
        edge node {}  (8)
    (7) edge node {}  (5)
    (9) edge node {}  (10)
    (10) edge node {}  (11)
    (12) edge node {}  (9)
    (13) edge node {}  (12)
    (14) edge node {}  (13)
    (15) edge node {}  (13)
    (14) edge node {}  (16)
     (17) edge node {}  (18)
    (18) edge node {}  (19)
    (20) edge node {}  (17)
    (21) edge node {}  (20)
    (22) edge node {}  (21)
    (23) edge node {}  (21)
    (14) edge node {} (8)
    (23)  edge node {} (8)
    (7) edge node {} (16)
    (22) edge node {} (16)
    (6) edge node {} (24)
    (15)  edge node {} (24)
    (22) edge node {} (24);
\end{tikzpicture}
    \caption{Example of construction of $G_a$ from the boolean expression in CNF with 3 literals in each clause $E = (a \lor b \lor \neg c) \land (\neg a \lor b \lor c) \land (a \lor \neg b \lor c) $. We have $U = \{a,b,c \}$ and $|C| = 3$. Taking $I = \{v_a, v_b, v_c \}$ is optimal.}
    \label{Fig2}
\end{figure}
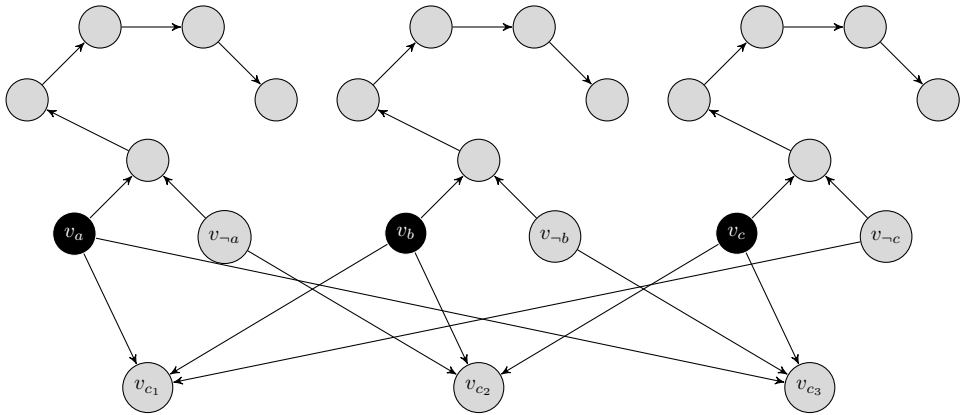

\subsection{The {\sc Vaccination-Attack}$_{dir}$ problem}\label{subsec:vac_attack_dir}

Our demonstration of NP-com\-plete\-ness for \textsc{Attack$_{dir}$} inspires our proof for the  $\Sigma_2^p$-completeness of {\sc Vaccination-Attack}$_{dir}$. The formulation of this decision problem is 

\vspace{0.3cm}
\boxxx{
\textsc{\textbf{Vaccination-Attack}$_{dir}$}: \\
{\sc instance}: A graph $G=(V,A)$, two non-negative integer budgets $\Omega$ and $\Phi$ such that $\Omega + \Phi \leq |V|$ and a non-negative integer $K$.
\\
{\sc question}: Is there a subset $D \subseteq V$, $|D| \leq \Omega$ such that $\forall I \subseteq V\backslash D$ with $|I| \leq \Phi$, the number of infected vertices $|V| - s(G, D, I, \emptyset)\leq K$?
}
\vspace{0.3cm}


We will use a reduction from a variant of the \textit{2-Alternating Quantified Satisfiability problem} ($B_2$). Historically, $B_2$ was the first problem shown to be $\Sigma_2^p$-complete \cite{MeyerStock72}. If the Boolean formula studied in $B_2$ is in DNF with $3$ literals per clause, then the problem is still $\Sigma_2^p$-complete \cite{Wrathall1976}. Thus, if we consider expressions in CNF with $3$ literals per clause, instead of seeking to \textit{satisfy} the Boolean formula, we should state the question as formulated in \cite{Johannes2011NewCO}:

\vspace{0.3cm}
\boxxx{
\textbf{\textsc{2-CNF-Alternating Quantified Satisfiability}} ($B_2^{CNF}$):\\
{\sc instance}: Disjoint non-empty sets of variables $X$ and $Y$, Boolean expression $E$ over $U = X \cup Y$ in conjunctive normal form with exactly 3 literals in each clause. \\
{\sc question}: Is there a 0-1 assignment for $X$ so that there is no 0-1 assignment for $Y$
such that $E$ is satisfied?
}
\vspace{0.3cm}

\begin{thm}
{\sc Vaccination-Attack}$_{dir}$ is $\Sigma_2^p$-complete.
\label{th_VA_dir}
\end{thm}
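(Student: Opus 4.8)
The plan is to first dispatch membership and then reduce from the $\Sigma_2^p$-complete problem $B_2^{CNF}$ (2-CNF-Alternating Quantified Satisfiability), adapting and extending the construction used for Theorem~\ref{A_dir_NP_hard}. Membership is immediate: {\sc Vaccination-Attack}$_{dir}$ is literally of the form $\exists D\ \forall I\ Q(D,I)$, where $Q(D,I)$ asserts that $D$ respects $\Omega$, that $I\subseteq V\setminus D$ respects $\Phi$, and that $|V|-s(G,D,I,\emptyset)\le K$; given $D$ and $I$ this predicate is checked in polynomial time by deleting $D$, running a DFS from $I$ in $G[V\setminus D]$ to obtain the reachable (infected) vertices, and counting. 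Hence {\sc Vaccination-Attack}$_{dir}\in\Sigma_2^p$.

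For hardness I would build the following graph from a $B_2^{CNF}$ instance with variables $U=X\cup Y$ and clause set $C$, reusing the gadget of Theorem~\ref{A_dir_NP_hard} but, crucially, with the Remark's replacement of each directed path $p_u$ by a complete digraph on $|C|+|U|-1$ vertices. For every $u\in U$ create literal vertices $v_u,v_{\bar u}$, a clique $K_u$ into which both $v_u$ and $v_{\bar u}$ point, and for every clause $c$ a clause vertex $v_c$ together with arcs from the literal vertices to the clause vertices of the clauses they occur in (if extra amplification is needed, hang a large clique off each $v_c$). The guiding idea is to split control along the quantifier blocks: the attacker's attack encodes the $\forall$-assignment to $Y$ exactly as the attack encoded an assignment in Theorem~\ref{A_dir_NP_hard}, while the defender's vaccination encodes the $\exists$-assignment to $X$. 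Concretely, set $\Omega=|X|$ and intend the defender to vaccinate, for each $x\in X$, precisely one of $v_x,v_{\bar x}$; since every clique and every clause vertex is reachable only through surviving literal vertices, the attacker's best reply with $\Phi$ chosen essentially as in Theorem~\ref{A_dir_NP_hard} (one attack per variable) is forced to attack the surviving literal of each $x\in X$ together with one of $v_y,v_{\bar y}$ for each $y\in Y$. Then a literal is ``true'' iff its vertex is infected, a clause vertex is infected iff the induced assignment satisfies that clause, and, tuning the threshold $K$ so that ``infected $\le K$'' means ``at least one clause vertex (with its attached clique) is not infected'', the {\sc Vaccination-Attack}$_{dir}$ instance is a \emph{yes}-instance iff there is a vaccination (an assignment of $X$) for which every attack (every assignment of $Y$) falsifies some clause, i.e.\ iff the $B_2^{CNF}$ instance is a \emph{yes}-instance.

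I expect the delicate part, and where most of the effort goes, to be the two forcing arguments and the exact choice of $\Omega$, $\Phi$, the clique sizes and $K$. On the attacker's side this is essentially the argument already in Theorem~\ref{A_dir_NP_hard}: the cliques are large enough that any attack failing to infect some reachable clique, or spending an attack inside a clique instead of on a literal vertex, or attacking both literals of a variable, is strictly dominated. The genuinely delicate point is on the defender's side: one must prove that ``degenerate'' vaccinations — above all vaccinating both copies $v_x,v_{\bar x}$ to neutralise an $X$-variable, or otherwise not using the budget as an honest truth assignment — are never strictly better for the defender than an honest one-vaccination-per-$X$-variable strategy. This is exactly what makes the clique replacement necessary: deleting a single vertex of a clique still leaves a clique that a direct attack re-infects, so no cheap protection of a clique is possible and the only vaccinations that can ever help are the literal-vertex vaccinations; the remaining work is to choose $\Omega$ tight at $|X|$ and to scale the clique sizes so that any deviation from the intended pattern either leaves a dominating clique infected or effectively hands control of some $X$-variable to the attacker, and then to observe that, since a \emph{no}-instance of $B_2^{CNF}$ stays a \emph{no}-instance when the $\exists$-player is given fewer variables to control, such deviations cannot rescue the defender. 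Finally I would emphasise that it is precisely the shape of $B_2^{CNF}$ — a CNF formula whose variables are pre-partitioned into an $\exists$-block and a $\forall$-block — that makes this ``$X$ by vaccination, $Y$ by attack'' separation possible, the same structural feature whose absence for a unitary, two-sided $\Sigma_2^p$-complete problem is what leaves {\sc Attack-Protect}$_{dir}$ open.
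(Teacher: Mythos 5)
Your overall plan is the paper's: membership is read off the $\exists D\,\forall I$ form, and hardness comes from $B_2^{CNF}$ by letting the vaccination encode the $X$-assignment and the attack encode the $Y$-assignment, reusing the clique variant of the Theorem~\ref{A_dir_NP_hard} gadget. You have also correctly located where the difficulty sits (forcing the defender to play an ``honest'' one-vaccination-per-$X$-variable strategy). But the concrete construction you sketch does not close that difficulty, and the fallback argument you offer for it is not sound.

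The gap is that your gadget is \emph{symmetric} in $X$ and $Y$: for every variable $u\in U$ you attach a single clique $K_u$ into which both $v_u$ and $v_{\bar u}$ point. With that gadget, vaccinating one literal of any variable has exactly the same effect whether the variable is in $X$ or in $Y$, so nothing prevents the defender from spending part of the budget $\Omega=|X|$ on $Y$-literals. Doing so forces the attacker's hand on those $Y$-variables while abandoning control of some $X$-variables, i.e.\ it swaps which variables are existentially and which are universally quantified. Your closing argument --- that a \emph{No}-instance of $B_2^{CNF}$ remains a \emph{No}-instance when the $\exists$-player controls \emph{fewer} variables --- does not cover this deviation, because the defender is not controlling fewer variables but a \emph{different} set, and exchanging an $X$-variable for a $Y$-variable in the quantifier prefix can change the truth value. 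The paper avoids this by making the gadget asymmetric: each $X$-literal $v_x$, $v_{\bar x}$ gets its \emph{own private} clique of size $|C|+|Y|-1$ (plus an undirected edge between $v_x$ and $v_{\bar x}$), whereas each $Y$-variable gets a \emph{single shared} clique reachable from both its literals. Then vaccinating an $X$-literal disconnects an entire large clique (so the attacker can reach it only by paying a budget unit for strictly less profit), while vaccinating a $Y$-literal saves only that one vertex since the shared clique stays reachable through the other literal; this is what pins the optimal vaccination to exactly one literal per $X$-pair and makes the counting with $\Omega=|X|$, $\Phi=|X|+|Y|$ and $K=(|X|+|Y|)(|Y|+|C|)+|C|-1$ go through. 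Without introducing that asymmetry (or an equivalent device), your reduction is not correct as stated.
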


\begin{proof}
From the formulation in the form of $\exists D \ \ \forall I \ \ Q(D,I)$, we deduce that {\sc Vaccination-Attack}$_{dir}$ $\in \Sigma_2^p$. \\
To show that it is $\Sigma_2^p$-hard, we take an instance of $B_2^{CNF}$. We build $G$ in a similar fashion to how $G_a$ was built in the proof of the \hyperref[A_dir_NP_hard]{Theorem \ref{A_dir_NP_hard}}, the main difference being the use of cliques instead of paths. However, to differentiate the variables in $X$ from the ones in $Y$, we slightly change the construction:
\begin{itemize}
    \item For each variable $x \in X$, we create two vertices $v_x$ and $v_{\bar{x}}$, one for each possible $0$-$1$ assignment of $x$. We call $V_X$ and $V_{\bar{X}}$ the sets of $v_x$ and $v_{\bar{x}}$. We also create two cliques $k_x$ and $k_{\bar{x}}$ of $|C|+|Y|-1$ vertices $V_{k_x}$ and $V_{k_{\bar{x}}}$. 
    \item For each variable $y \in Y$, we create two vertices $v_y$ and $v_{\bar{y}}$, one for each possible $0$-$1$ assignment of $y$. Let $V_Y$ and $V_{\bar{Y}}$ be these two sets of vertices, and $V_U = V_X \cup V_Y$, $V_{\bar{U}} = V_{\bar{X}} \cup V_{\bar{Y}}$. We also create a clique $k_y$ of size $|C|+|Y|-1$.
    \item For each clause $c \in C$, we create a vertex $v_c \in V_C$.
    \item From each vertex $v_u \in V_U$, we draw an arc $(v_u, v_c)$ to every clause in which the positive literal $u$ appears. Similarly, we draw an arc $(v_{\bar{u}}, v_c)$ from each $v_{\bar{u}} \in V_{\bar{U}}$ to every clause in which the negative literal $\neg u$ appears.
    \item From every $v_x$, we draw an arc to one node in $k_x$, and do the same thing with $v_{\bar{x}}$ and $k_{\bar{x}}$. We also draw an undirected edge between each $v_x$ and $v_{\bar{x}}$.
    \item Finally, from each $v_y$ and each $v_{\bar{y}}$, we draw an arc to one node in $k_y$.
\end{itemize}
An example of this construction can be found in \hyperref[Fig3]{Figure \ref{Fig3}}. We set $\Omega = |X|$, $\Phi = |X| + |Y|$, $K = (|X| +|Y|) \times (|Y|+|C|) + |C|-1$ and argue that answering {\sc Vaccination-Attack}$_{dir}$ on this instance is the same as answering $B_2^{CNF}$.

Indeed, if we are a given a solution to a \textit{Yes} instance of $B_2^{CNF}$, then by vaccinating the vertices corresponding to the opposite of the 0-1 assignment of $X$, we oblige the attacker to infect the vertices corresponding to the truth values for $X$. From there, by following the same reasoning as before, it is easy to see that the \textit{Yes} instance of $B_2^{CNF}$ leads to a \textit{Yes} instance of {\sc Vaccination-Attack}$_{dir}$, \ie~ the attacker cannot infect more than $K$ vertices.

Conversely, we show that a set $D^*$ corresponding to a solution of a \textit{Yes} instance of {\sc Vaccination-Attack}$_{dir}$ is a solution to a \textit{Yes} instance of $B_2^{CNF}$. The first thing to notice is that given that the vaccination budget is $\Omega = |X|$, that the size of the cliques $k_x$ and $k_{\bar{x}}$ is equal to $|C|+|Y|-1$ and that each clique can be disconnected from the graph by spending only one unit of vaccination budget, we necessarily have that the best vaccination strategy $D^* \subset \underset{x \in X}{\cup}\{v_x, v_{\bar{x}} \}$. Next, we show that the defender would be worse off is she decides to vaccinate both $v_{x'}$ and $v_{\bar{x'}}$ for some $x'\in X$ instead of vaccinating exactly one of each member of $\{v_x, v_{\bar{x}}\}$. In the best case scenario, in addition  to the vertices already vaccinated, deciding to vaccinate the two members of a pair will allow her to protect $|C|-1$ nodes in $V_C$ \textit{(it is not possible to remove all the arcs between the $V_U \cup V_{\bar{U}}$ and the $V_C$ as we suppose that $Y \neq \emptyset$, thus at least one clause contains a variable from $Y$)}. But by doing so,  as $\Omega = |X|$, the defender will also not protect at all a group of vertices $\{v_{x''}, v_{\bar{x}''}\} \cup V_{k_{x''}} \cup V_{k_{\bar{x}''}}$. Thus, the attacker can then spend only one unit of her own budget to attack all of this group, a quantity of infected vertices that otherwise would have been obtained by spending two units of his budget $\Phi$. Thus, defending the two members of $\{v_{x'}, v_{\bar{x}'}\}$ spared one unit of budget for the attacker, which she can then use to attack one of the disconnected cliques of size $|C|+|Y| - 1 > |C| -1$. Thus, making such a move for the defender is strictly worse than not doing it and $D^*$ contains exactly one vertex from each $\{v_x, v_{\bar{x}}\}$. \\
After this stage, it is easy to see that the best move for the attacker is to attack all of the $D^* \backslash (V_x \cup V_{\bar{x}})$, and for the variables in $Y$, the situation reduces to the one we already discussed with \textsc{Attack$_{dir}$} \textit{(note that it is always more interesting for the attacker to spend her budget on attacking the $v_y$ and $v_{\bar{y}}$ than the disconnected cliques as it will always infect more vertices)}. Hence, in the end, if the attacker did not manage to infect strictly more than $(|Y| + |X|) \times (|Y|+|C|) + |C|-1$ vertices, it means that at least one clause is false, which concludes the proof.
\end{proof}
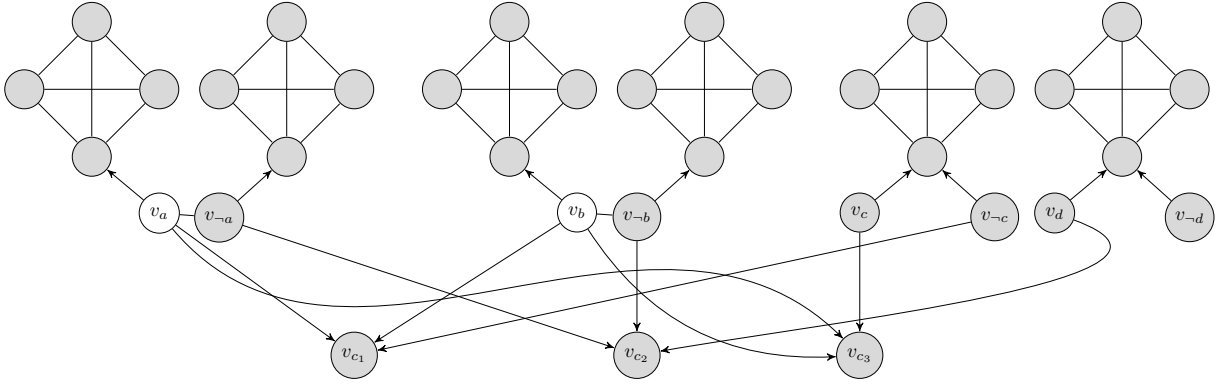
\begin{figure}[ht]
    \centering
    \resizebox{\columnwidth}{!}{
    \begin{tikzpicture}[->,>=stealth',shorten >=0.2pt,auto,node distance=1cm,baseline={(current bounding box.north)},  scale=0.8, transform shape,
  main node/.style={circle,fill=black!15,draw,minimum size=0.7cm},spe node/.style={fill=white,draw=white,minimum size=0.7cm}, X node/.style={fill=white,draw=black,circle,minimum size=0.7cm}]

  \node[main node] (1) {};
  \node[main node] (2) [below left=of 1] {};
  \node[main node] (4) [below right =of 2] {};
  \node[main node] (3) [below right=of 1] {};
  \node[X node] (5) [below =1.5cm of 3] {$v_a$};
  \node[main node] (6) at (3.5,0) {};
  \node[main node] (7) [below left=of 6] {};
  \node[main node] (8) [below right =of 7] {};
  \node[main node] (9) [below right=of 6] {};
  \node[main node] (10) [below =1.5cm of 7] {$v_{\neg a}$};
  \node[main node] (11) at (7.5,0) {};
  \node[main node] (12) [below left=of 11] {};
  \node[main node] (14) [below right =of 12] {};
  \node[main node] (13) [below right=of 11] {};
  \node[X node] (15) [below =1.5cm of 13] {$v_b$};
  \node[main node] (16) at (11,0) {};
  \node[main node] (17) [below left=of 16] {};
  \node[main node] (19) [below right =of 17] {};
  \node[main node] (18) [below right=of 16] {};
  \node[main node] (20) [below =1.5cm of 17] {$v_{\neg b}$};
  \node[main node] (21) at (15,0) {};
  \node[main node] (22) [below left=of 21] {};
  \node[main node] (24) [below right =of 22] {};
  \node[main node] (23) [below right=of 21] {};
  \node[main node] (25) [below =1.5cm of 22] {$v_c$};
  \node[main node] (31) [below =1.5cm of 23] {$v_{\neg c}$};
  \node[main node] (26) at (18.5,0) {};
  \node[main node] (27) [below left=of 26] {};
  \node[main node] (29) [below right =of 27] {};
  \node[main node] (28) [below right=of 26] {};
  \node[main node] (30) [below =1.5cm of 27] {$v_d$};
  \node[main node] (32) [below =1.5cm of 28] {$v_{\neg d}$};
  \node[main node] (33) [below =4cm of 9] {$v_{c_1}$};
  \node[main node] (34) [below =4cm of 17] {$v_{c_2}$};
  \node[main node] (35) [below =4cm of 22] {$v_{c_3}$};
  
  \path[every node/.style={font=\sffamily\small}]
    (1) edge[-] node {}  (2)
        edge[-] node {}  (3)
        edge[-] node {}  (4)
    (2) edge[-] node {}  (3)
        edge[-] node {}  (4)
    (3) edge[-] node {}  (4)
    (5) edge node {}  (4)
    (6) edge[-] node {}  (7)
        edge[-] node {}  (8)
        edge[-] node {}  (9)
    (7) edge[-] node {}  (8)
        edge[-] node {}  (9)
    (8) edge[-] node {}  (9)
    (10) edge node {}  (8)
    (5) edge[-] node {} (10)
    (11) edge[-] node {}  (12)
        edge[-] node {}  (13)
        edge[-] node {}  (14)
    (12) edge[-] node {}  (13)
        edge[-] node {}  (14)
    (13) edge[-] node {}  (14)
    (15) edge node {}  (14)
    (16) edge[-] node {}  (17)
        edge[-] node {}  (18)
        edge[-] node {}  (19)
    (17) edge[-] node {}  (18)
        edge[-] node {}  (19)
    (18) edge[-] node {}  (19)
    (20) edge node {}  (19)
    (15) edge[-] node {} (20)
    (25) edge node {} (24)
    (21) edge[-] node {}  (22)
        edge[-] node {}  (23)
        edge[-] node {}  (24)
    (22) edge[-] node {}  (23)
        edge[-] node {}  (24)
    (23) edge[-] node {}  (24)
    (31) edge node {}  (24)
    (26) edge[-] node {}  (27)
        edge[-] node {}  (28)
        edge[-] node {}  (29)
    (27) edge[-] node {}  (28)
        edge[-] node {}  (29)
    (28) edge[-] node {}  (29)
    (30) edge node {}  (29)
    (32) edge node {} (29)
    (5) edge node {} (33)
        edge[out=-50] node {} (35)
    (10) edge node {} (34)
    (15) edge node {} (33)
         edge[bend right] node {} (35)
    (20) edge node {} (34)
    (25) edge node {} (35)
    (31) edge node {} (33)
    (30) edge[out=-20,in=10] node {} (34);
\end{tikzpicture}}
    \caption{Example of construction of $G$ from the boolean expression in CNF with 3 literals in each clause $E = (a \lor b \lor \neg c) \land (\neg a \lor \neg b \lor d) \land (a \lor c \lor b) $. Here, $X = \{a, b \}$ and $Y = \{c,d\}$. Taking $D = \{v_a, v_b\}$, \ie, obliging both $a$ and $b$ to be \textit{False} makes it impossible to satisfy $E$.}
    \label{Fig3}
\end{figure}
\begin{cor}
{\sc MCN}$_{dir}$ is $\Sigma_2^p$-hard.
\end{cor}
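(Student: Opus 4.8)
The plan is to mimic exactly the argument already used to derive that {\sc MCN} is NP-hard from {\sc Vaccination-Attack} being NP-complete, only now starting from Theorem~\ref{th_VA_dir}. Concretely, I would exhibit a trivial polynomial-time reduction from {\sc Vaccination-Attack}$_{dir}$ to {\sc MCN}$_{dir}$: given an instance $(G=(V,A),\Omega,\Phi,K)$ of {\sc Vaccination-Attack}$_{dir}$, produce the {\sc MCN}$_{dir}$ instance with the same graph $G$, the same budgets $\Omega$ and $\Phi$, the same threshold $K$, and protection budget $\Lambda=0$.

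The key step is to observe that $\Lambda=0$ kills the protection stage entirely. Since we are in the unitary directed setting, every vertex has $c_v=1$, so the constraint $\sum_{v\in V}c_vx_v\le\Lambda=0$ forces $P=\emptyset$ (equivalently $x\equiv 0$) as the only feasible protection. Hence the innermost $\max$ in~\eqref{MCN_undir} (adapted to directed graphs) is over the single strategy $P=\emptyset$, and the trilevel objective collapses to $\max_{D}\min_{I} s(G,D,I,\emptyset)$, which is precisely the objective of {\sc Vaccination-Attack}$_{dir}$. Therefore the constructed {\sc MCN}$_{dir}$ instance is a \emph{Yes} instance if and only if the original {\sc Vaccination-Attack}$_{dir}$ instance is, and the reduction is clearly computable in polynomial (indeed linear) time.

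Combining this with Theorem~\ref{th_VA_dir}, which states that {\sc Vaccination-Attack}$_{dir}$ is $\Sigma_2^p$-complete and in particular $\Sigma_2^p$-hard, we conclude that {\sc MCN}$_{dir}$ is $\Sigma_2^p$-hard. I do not expect any genuine obstacle here: the only point worth stating explicitly is that setting a budget to zero is a legitimate move within the {\sc MCN}$_{dir}$ decision problem (the budgets are arbitrary non-negative integers), so that the restriction is an honest special case. This also matches the remark made earlier in the paper that a subgame's hardness need not transfer to the full game in general, but \emph{does} transfer here precisely because {\sc Vaccination-Attack} is itself the $\Lambda=0$ restriction of {\MCN}, not merely one of its levels. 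The same observation simultaneously yields entry $[15]$ of Table~\ref{tab:our_results}.
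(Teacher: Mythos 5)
Your proof is correct and is exactly the paper's argument: the paper derives entry $[15]$ from Theorem~\ref{th_VA_dir} by the same restriction $\Lambda=0$ that it spells out for the undirected corollary ({\sc MCN} is NP-hard from {\sc Vaccination-Attack}), noting that a zero protection budget with unitary costs forces $P=\emptyset$ and collapses the trilevel problem to {\sc Vaccination-Attack}$_{dir}$. Nothing is missing.
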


\section{{\sc Protection}: tractability limits}\label{sec:special}

In this section, we will concentrate on optimal protection strategies given $I$ (directly infected vertices). Without loss of generality, in what follows, we  restrict our attention to the induced graph obtained by considering only  non-saved  vertices when there is no protection (the remaining are already saved, even without no protected vertices).

The motivation to provide a closer look to the protection problem in the unitary cases (undirected and directed graphs) is based on the fact that their NP-hardness was established for split graphs, while for the weighted case it was proven even for trivial graphs. Such results do not clarify the problem complexity for trees, or even graphs of bounded treewidth, neither for directed acyclic graphs (DAGs), polytrees and arborescences. Frequently, NP-complete problems on graphs become polynomially solvable on such graph classes. In section~\ref{subsec:protect_trees}, we describe a dynamic programming approach for trees to determine the optimal protection solution in polynomial time. We also connect our problem complexity with the results in monadic second-order logic for tree-decomposable graphs~\cite{langer2014,Barnetson2020}. In section~\ref{subsec:protect_dag}, we describe the problem properties for DAGs, making it simple to show that {\sc Protection}$_{dir}$ is NP-complete. We terminate this section by showing that the optimal protection strategy can be determined in polynomial time for arborescences. 


\subsection{{\sc Protect} over trees}\label{subsec:protect_trees}

We next focus on {\sc Protect}; recall that it is the case of undirected graphs with unitary costs and benefits. Results for a special version of this problem where only one vertex is infected, aka the {\sc Firefighter Problem}, already exist in the literature. The recent work of \cite{Barnetson2020} establishes that for this special version of {\sc Protect}, the decision version of the problem can be solved in linear time over graphs of bounded treewidth, through the use of a reformulation in \emph{Extended Monadic Second Order} (EMSO) logic. We first extend this result to the case of an arbitrary number of infected vertices to show that {\sc Protect} is solvable in polynomial time over graphs of bounded treewidth. 

\begin{lem}\label{prop_ProtectTreewidth}
{\sc Protect} can be solved in polynomial time over graphs with constant bounded treewidth.
\end{lem}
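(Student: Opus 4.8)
The plan is to follow the \emph{Extended Monadic Second Order} (EMSO) route already used for the single--source case in \cite{Barnetson2020} and simply free it from the restriction $|I|=1$; nothing in that argument really depends on $I$ being a singleton. First I would encode the instance as a relational structure: the graph $G_a=(V_a,E_a)$ together with a unary predicate naming the vertices of $I$, so that we work over vertex--coloured graphs, for which the Courcelle--type metatheorems still apply. A protection strategy is then a free monadic (set) variable $P$ ranging over subsets of $V_a\setminus I$.

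Next I would write, in monadic second--order logic with $P$ as a parameter, a formula $\mathrm{inf}(v,P)$ expressing ``$v$ is infected once $P$ is protected'', i.e.\ ``in $G_a[V_a\setminus P]$ there is a path from $v$ to some vertex of $I$''. Reachability avoiding a set is standard MSO: $v$ is \emph{not} infected exactly when there is a set $S$ with $v\in S$, $S\cap P=\emptyset$, $S\cap I=\emptyset$, and $S$ closed under taking neighbours outside $P$ (for all $a\in S$ and all $b$ adjacent to $a$, either $b\in P$ or $b\in S$). Hence $W_P=\{v:\mathrm{inf}(v,P)\}$ is an MSO--definable set, and the objective $f(P)$ of {\sc Protect} is precisely $|W_P|$, a linear function of the cardinality of an MSO--definable set. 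Minimising $f(P)$ subject to the single cardinality bound $|P|\le\Lambda$ then falls under the optimisation form of Courcelle's theorem -- the LinEMSOL framework of Arnborg, Lagergren and Seese, and of Courcelle, Makowsky and Rotics -- which, given a tree decomposition of constant width $t$ (computable in linear time by Bodlaender's algorithm, or assumed to be part of the input), runs in time $g(t)\cdot\mathrm{poly}(|V_a|)$. This proves the lemma and, as the case $|I|=1$, recovers the bounded--treewidth tractability of {\sc Firebreak}/{\sc Firefighter}; it also matches the monadic second--order machinery of \cite{langer2014}.

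The one delicate point I would be careful about is that the plain ``black box'' statement of Courcelle's theorem handles a \emph{fixed} MSO sentence, whereas $|P|\le\Lambda$ is a numerical constraint, not an MSO property. Two ways around this: invoke the Arnborg--Lagergren--Seese / Courcelle--Makowsky--Rotics optimisation theorem, which explicitly allows optimising a linear form over the cardinalities of the free variables and of the definable evaluation sets, so the constraint $|P|\le\Lambda$ is admissible; or, equivalently and more self--containedly, carry out the standard bottom--up dynamic program on a width--$t$ tree decomposition, where a state records, for the current bag, which vertices belong to $P$, how the non--$P$ bag vertices are partitioned into connectivity classes in the part of $G_a$ processed so far, whether each such class has already met $I$, and how much of the budget has been spent; the table has $g(t)\cdot(\Lambda+1)$ entries per node, and the introduce/forget/join transitions are routine -- on forgetting a vertex one commits the size of a connectivity class to $f$ exactly when that class is flagged as having met $I$. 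The honest main obstacle is therefore the connectivity bookkeeping: the objective is global (which components contain a directly infected vertex) rather than local, but this is precisely what the partition--with--flag state captures, as is standard for connectivity problems parameterised by treewidth.
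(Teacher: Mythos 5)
Your proposal is correct and takes essentially the same route as the paper: both reformulate \textsc{Protect} on bounded-treewidth graphs as an extended/linear MSO optimisation over a free set variable and invoke the Arnborg--Lagergren--Seese metatheorem, the only difference being that the paper maximises $\vert X\vert+\vert P\vert$ over an MSO-definable set $X$ of unprotected saved vertices while you minimise the complementary infected set $W_P$ (and your self-contained partition-with-flag dynamic program is the same idea the paper carries out explicitly for trees in the following subsection). One small repair: as written, your formula classifies the vertices of $P$ themselves as infected, so ``$v$ is not infected'' should read ``$v\in P$ or there exists such an $S$''; otherwise $f(P)=\vert W_P\vert$ overcounts by $\vert P\vert$ and distorts the optimum.
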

\begin{proof}
The key factor is to reformulate our problem in terms of an MSO-formula $\varphi$ based on set variables, which captures the graph structure of the problem, and an evaluation relation $\psi$ over a set of integer variables, which captures the ``number'' aspect of the problem. In order to do so, we will define as in \cite{Barnetson2020} two sets $P$ and $X$ where $P$ is the set of protected vertices that separate the infected vertices of set $I$ from the saved vertices which are not protected $X$. Apart from the classic universal quantifier and logical connectives, we need to make use of a binary relation $adj(x,y)$ to assess the adjacency of two vertices $x$ and $y\in V$: $adj(x,y)$ true if $(x,y) \in E$, and false otherwise. The definitions of $\varphi$ and $\psi$ for {\sc Protect} are the following:
\begin{eqnarray*}
\varphi &=& (\forall v(v\in I\Rightarrow(v\notin P)\land(v\notin X)))\land(\forall x(x\in P)\Rightarrow(x\notin X))\land\\
&& (\forall x\forall y((x\in X)\land(adj(x,y))\land(y\notin P))\Rightarrow(y\in X)).
\end{eqnarray*}
\[ \psi = ( \vert P \vert \leq \Lambda)\land( \vert X\vert +\vert P \vert \geq K), \]
where the aim is to save at least $K$ vertices. 
Through the above expression for $\varphi$, it is established that the sets $P$ and $X$ have an empty intersection and that any neighbour of a vertex in $X$ must be either  in $P$ or $X$ itself, \ie, the set $P$ is a separator for the sets $I$ and $X$. Since the above definitions respect the limitations of MSO logic formulations, a theorem due to \cite{Arnborg1991} implies that the problem can be solved in $O(f(w)\cdot \vert V \vert)$ where $f$ is a function of the treewidth $w$ of the graph. Consequently, we can conclude that {\sc Protect} can be solved in polynomial time for graphs whose treewidth is bounded by a constant.
\end{proof}

Even though the above theorem is powerful from a theoretical perspective, it is of little practical use, as underlined in \cite{langer2014}. Indeed, the function $f(w)$ in the worst case complexity formula grows extremely fast with $w$ and the algorithm suffers from space problems in practical implementations. Therefore, \emph{Dynamic Programming} (DP) is often used to provide more efficient algorithms. In this section we propose a DP algorithm to solve the optimization problem associated with  {\sc Protect} on trees. 
We consider a recursion scheme that works with growing subtrees, starting from the leaves and climbing up to the root vertex, solving the optimization problem on each subtree recursively and merging them as needed at each step of the recursion. This recursion scheme bears similarities with the scheme proposed in \cite{DisGroLoc11cnp} for the pairwise CNDP over trees. In the following, we label a vertex as \emph{attacked} when it has been directly infected by the attacker while the term \emph{infected} is used both for directly attacked vertices and indirectly infected ones, and \emph{secondary infected} vertices is used only for vertices indirectly infected after the initial attack.

For further analysis, we denote by $\T_a$ the subtree of tree $\T$ rooted at vertex $a\in V$, 
and by $a_i$ with $i\in\{1,...,s\}$ the children of $a$.
We define as $\T_{a_{i\rightarrow s}}$ the subtree constituted
by $\{a\}\cup_{j=i,...,s}\T_{a_j}$. An example of a tree $\T$ rooted at vertex $a$ is depicted in
Figure~\ref{Fig:subtrees} where subtree $\T_{a_2}$ is represented by diamond shaped vertices while
subtree $\T_{a_{3\rightarrow 4}}$ is represented by round shaped vertices.
All recursions in our dynamic programming approaches are based on traversing the tree in postorder
({\ie} from the leaves to the root) and from the right part of each tree level to the left one. For example in Figure~\ref{Fig:subtrees}, once the recursive functions are computed and saved for subtrees $\T_{a_{3}}$ and $\T_{a_4}$, we will compute the recursion functions associated to $\T_{a_{3\rightarrow 4}}$ by merging the results for both subtrees in both situations when $a$ is vaccinated, infected, protected or neither of these possibilities. We consider that tree $\T$ is rooted at vertex $r$.

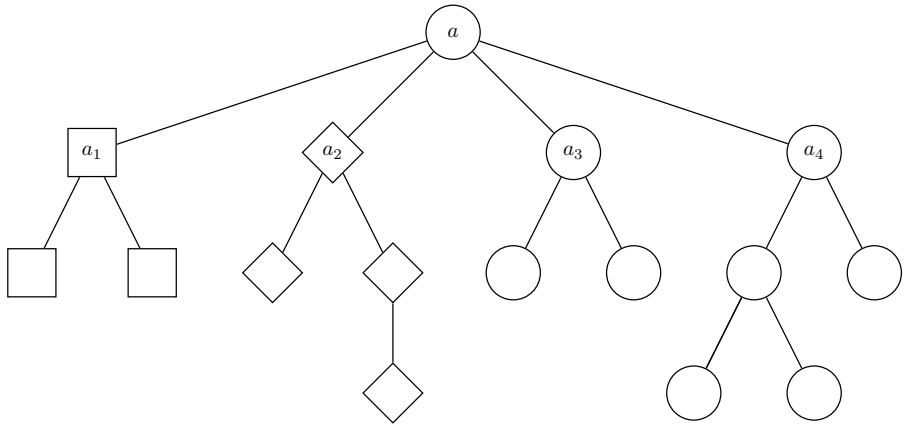
\begin{figure}[ht]
\centering
\scalebox{0.8}{
\begin{tikzpicture}[>=latex, semithick]

\node [circle, draw, semithick,  minimum size=0.9cm] (1) at (7,6) {$a$};
\node [draw, semithick,  minimum size=0.8cm] (2) at (1,4) {$a_1$};
\node [diamond, draw, semithick,  minimum size=1cm] (3) at (5,4) {$a_2$};
\node [circle, draw, semithick,  minimum size=0.9cm] (4) at (9,4) {$a_3$};
\node [circle, draw, semithick,  minimum size=0.9cm] (5) at (13,4) {$a_4$};
\node [draw, semithick,  minimum size=0.8cm] (6) at (0,2) {};
\node [draw, semithick,  minimum size=0.8cm] (7) at (2,2) {};
\node [diamond, draw, semithick,  minimum size=1cm] (8) at (4,2) {};
\node [diamond, draw, semithick,  minimum size=1cm] (9) at (6,2) {};
\node [diamond, draw, semithick,  minimum size=1cm] (10) at (6,0) {};
\node [circle, draw, semithick,  minimum size=0.9cm] (11) at (8,2) {};
\node [circle, draw, semithick,  minimum size=0.9cm] (12) at (10,2) {};
\node [circle, draw, semithick,  minimum size=0.9cm] (13) at (12,2) {};
\node [circle, draw, semithick,  minimum size=0.9cm] (14) at (14,2) {};
\node [circle, draw, semithick,  minimum size=0.9cm] (15) at (11,0) {};
\node [circle, draw, semithick,  minimum size=0.9cm] (16) at (13,0) {};

\draw [-, semithick](1) -- (2);
\draw [-, semithick](1) -- (3);
\draw [-, semithick](1) -- (4);
\draw [-, semithick](1) -- (5);
\draw [-, semithick](2) -- (6);
\draw [-, semithick](2) -- (7);
\draw [-, semithick](3) -- (8);
\draw [-, semithick](3) -- (9);
\draw [-, semithick](9) -- (10);
\draw [-, semithick](4) -- (11);
\draw [-, semithick](4) -- (12);
\draw [-, semithick](5) -- (13);
\draw [-, semithick](5) -- (14);
\draw [-, semithick](13) -- (15);
\draw [-, semithick](13) -- (15);
\draw [-, semithick](13) -- (16);

\end{tikzpicture}}
\caption{Example of a tree with subtree $\T_{a_2}$ represented by diamond shaped vertices and
subtree $\T_{a_{3\rightarrow 4}}$ represented by round shaped vertices.}
\label{Fig:subtrees} 
\end{figure}

We introduce the following recursion functions:

\medskip
\begin{parcolumns}[colwidths={1=.23\textwidth},distance=0em]{2}
\colchunk{%
 $F_a(c,m,\sigma)$ :=
}
\colchunk{%
maximum number of saved vertices in subtree $\T_a$ when $c$ vertices have been protected, $m$ unprotected vertices in $\T_a$ are linked to $a$ by an unprotected but non-infected path (including $a$ itself) and $\sigma=1$ if an attacked vertex in $\T_a$ is linked to $a$ by an unprotected path (including $a$) and $\sigma=0$ otherwise.
}
\end{parcolumns}
\begin{parcolumns}[colwidths={1=.23\textwidth},distance=0em]{2}
\colchunk{%
 $G_{a_i}(c,m,\sigma)$ :=
}
\colchunk{%
maximum number of saved vertices in subtree $\T_{a_{i\rightarrow s}}$ when $c$ vertices have been protected, $m$ unprotected vertices in $\T_{a_{i\rightarrow s}}$ are linked to $a$ by an unprotected but non-infected path (including $a$ itself) and $\sigma=1$ if an attacked vertex in $\T_{a_{i\rightarrow s}}$ is linked to $a$ by an unprotected path (including $a$) and $\sigma=0$ otherwise.
}
\end{parcolumns}
Using the previously described functions, we can define the following recursions. The initial conditions for each leaf vertex $a$ and rightmost subtree $T_{a_s}$ are as follows:
\begin{equation}
F_a(c,m,\sigma) =
\begin{cases}
0      & \text{if $(c=0\land m=0\land \sigma=1)$} \cr
1      & \text{if $(c=1\land m=0\land \sigma=0)\lor (c=0\land m=1\land \sigma=0)$} \cr
-\infty & \text{otherwise ({\ie}, infeasible configurations)} \cr
\end{cases};
\end{equation}
\begin{equation}\label{eq:Gas}
G_{a_s}(c,m,\sigma) =
\begin{cases}
\max\left\lbrace F_{a_s}(c-1,m^\prime,\sigma^\prime)+m^\prime(1-\sigma^\prime):\ m^\prime=0,\dots,|\T_{a_s}|;\right.\cr \left.\sigma^\prime=0,1\right\rbrace \cr\qquad\text{if $a$ is protected ($m=\sigma=0$)} \cr
\max\left\lbrace F_{a_s}(c,m^\prime,\sigma^\prime)+m^\prime(1-\sigma^\prime)(1-\sigma):\ m^\prime=0,\dots,|\T_{a_s}|;\right.\cr \left.\sigma^\prime=0,1\right\rbrace \cr\qquad\text{if $a$ is vaccinated ($m=\sigma=0$) or attacked ($m=0$, $\sigma=1$)} \cr
F_{a_s}(c,m-1,\sigma)\cr\qquad \text{if $a$ is neither vaccinated, protected or attacked}. \cr
\end{cases}
\end{equation}
In Eq.~\eqref{eq:Gas}, the first case deals with a protected $a$ vertex and all $m^\prime$ unprotected vertices below $a_s$ are saved if $a_s$ is not linked to an attacked vertex inside $T_{a_s}$ ($\sigma^\prime=0$). The second case deals with either a vaccinated or attacked $a$ vertex so that the budget $c$ needs not be updated going from $T_{a_s}$ to $T_{a_{s}}\cup\{a\}$. The last case deals with an unattacked and unprotected $a$ vertex and parameter $m$ is incremented as the subtree is enlarged by vertex $a$.

The following equations handle the general case, for vertices which are neither leaf vertices or the root of rightmost subtrees:
\begin{equation}
F_a(c,m,\sigma) = G_{a_1}(c,m,\sigma) \quad \text{for a non-leaf vertex $a\in V$.}   
\label{eq:rec1}
\end{equation}
For each non-leaf vertex $a\in V$ and $i<s$: if $a$ is attacked
\begin{dgroup}
\begin{dmath}
G_{a_i}(c,0,1) = \max\left\lbrace
	F_{a_i}(c^\prime,m^\prime,\sigma^\prime) + G_{a_{i+1}}(c-c^{\prime},0,1) 
        \hiderel{:}\\
	\phantom{=\,} c^\prime \hiderel{=}0,\dots,|\T_{a_i}|;\ m^\prime \hiderel{=}0,\dots,|\T_{a_i}|;\ \sigma^\prime\hiderel{=}0,1\right\rbrace,
\label{eq:rec2}
\end{dmath}
\begin{dsuspend}
if $a$ is protected (either from vaccination or protection)
\end{dsuspend}
\begin{dmath}
G_{a_i}(c,0,0) = \max\left\lbrace
	F_{a_i}(c^\prime,m^\prime,\sigma^\prime) + G_{a_{i+1}}(c-c^\prime,0,0) + m^\prime(1-\sigma^\prime)
        \hiderel{:}\\
	\phantom{=\,} c^\prime \hiderel{=}0,\dots,|\T_{a_i}|; m^\prime \hiderel{=}0,\dots,|\T_{a_i}|;\ 
	\sigma^\prime\hiderel{=}0,1\right\rbrace,
	\label{eq:rec3}
\end{dmath}
\begin{dsuspend}
otherwise, if $a$ is neither protected nor infected
\end{dsuspend}
\begin{dmath}
G_{a_i}(c,m,\sigma) = \max\left\lbrace
	F_{a_i}(c^\prime,m^\prime,\sigma^\prime) + G_{a_{i+1}}(c-c^\prime,m-m^\prime,\sigma^{\prime\prime}) + m^\prime(1-\sigma)\delta_{ar}
        \hiderel{:}\\
	\phantom{=\,} c^\prime \hiderel{=}0,\dots,|\T_{a_i}|; m^\prime \hiderel{=}0,\dots,|\T_{a_i}|;\\ 
	\phantom{=\,} \sigma^\prime,\sigma^{\prime\prime}\hiderel{=}0,1\hiderel{:}\,\sigma\hiderel{=}\max\{\sigma^\prime,\sigma^{\prime\prime}\}\right\rbrace.
	\label{eq:rec4}
\end{dmath}
\end{dgroup}

Equation~\eqref{eq:rec2} focuses on the case where vertex $a$ is infected. In this case, no additional vertex is saved as all vertices below $a$ which were not infected in $\T_a$ and with an unprotected path to $a$ will be infected themselves, therefore the total number of saved vertices is the sum of already saved vertices from the two merged subtrees. Equation~\eqref{eq:rec3} regards the case of a protected $a$ vertex, either through earlier vaccination or through protection. In this case, the vertices under $a_i$ who were unprotected, linked to $a_i$ by an unprotected path and who are not in contact with an infected vertex through an unprotected path are confirmed saved and added to the cost function, additionally to the already saved vertices from both subtrees. Finally, Equation~\eqref{eq:rec4} deals with the last case where $a$ is neither infected nor protected in any way. In this case, the cost of the objective function is updated by the number of unprotected vertices linked to $a_i$ by an unprotected path, but only in the case that $a$ is the root vertex $r$ ($\delta_{ar}=1$ if $a=r$ and 0 otherwise) and $a$ is not linked to an infected vertex through an unprotected path ($\sigma=0$). Otherwise, we cannot ensure that the unprotected vertices below $a$ will be saved in the optimal solution.

The optimal value for the problem is given by the quantity 
$$\max\left\{F_r(c,m,\sigma):\,c=0,\dots,\Lambda;\,m=0,\dots,n;\,\sigma=0,1\right\}$$ where $r$ is the root vertex of the tree, since it represents the maximum number of saved vertices for each protection budget and the solution can be recovered by backtracking.
Considering the proposed dynamic program, we can state the following proposition.

\begin{thm}\label{prop_ProtectTrees}
{\sc Protect} over trees admits a polynomial time algorithm with time complexity $O(n^5)$.
\end{thm}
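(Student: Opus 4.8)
The plan is to prove Theorem~\ref{prop_ProtectTrees} in two steps: first, establish by structural induction over the tree, processed in postorder exactly as described before Figure~\ref{Fig:subtrees}, that the functions $F_a$ and $G_{a_i}$ defined by the recursions~\eqref{eq:Gas}--\eqref{eq:rec4} truly compute the claimed quantities; second, bound the number of states and the work per state to derive the $O(n^5)$ running time. Throughout, I would invoke Property~\ref{prop_node_delet} so that vaccinated and protected vertices can be treated uniformly as deleted, and use the elementary fact that in a tree a vertex $v$ of $\T_a\setminus P$ is infected if and only if the unique path from $v$ to the (unattacked, unprotected) root of the current subtree avoids $P$ and meets an attacked vertex; this is what makes the parameters $(c,m,\sigma)$ a sufficient interface between a subtree and the rest of the tree.

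For correctness, the base cases are the leaf vertices and the rightmost subtrees $\T_{a_s}$: one checks directly that the three listed branches of~\eqref{eq:Gas} (respectively $a$ protected; $a$ vaccinated or attacked; $a$ neither) enumerate exactly the feasible configurations and correctly assign $-\infty$ to the rest. For the inductive step, assuming $F_{a_i}$ and $G_{a_{i+1}}$ are correct, I would show the merge~\eqref{eq:rec2}--\eqref{eq:rec4} is correct by splitting on the status of $a$: \emph{(i)} if $a$ is attacked, the infection reaches every vertex of $\T_{a_{i\to s}}$ joined to $a$ by an unprotected path, so no vertex beyond those already saved in the two subtrees can survive, which is precisely~\eqref{eq:rec2} with $\sigma$ forced to $1$; \emph{(ii)} if $a$ is protected or vaccinated, it severs $\T_{a_i}$ from the rest, so the $m'$ vertices linked to $a_i$ by an unprotected infection-free path become permanently saved, yielding the extra term $m'(1-\sigma')$ in~\eqref{eq:rec3}; \emph{(iii)} if $a$ is neither, those $m'$ vertices merge into the unprotected infection-free component of $a$ (hence are folded into $m=m'+(m-m')$) and may be declared saved only when $a$ is the global root $r$ and that component meets no attack, which is exactly the guarded term $m'(1-\sigma)\delta_{ar}$ in~\eqref{eq:rec4}. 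In every case $\sigma=\max\{\sigma',\sigma''\}$ because $a$ sees an attack along an unprotected path iff one of the two merged subtrees does; infeasibility propagates through the $\max$ as $-\infty$; and~\eqref{eq:rec1} is just the renaming $F_a=G_{a_1}$. Finally the optimum equals $\max\{F_r(c,m,\sigma):c\le\Lambda\}$ and a witnessing $P$ is obtained by backtracking stored arg-maxima.

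For the complexity, I would first observe that the number of merge steps of type~\eqref{eq:rec2}--\eqref{eq:rec4} over the whole tree is $O(n)$: a vertex with $s$ children contributes $s-1$ merges and one initialization~\eqref{eq:Gas}, and $\sum_a s=n-1$. In each table $F_\cdot$ or $G_\cdot$ the index $c$ ranges over $\{0,\dots,\Lambda\}$, $m$ over $\{0,\dots,n\}$ and $\sigma\in\{0,1\}$, so a table has $O(n^2)$ entries; evaluating one entry through~\eqref{eq:rec4} maximizes over $c',m'\in\{0,\dots,n\}$ and $\sigma',\sigma''\in\{0,1\}$, i.e.\ $O(n^2)$ candidates. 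Hence one merge costs $O(n^2\cdot n^2)=O(n^4)$, the $O(n)$ merges cost $O(n^5)$ altogether, the initializations~\eqref{eq:Gas} and leaf conditions are cheaper, the final maximization is $O(n^2)$, and backtracking is $O(n)$; this gives the stated $O(n^5)$ bound.

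The step I expect to be the main obstacle is the correctness of case~\emph{(iii)}: one must verify that a vertex below $a_i$ is never counted as saved prematurely — it may still be infected later through $a$ or through a sibling subtree closer to the root — and never double-counted once absorbed into the $m$-parameter, and that the semantics of $(m,\sigma)$ stays consistent across merges, in particular that the only legitimate ``saved-on-sight'' situation with $a$ unprotected is $a=r$. Checking that every infeasible combination of $(c,m,\sigma)$ indeed receives $-\infty$ and that such values never spuriously win a maximum is the other delicate point; the remaining cases, the base cases, and the counting argument are routine.
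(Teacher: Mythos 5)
Your proposal is correct and follows essentially the same route as the paper: you validate the same dynamic program by the same case analysis of its recursions and obtain $O(n^5)$ by the same count ($O(n)$ tables/merges over the tree, $O(n^2)$ states per table since $\Lambda\leq n$, and $O(n^2)$ candidates per state in the merge recursion). The only difference is presentational: the paper's proof of the theorem records just the counting argument and leaves the correctness of the recursions to the expository discussion preceding the statement, whereas you spell out the structural induction (including the delicate ``saved-on-sight only at the root'' point) explicitly.
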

\begin{proof}
The number of functions $F_a(\cdot)$ and $G_a(\cdot)$ to compute for each value of $c$, $m$ and $\sigma$ is bounded by $2n^2$. The recursion steps involved in Equation~\eqref{eq:rec4} are bounded by $2n^2$ operations at most. Considering all vertices $n$, the running time of the dynamic programming algorithm is thus bounded by $O(n^5)$.
\end{proof}

Since the lower level of the problem over trees is polynomial, the {\MCN} over trees cannot be $\Sigma_3^p$-hard. Following a classic trick for DP algorithm, see {\eg} \cite{DisGroLoc11cnp}, a similar algorithm can be devised when vertices have protection costs and unit benefits, which remains polynomial. 
When both types of weights are integer, the algorithms become pseudo-polynomial and the problem becomes weakly NP-hard.

\subsection{{\sc Protection}$_{dir}$ over directed acyclic graphs}\label{subsec:protect_dag}

We will show that an optimal protection strategy can be restricted to \emph{{\cand}} vertices.

\begin{definition}
In a directed graph $G=(V,A)$, a vertex $v \in V\setminus I$ that can be reached from a vertex of $I$ by a directed path and whose isolated protection results in a maximal set of saved vertices,  is called {\cand}.  Denote by $\mathcal{C}$ the set of {\cand} vertices.
\label{def:cand}
\end{definition}

In other words, a {\cand} vertex $v$ has no predecessor whose protection implies saving $v$. See Figure~\ref{fig:3graphs} for an illustration on popular graph sub-classes of DAGs. In the case of Figure~\ref{fig:cand}, $\mathcal{C}=\lbrace 1,2,3,9 \rbrace$; \eg, vertex 5 is not a {\cand}, since its protection saves vertices $\lbrace 6,7,8 \rbrace$, but this is also guaranteed by saving vertex 2 instead, resulting in the maximal set of saved vertices $\lbrace 2, 3,4,5,6,7,8\rbrace$. In Figure~\ref{fig:dag_b}, protecting vertex 1 suffices to save all the remaining non attacked vertices. Finally, in Figure~\ref{fig:arborescence}, the successors of the attacked vertices are exactly the set of candidates.

\begin{figure}[!ht]
\centering 
\begin{subfigure}{.3\textwidth}
\centering
\begin{tikzpicture}[->,>=stealth',shorten >=0.2pt,auto,node distance=1.5cm,baseline={(current bounding box.north)},  scale=0.75, transform shape,
  main node/.style={circle,fill=black!15,draw,minimum size=0.7cm}, inf node/.style={circle,fill=black,draw,minimum size=0.7cm},cand node/.style={circle,fill=black!15,draw, dashed,minimum size=0.7cm}]

  \node[inf node] (12) {12};
  \node[cand node] (1) [below of=12] {1};
  \node[inf node] (10) [right of=12] {10};
  \node[cand node] (3) [below of=10] {3};
  \node[cand node] (2) [below of=3] {2};
   \node[main node] (4) [below of=2] {4};
   \node[main node] (5) [below of=4] {5};
   \node[main node] (6) [below of=5] {6};
   \node[main node] (7) [below left= of 6] {7};
         \node[main node] (8) [below right = of 6] {8};
   \node[cand node] (9) [below of=8] {9};
      \node[inf node] (11) [left= of 9] {11};

  \path[every node/.style={font=\sffamily\small}]
    (12) edge node {}  (1)
    (10) edge node {}  (3)
    (1) edge node {}  (2)
    (3) edge node {}  (2)
    (2) edge node {}  (4)
    (4) edge node {}  (5)
    (5) edge node {}  (6)
    (6) edge node {}  (7)
    (6) edge node {}  (8)
    (8) edge node {}  (9)
    (11) edge node {}  (9);
\end{tikzpicture}
\caption{Graph induced by $V\setminus I$ is a polytree.}
\label{fig:cand}
\end{subfigure}%
\begin{subfigure}{.3\textwidth}
\centering
\begin{tikzpicture}[->,>=stealth',shorten >=0.2pt,auto,node distance=1.5cm,baseline={(current bounding box.north)},  scale=0.75, transform shape,
  main node/.style={circle,fill=black!15,draw,minimum size=0.7cm}, inf node/.style={circle,fill=black,draw,minimum size=0.7cm},cand node/.style={circle,fill=black!15,draw, dashed,minimum size=0.7cm}]

  \node[inf node] (0) {0};
  \node[cand node] (1) [below of=0] {1};
  \node[main node] (2) [below left= of 1] {2};
  \node[main node] (4) [below right= of 1] {4};
  \node[main node] (3) [below right= of 2] {3};
   \node[main node] (5) [below of=3] {5};
   
  \path[every node/.style={font=\sffamily\small}]
    (0) edge node {}  (1)
    (1) edge node {}  (2)
    (1) edge node {}  (4)
    (3) edge node {}  (5)
    (2) edge node {}  (3)
    (4) edge node {}  (3);
\end{tikzpicture}
\caption{Graph induced by\\ $V\setminus I$ is a DAG.}
\label{fig:dag_b}
\end{subfigure}%
\begin{subfigure}{.4\textwidth}
\centering
\begin{tikzpicture}[->,>=stealth',shorten >=0.2pt,auto,node distance=1.5cm,baseline={(current bounding box.north)},  scale=0.75, transform shape,
  main node/.style={circle,fill=black!15,draw,minimum size=0.7cm}, inf node/.style={circle,fill=black,draw,minimum size=0.7cm},cand node/.style={circle,fill=black!15,draw, dashed,minimum size=0.7cm}]

  \node[inf node] (0) {0};
  \node[cand node] (1) [below of=0] {1};
  \node[main node] (2) [below left= of 1] {2};
  \node[main node] (3) [below left= of 2] {3};
  \node[cand node] (4) [below right= of 2] {4};
   \node[main node] (5) [below of=4] {5};
   \node[main node] (6) [below of=1] {6};
   \node[main node] (7) [below right= of 1] {7};
   \node[main node] (8) [below of=7] {8};
   \node[inf node] (9) [right of=3] {9};
   
  \path[every node/.style={font=\sffamily\small}]
    (0) edge node {}  (1)
    (1) edge node {}  (2)
    (2) edge node {}  (3)
    (2) edge node {}  (4)
    (4) edge node {}  (5)
    (9) edge node {}  (4)
    (1) edge node {}  (6)
    (1) edge node {}  (7)
    (7) edge node {}  (8);
\end{tikzpicture}
\caption{Graph induced by $V\setminus I$ is an arborescence.}
\label{fig:arborescence}
\end{subfigure}%
\caption{The set $I$ is represented by black vertices and {\cand} vertices are dashed.}
\label{fig:3graphs}
\end{figure}
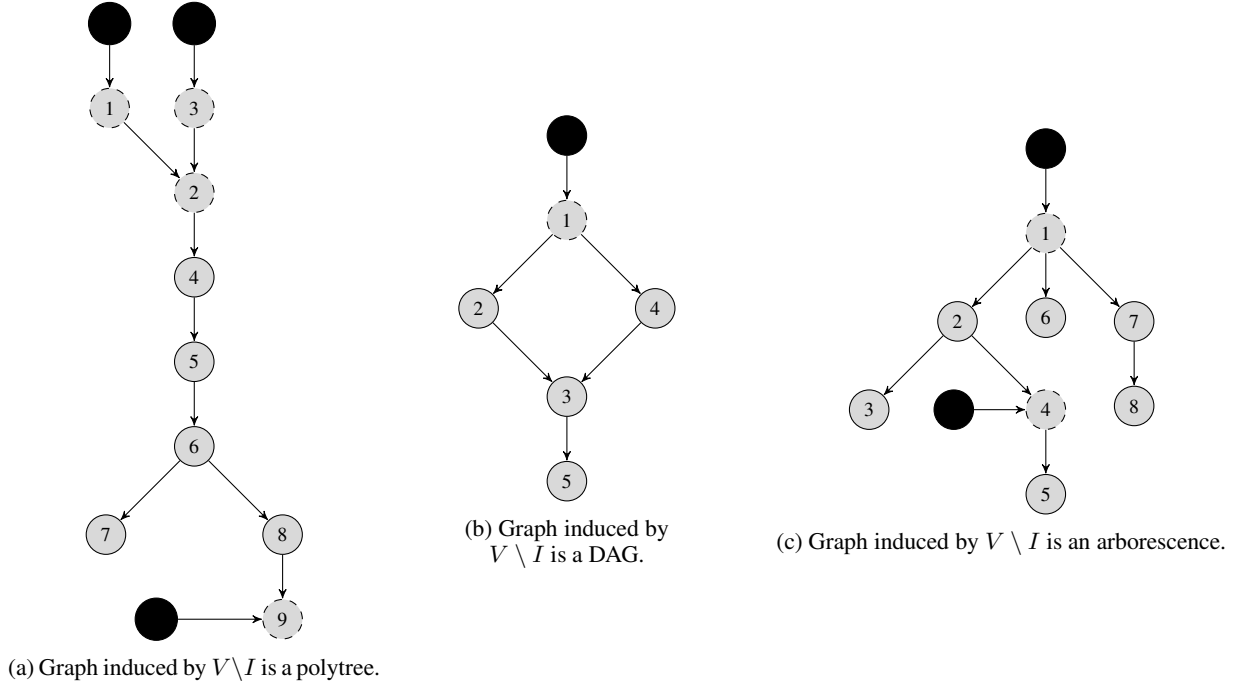

\begin{lem}
Let $G=(V,A)$ be a directed graph. Given $I$ and $\Lambda$, there is an optimal protection strategy $P \subseteq \mathcal{C}$.
\label{lem:cand}
\end{lem}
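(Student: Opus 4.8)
The plan is to take an arbitrary optimal protection strategy $P$ and transform it, element by element, into a strategy $P' \subseteq \mathcal{C}$ with at least the same objective value (number of saved vertices) and no greater cardinality. Since protecting a vertex amounts to deleting it (Property~\ref{prop_node_delet}), the set of saved vertices under $P$ is exactly the union of $P$ itself with the vertices not reachable by a directed path from $I$ in $G[V\setminus P]$.

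**Key steps.**

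First I would dispose of trivial members of $P$: any $v \in P$ that cannot be reached from $I$ by a directed path even in the original graph $G$ is already saved regardless, so removing it from $P$ does not decrease the number of saved vertices (it keeps $v$ saved and can only help other vertices by keeping $v$'s out-arcs in the graph); hence we may assume every $v \in P$ is reachable from $I$ in $G$. Second, take such a $v \in P$ that is \emph{not} a \cand{} vertex. By Definition~\ref{def:cand}, there is a predecessor $w$ of $v$ (on some directed path from $I$ to $v$) whose isolated protection saves a set $S_w$ that is maximal, and in particular $S_w \supseteq S_v$, where $S_v$ is the set saved by isolated protection of $v$; moreover $w$ can be chosen to itself be a \cand{} vertex by iterating the predecessor argument up the DAG-like reachability structure (the relation ``has a predecessor whose protection saves it'' is well-founded on the finite set of vertices reachable from $I$, since following predecessors strictly increases the saved set). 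I would then argue that replacing $v$ by such a $w$ in $P$ — formally $P'' = (P \setminus \{v\}) \cup \{w\}$ — does not decrease the number of saved vertices: cutting the flow of infection at $w$ instead of at $v$ blocks every infected path that $v$ blocked (since $w$ lies on a path from $I$ through to $v$, any path reaching $v$ first passes through $w$) and possibly more, so the downstream saved set only grows; the one subtlety is that $v$ itself was counted as saved under $P$, but $v \in S_w$, so $v$ remains saved when we protect $w$. Thus $|P''| \le |P|$ and the objective does not drop.

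**Iteration and termination.**

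Repeating this exchange, each step strictly increases $|P \cap \mathcal{C}|$ (or strictly decreases $|P|$ in the trivial-removal step), so after finitely many steps we reach $P' \subseteq \mathcal{C}$ with objective value at least that of the original optimal $P$, hence $P'$ is itself optimal. One should also check the budget: each exchange keeps $|P'| \le |P| \le \Lambda$, and the trivial removals only shrink $P$, so feasibility is preserved throughout.

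**Main obstacle.**

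The delicate point is making precise the claim ``protecting $w$ saves everything that protecting $v$ saved, plus $v$ itself,'' i.e.\ that the exchange is monotone in the full strategy $P$ and not merely for isolated protection. This requires checking that removing $w$ from $G[V \setminus (P\setminus\{v\})]$ disconnects from $I$ (at least) all vertices that removing $v$ disconnected — which follows because $w$ dominates $v$ on the reachability DAG from $I$ restricted to the relevant induced subgraph, a fact inherited from $w$ being a \cand{} predecessor of $v$. Handling possible cycles in $G$ (the lemma is stated for general directed graphs, not just DAGs) needs a word of care: one passes to the condensation / strongly connected components, or simply observes that reachability from $I$ is still a well-founded notion once we factor out SCCs, and a \cand{} vertex in an SCC argument behaves as above; I would phrase the predecessor-chain termination argument in terms of the strictly growing saved set $S_{(\cdot)}$ to sidestep cycles entirely.
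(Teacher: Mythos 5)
Your proposal is correct and follows essentially the same route as the paper: an exchange argument that replaces each protected non-candidate $v$ by a candidate whose isolated protection saves $v$ (and hence, by monotonicity of protection, everything $v$ was responsible for saving), without increasing the budget, iterated until $P\subseteq\mathcal{C}$. The only additions are your explicit disposal of protected vertices unreachable from $I$ and your termination argument for the predecessor chain, both of which the paper leaves implicit.
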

\begin{proof}
Let $P \subseteq V\setminus I$ be an optimal protection strategy such that exists $v \in P\setminus \mathcal{C}$. Then, by the definition of {\cand}, there is a vertex $u \in \mathcal{C}$ whose isolated protection implies saving $v$, as well as, all the vertices that $v$ alone was saving. Hence, a feasible protection strategy can be obtained by removing $v$ from $P$ and adding $u$ to $P$: note that either the used budget is maintained, if $u \notin P$, or decreased, if $u \in P$. Let this strategy be denoted by $\tilde{P}=\left(P-\lbrace v \rbrace \right) \cup \lbrace u \rbrace$.

By contradiction, suppose that $\tilde{P}$ is not optimal: there is some vertex $r$ that was saved in $P$ but not in $\tilde{P}$. In fact, we can conclude that under $P$, $r$ was saved due to $v$ being saved (protected) and possibly due to some other vertices in $P\setminus \lbrace v \rbrace \subseteq \tilde{P}$. However, under $\tilde{P}$, $v$ is also saved, as well, as the vertices in $P\setminus \lbrace v \rbrace $. Consequently, $r$ is saved in $\tilde{P}$, resulting in a contradiction.
\end{proof}

Furthermore, we can compute the \emph{value} of {\cand} vertices.

\begin{definition}
For each $v \in \mathcal{C}$, the value of $v$ is denoted by $p_v$ and it corresponds to the number of saved vertices if $v$ is the only protected vertex.
\label{def:value}
\end{definition}

In the example of Figure~\ref{fig:cand}, $p_1=1$, $p_2=6$, $p_3=1$ and $p_9=1$. However, note that this analysis does not make the problem trivial: in Figure~\ref{fig:cand}, if $\Lambda=2$, the optimal protection cannot be computed in a greedy way, \ie, protecting vertices 1 and 2 is not optimal; the only optimal solution is to protect vertices 1 and 3.

\begin{thm}
{\sc Protect}$_{dir}$ is NP-complete, even for directed acyclic graphs.
\label{th_protect_dir}
\end{thm}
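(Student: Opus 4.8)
First I would observe that for a directed graph $G_a$, an attack set $I$ and a candidate protection $P$, the set of infected vertices is exactly the set of vertices reachable from $I$ in $G_a[V_a\setminus P]$, which is computable in linear time by a DFS/BFS from $I$. Hence $P$ is a polynomial certificate: one checks $|P|\le\Lambda$ and that the number of reachable vertices is at most $K$. So {\sc Protect}$_{dir}$ is in NP, and since the reduction below produces a DAG, this also settles the ``even for directed acyclic graphs'' part of the membership claim; it remains to prove NP-hardness on DAGs.

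\textbf{The reduction.} The plan is to reduce from {\sc Clique} (given a graph $H=(V_H,E_H)$ and an integer $k$, does $H$ contain a clique on $k$ vertices?). Given such an instance I would build the following DAG $G_a=(V_a,A_a)$: a single attacked vertex $a$ (so $I=\{a\}$); a vertex $\hat v$ for every $v\in V_H$ together with an arc $a\to\hat v$; and a vertex $w_e$ for every edge $e=\{u,v\}\in E_H$ together with the two arcs $\hat u\to w_e$ and $\hat v\to w_e$. Ordering $a$ before all the $\hat v$'s before all the $w_e$'s is a topological order, so $G_a$ is acyclic. I then set $\Lambda=k$ and $K=|V_a|-k-\binom{k}{2}$, i.e. the {\sc Protect}$_{dir}$ question becomes ``can at least $k+\binom{k}{2}$ vertices be saved?''. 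The intended correspondence is: protecting $\hat v$ means selecting $v$ into the clique; protecting $w_e$ is wasteful. The crucial local fact is that the only in-neighbours of $w_e=\{u,v\}$ are $\hat u$ and $\hat v$, so $w_e$ is unreachable from $a$ in $G_a[V_a\setminus P]$ precisely when both $\hat u\in P$ and $\hat v\in P$; otherwise a surviving endpoint is reached directly from $a$ and then forwards $w_e$ the infection. (Equivalently, by Lemma~\ref{lem:cand} one may note that the vertices $\hat v$ are exactly the {\cand} vertices here, so an optimal $P$ lies among them; but the direct argument below needs nothing external.)

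\textbf{Correctness.} For the forward direction, if $C\subseteq V_H$ is a $k$-clique I would take $P=\{\hat v:\ v\in C\}$, which has size $k$; the saved vertices are the $k$ vertices $\hat v\in P$ together with the $\binom{k}{2}$ vertices $w_e$ with $e\subseteq C$, for a total of $k+\binom{k}{2}$ saved, so $G_a$ is a yes-instance. For the converse, let $P$ be feasible with at least $k+\binom{k}{2}$ saved; adding vertices to $P$ never decreases the number saved, so I may assume $|P|=k$. Write $p$ for the number of vertices of the form $\hat v$ in $P$, so $P$ contains $k-p$ vertices of the form $w_e$. The saved vertices are: the $p$ vertices $\hat v\in P$; the vertices $w_e$ with both endpoints protected, of which there are at most $\binom{p}{2}$; and the $k-p$ protected vertices $w_e$ (these three families may overlap, which only helps). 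Hence the number of saved vertices is at most $p+\binom{p}{2}+(k-p)=k+\binom{p}{2}\le k+\binom{k}{2}$, with equality forcing $p=k$ and $\{v:\hat v\in P\}$ to induce a complete subgraph of $H$ on $k$ vertices, i.e. a $k$-clique. Thus the constructed instance is a yes-instance iff $H$ has a $k$-clique. Since every cost and benefit is $1$ and $G_a$ is a DAG, this proves {\sc Protect}$_{dir}$ is NP-complete even on directed acyclic graphs.

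\textbf{Where the difficulty lies.} The routine parts are NP membership and the bookkeeping in the correctness proof; the only real design choice is selecting a source problem compatible with the vertex-cut semantics of protection. Because a vertex $w$ is saved iff \emph{every} $I$--$w$ path is cut, protecting a single in-neighbour of a high-degree vertex does essentially nothing, so the hardness cannot be made to come from a covering/``OR'' structure directly; steering the reduction into a densest-subgraph/{\sc Clique} ``AND over in-neighbours'' pattern (here, $w_e$ saved iff \emph{both} endpoints protected) is the key idea. A pleasant by-product worth recording is that hardness already holds with $|I|=1$, in sharp contrast with the polynomial-time algorithm for trees in Theorem~\ref{prop_ProtectTrees}.
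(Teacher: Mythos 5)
Your proof is correct, but it takes a genuinely different route from the paper. The paper establishes hardness of {\sc Protect}$_{dir}$ by reducing from {\sc CNP}$_{split}$ (invoking the NP-hardness of the Critical Node Problem on split graphs due to Addis \emph{et al.}): each clique vertex $v$ of the split graph is replaced by a two-vertex gadget $t_v^1\to t_v^2$, the independent set is copied, a single attacked vertex points to every $t_v^1$, and the target $K$ is obtained by inverting a binomial coefficient, so that minimizing the infected set amounts to minimizing the nontrivial component of the split graph. You instead reduce directly from {\sc Clique} via a depth-two vertex--edge incidence DAG, exploiting exactly the ``cut \emph{all} in-paths'' semantics of protection to encode ``edge $e$ is saved iff both endpoints are selected''; the counting bound $p+\binom{p}{2}+(k-p)\le k+\binom{k}{2}$ with its equality analysis then does all the work. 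Your argument is self-contained (it does not rely on the {\sc CNP}$_{split}$ hardness result as a black box), avoids the floor/binomial-inversion in the choice of $K$, and, like the paper's, yields hardness already for $|I|=1$ on a DAG of depth two. Two trivia: (i) you should dispose of the degenerate case $k+\binom{k}{2}>|V_a|$ (where your $K$ would be negative) by mapping to a fixed No-instance, and assume $k\le|V_H|$ so that $P$ can be padded to size exactly $k$; (ii) your parenthetical that the $\hat v$ are ``exactly'' the {\cand} vertices is not quite right under Definition~\ref{def:cand} --- each $w_e$ has no single predecessor whose protection saves it, so the $w_e$ are {\cand}s too --- but since you explicitly do not use this remark, it does not affect the proof.
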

\begin{proof}
The statement of {\sc Protect}$_{dir}$ is exactly the one of {\sc Protect} in Section~\ref{sec:undirected_uni}, except that the graph is directed. For sake of simplicity, we drop the subscript $a$ from $G_a$.

The problem is clearly in NP as given the protection $P$, the number of infected can be determined in polynomial time through a DFS.

Next, we reduce CNP$_{split}$ to  {\sc Protect}$_{dir}$, showing its NP-hardness. Given an instance of CNP$_{split}$, we build the following graph $G=(V,A)$:
\begin{itemize}
    \item For each $v \in \bar{V}_1$, we create the set of vertices $T_v = \lbrace t_v^1, t_v^2 \rbrace$ in $G$, and the arc $(t_v^1,t_v^2)$.
    \item For each $v \in \bar{V}_2$, we replicate it in $G$, and for each edge $(r,v) \in \bar{E}$ with $v \in \bar{V}_2$, the arc $(t_r^1,v)$ is added in $G$.
    \item Finally, we add the only attacked vertex $u$ to $G$ and connect it with each $t^1_v$ for $v \in \bar{V}_1$, through the arc $(u,t^1_v)$.
\end{itemize}
To complete the reduction it remains to set $\Lambda= B $ and $K=\lfloor 2+\sqrt{\vphantom{\prod^2}8\bar{K}+1} \rfloor$ (obtained by solving $\bar{K}=\binom{\frac{K-1}{2}}{2} $). See Figure~\ref{Fig_dag_pro} for an illustration of the reduction.

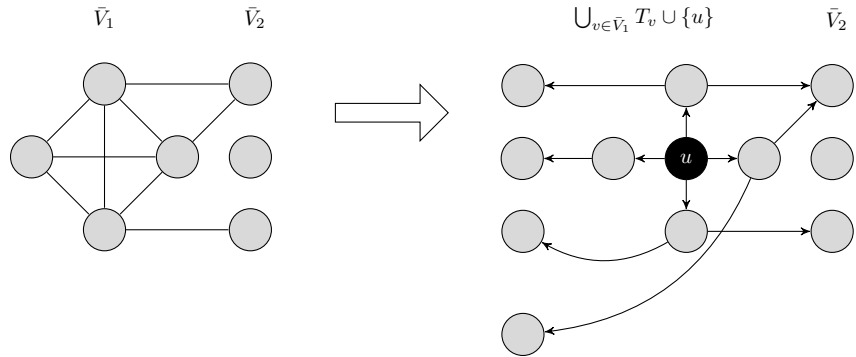
\begin{figure}[ht]
    \centering
    \begin{tabular}{ccc}
\begin{tikzpicture}[-,>=stealth',shorten >=0.2pt,auto,node distance=1cm,baseline={(current bounding box.north)},  scale=0.8, transform shape,
  main node/.style={circle,fill=black!15,draw,minimum size=0.7cm},spe node/.style={fill=white,draw=white}]

  \node[main node] (1) {};
  \node[main node] (2) [below left=of 1] {};
  \node[main node] (4) [below right =of 2] {};
  \node[main node] (3) [below right=of 1] {};
  \node[main node] (5) [above right=of 3] {};
  \node[main node] (6)  [below=0.5cm of 5] {};
  \node[main node] (7) [below right=of 3] {};
  \node[spe node] (8) at (0,1.1) {$\bar{V}_1$};
  \node[spe node] (9) at (2.5,1.1) {$\bar{V}_2$};

  \path[every node/.style={font=\sffamily\small}]
    (1) edge node {}  (2)
        edge node {}  (3)
        edge node {}  (4)
        edge node {}  (5)
    (2) edge node {}  (3)
        edge node {}  (4)
    (3) edge node {}  (4)
        edge node {}  (5)
    (4) edge node {}  (7);
\end{tikzpicture}
& \hspace{1cm}
  \begin{tikzpicture}
    \useasboundingbox (-2,0);
    \node[single arrow,draw=black,fill=white,minimum height=1.5cm,shape border rotate=0] at (-2,-1.5) {};
  \end{tikzpicture}
  \hspace{1cm}
&
\begin{tikzpicture}[->,>=stealth',shorten >=0.2pt,auto,node distance=1cm,baseline={(current bounding box.north)},  scale=0.8, transform shape,
  main node/.style={circle,fill=black!15,draw,minimum size=0.7cm},spe node/.style={fill=white,draw=white}, inf node/.style={fill=black,draw,circle,minimum size=0.7cm}]

  \node[main node] (1) {};
  \node[main node] (11) [left=2cm of 1] {};
  \node[main node] (2) [below left=of 1] {};
  \node[main node] (12) [left=0.8cm of 2] {};
  \node[main node] (4) [below right =of 2] {};
  \node[main node] (13) [left=2cm of 4] {};
  \node[main node] (3) [below right=of 1] {};
  \node[main node] (14) [below=of 13] {};
  \node[main node] (5) [above right=of 3] {};
  \node[main node] (6)  [below=0.5cm of 5] {};
  \node[main node] (7) [below right=of 3] {};
  \node[spe node] (8) at (-0.7,1.1) {$\bigcup_{v \in \bar{V}_1} T_v \cup \lbrace u \rbrace$};
  \node[spe node] (9) at (2.5,1.1) {$\bar{V}_2$};
   \node[inf node] (10) [below=0.5cm of 1] {\textcolor{white}{$u$}};

  \path[every node/.style={font=\sffamily\small}]
    (1) edge node {} (5)
        edge node {} (11)
    (2) edge node {} (12)
    (3) edge[bend left] node {} (14)
        edge node {} (5)
    (4) edge node {}  (7)
        edge[bend left] node {} (13)
    (10) edge node {} (1)
         edge node {} (2)
         edge node {} (3)
         edge node {} (4);
\end{tikzpicture}
\end{tabular}
    \caption{Example of construction of $G$ from $\bar{G}$.}
    \label{Fig_dag_pro}
\end{figure}

First, note that $\mathcal{C}$ of $G$ is $\lbrace t^1_v: v \in \bar{V}_1 \rbrace \cup \bar{V}_2$, where the vertices in the first set have value at least $2$, and the ones in the second have value 1. Hence, it is clear that the best protection strategy will prioritize the vertices $t^1_v$. In fact, we can argue than only those vertices can be in an optimal protection strategy. If $\Lambda=B \geq \bar{V}_1$, then the instance of CNP$_{split}$ is trivial. Therefore, we can assume $\Lambda=B < \bar{V}_1$ and thus, it holds  $P^* \subset \lbrace t^1_v: v \in \bar{V}_1 \rbrace.$ Consequently, choosing the optimal $P^*$ means to minimize the vertices in $T_v$, for $v \in \bar{V}_1$, and in $\bar{V}_2$ that are connected to $u$. By construction, those vertices connected with $u$ correspond to a connected component $\bar{C}_1$ in $\bar{G}$. Thus, $P^*$ minimizes the size of $\displaystyle \bigcup_{v \in \bar{C}_1} \lbrace t^2_v \rbrace \cup \lbrace u \rbrace \cup \bar{C}_1 $. The remaining of the proof follows an analogous reasoning to the proof of Theorem~\ref{th_protect_NP}.
\end{proof}

\subsubsection{Arborescence }

In this section we restrict the protection problem to the case where the graph induced by $V \setminus I$ is an arborescence.

\begin{definition}
A directed acyclic graph $G=(V,A)$ is an arborescence if its underlying undirected graph is a tree and there is a single vertex (root) that has a unique directed path from it to all other vertices.
\end{definition}

 In arborescence, the determination of $\mathcal{C}$ is straightforward. Since all vertices in $V\setminus I$ have  in-degree 1, either they are protected by their predecessor, and thus are not a {\cand}, or they are direct successors of vertices in $I$. Therefore, $\mathcal{C}$ is the set of all successors of vertices in $I$. For an illustration  see Figure~\ref{fig:arborescence}, where the vertices in $\mathcal{C}=\{1,4\}$ have in-degree 1. We can prove that in this case a greedy approach leads to optimality.

\begin{lem}
Given $G=(V,A)$, $I$ and $\Lambda$, if the graph induced by $V\setminus I$ is an arborescence, then an optimal protection can be determined in polynomial time, specifically, $O(\vert V \vert \log(|V|))$. Moreover, if the induced graph is a set of arborescences, the result also holds.
\label{lem:poly_arb}
\end{lem}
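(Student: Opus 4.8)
The plan is to turn the protection problem on an arborescence into an \emph{additive} selection problem over the candidate set $\mathcal{C}$, and then solve the latter by a trivial greedy rule. Throughout I use the standing assumption of this section that $V\setminus I$ contains only vertices reachable from $I$, so that \emph{every} vertex of $V\setminus I$ is infected when $P=\emptyset$. As already observed, when $G[V\setminus I]$ is an arborescence every vertex of $V\setminus I$ has in-degree $1$ inside $G[V\setminus I]$; the root $\rho$ of the arborescence belongs to $\mathcal{C}$ (having in-degree $0$ in $G[V\setminus I]$, its only route to being reachable from $I$ is an in-arc from $I$); and $\mathcal{C}$ is exactly the set of successors of vertices of $I$. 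By Lemma~\ref{lem:cand} it suffices to optimise over protections $P\subseteq\mathcal{C}$.

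The first and main step is a structural characterisation of the saved set. For $w\in V\setminus I$, let $\gamma(w)\in\mathcal{C}$ be the \emph{deepest} ancestor of $w$ (including $w$ itself) on the unique directed path from $\rho$ to $w$ that lies in $\mathcal{C}$; this is well defined because $\rho\in\mathcal{C}$. I would prove that, for every $P\subseteq\mathcal{C}$ and every $w\in V\setminus I$, the vertex $w$ is saved by $P$ if and only if $\gamma(w)\in P$. The argument traces the infection along the path $\rho=u_0\to u_1\to\cdots\to u_k=w$: writing $\gamma(w)=u_t$, maximality of $t$ gives that $u_{t+1},\dots,u_k$ are not in $\mathcal{C}$, hence have no in-arc from $I$ and are not in $P\subseteq\mathcal{C}$, and since each such $u_j$ has in-degree $1$ in $G[V\setminus I]$ its only in-arc is $u_{j-1}\to u_j$; therefore $w$ is saved iff $u_t$ is saved, and $u_t$ is saved iff $u_t\in P$ (if $u_t\notin P$ it is an unprotected successor of $I$, hence directly infected, and the infection then propagates straight down to $w$). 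The delicate point here — and the reason the lemma fails for general polytrees, witness Figure~\ref{fig:cand} with $\Lambda=2$ — is precisely that in-degree $1$ in $G[V\setminus I]$ forbids the synergy whereby two candidates jointly save a vertex that neither saves alone.

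From this characterisation, the saved set under $P$ equals $\bigcup_{v\in P}\gamma^{-1}(v)$, and the family $\{\gamma^{-1}(v)\}_{v\in\mathcal{C}}$ partitions $V\setminus I$. Hence $p_v=\lvert\gamma^{-1}(v)\rvert$ (consistent with Definition~\ref{def:value}) and the number of vertices of $V\setminus I$ saved by $P$ is exactly $\sum_{v\in P}p_v$. The protection problem therefore collapses to: choose $P\subseteq\mathcal{C}$ with $\lvert P\rvert\le\Lambda$ maximising $\sum_{v\in P}p_v$, which is solved optimally by taking the $\min\{\Lambda,\lvert\mathcal{C}\rvert\}$ candidates of largest value.

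It remains to bound the running time. A single traversal of the arborescence from $\rho$ computes $\mathcal{C}$ and all values $p_v$ simultaneously: carry the current value of $\gamma$ downwards (it becomes $w$ itself when $w\in\mathcal{C}$, and is inherited from the parent otherwise) and, at each vertex $w$, increment a counter indexed by $\gamma(w)$; this is linear in $\lvert V\rvert$. Selecting the $\Lambda$ largest $p_v$ then costs $O(\lvert V\rvert\log\lvert V\rvert)$, which dominates. Finally, if $G[V\setminus I]$ is a disjoint union of arborescences, I would run the construction on each component: each component root again lies in $\mathcal{C}$, the sets $\gamma^{-1}(v)$ still partition $V\setminus I$ across components, so the objective stays additive over the pooled candidate set and the very same global greedy rule, with the same complexity, is optimal.
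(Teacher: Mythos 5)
Your proposal is correct and follows essentially the same route as the paper: identify $\mathcal{C}$ as the successors of $I$, compute each candidate's value by one traversal, and greedily protect the $\Lambda$ most valuable candidates, with the same $O(|V|\log|V|)$ bound. Your $\gamma$-partition argument is a more explicit justification of the step the paper states tersely (that in an arborescence no protected candidate can save another candidate, so the saved count is additive over $P$), which makes the optimality of the greedy rule fully rigorous.
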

\begin{proof}
We start by showing that a greedy procedure runs in time $O(\vert V \vert \log(|V|))$.

As previously observed, for arborescences, the set of {\cand} vertices is easy to compute: it is the set of all successors of $I$. 

Next, the calculation of $p_v$ for each $v \in \mathcal{C}$ can be performed through a depth-first-search that records the saved vertices by {\cand s}. This requires $O(\vert V \vert)$ since the graph is an arborescence.

Finally, the $\Lambda$ {\cand} vertices of largest values are protected. This requires to order the vertices accordingly with $\lbrace p_v \rbrace_{v \in \mathcal{C}}$. Thus, the greedy method runs in $O(\vert V \vert \log(|V|))$.

Next, we show that the described method provides an optimal protection. Let $P$ be the obtained protection through the greedy method. The key idea to prove the optimality of $P$ is essentialy due to the fact that in an arborescence,  $\mathcal{C}$ is simply the set of all successors of $I$, otherwise, if we have a vertex of in-degree at least 2, we do not have an arborescence. Thus, the protection strategy $P$ cannot imply the protection of some {\cand} not in $P$. This shows the optimality of $P$.\end{proof}

Note that in trees (undirected graphs), it does not hold that $\mathcal{C}$ is the set of successors of the vertices in $I$. Hence, Lemma~\ref{lem:poly_arb} does not extend to the undirected case.

\begin{rem}
Note that in Lemmata~\ref{lem:cand} and ~\ref{lem:poly_arb}, we did not use the fact that $b_v=1$ $\forall v \in V$. Thus, it also holds when vertices' benefits are not unitary.
\end{rem}

\section*{Acknowledgements}

The authors  wish to thank the support of the Institut de valorisation des donn\'ees and Fonds de Recherche du Qu\'ebec through the FRQ–IVADO Research Chair in Data Science for Combinatorial Game Theory, and the Natural Sciences and Engineering Research Council of Canada through the discovery grant 2019-04557.

\bibliographystyle{plain}
\bibliography{Biblio}

\end{document}